\renewcommand{\email}[2][]{
\ifx\emails\@empty\relax\else{\g@addto@macro\emails{,\space}}\fi
\@ifnotempty{#1}{\g@addto@macro\emails{\textrm{(#1)}\space}}
\g@addto@macro\emails{#2}
}
\theoremstyle{plain}
\newtheorem{theorem}{Theorem}
\newtheorem{proposition}[theorem]{Proposition}
\newtheorem{corollary}[theorem]{Corollary}
\newtheorem{lemma}[theorem]{Lemma}
\theoremstyle{definition}
\newtheorem{remark}[theorem]{Remark}
\newtheorem{example}[theorem]{Example}
\newtheorem{definition}[theorem]{Definition}
\newcommand{\abs}[1]{\lvert#1\rvert}
\newcommand{\norm}[1]{\lVert#1\rVert}
\DeclareMathOperator{\rec}{\rm rec}
\DeclareMathOperator{\lin}{\rm lin}
\DeclareMathOperator{\band}{\rm band}
\DeclareMathOperator{\barr}{\rm bar}
\DeclareMathOperator{\cl}{\rm cl}
\DeclareMathOperator{\VaR}{\rm VaR}
\DeclareMathOperator{\ES}{\rm ES}
\newcommand{\E}{{\mathbb E}}
\newcommand{\N}{{\mathbb N}}
\newcommand{\probp}{{\mathbb P}}
\newcommand{\probq}{{\mathbb Q}}
\newcommand{\R}{{\mathbb R}}
\newcommand{\cA}{\mathcal A}
\newcommand{\cB}{\mathcal B}
\newcommand{\cC}{\mathcal C}
\newcommand{\cD}{\mathcal D}
\newcommand{\cF}{\mathcal F}
\newcommand{\cI}{\mathcal I}
\newcommand{\cL}{\mathcal L}
\newcommand{\cP}{\mathcal P}
\newcommand{\cS}{\mathcal S}
\newcommand{\cX}{\mathcal X}
\newcommand{\cZ}{\mathcal Z}
\def\one{\mathbb 1}
\title{Surplus-Invariant Risk Measures}
\author{Niushan Gao}
\address{Department of Mathematics, Ryerson University, Toronto, Canada}
\email{niushan@ryerson.ca}
\author{Cosimo Munari}
\address{Center for Finance and Insurance and Swiss Finance Institute\\
University of Zurich, Switzerland}
\email{cosimo.munari@bf.uzh.ch}
\keywords{surplus invariance, risk measures, acceptance sets, dual representations, extensions, canonical model spaces, robust model spaces}
\begin{document}

\parindent 0em \noindent

\begin{abstract}
This paper presents a systematic study of the notion of surplus invariance, which plays a natural and important role in the theory of risk measures and capital requirements. So far, this notion has been investigated in the setting of some special spaces of random variables. In this paper we develop a theory of surplus invariance in its natural framework, namely that of vector lattices. Besides providing a unifying perspective on the existing literature, we establish a variety of new results including dual representations and extensions of surplus-invariant risk measures and structural results for surplus-invariant acceptance sets. We illustrate the power of the lattice approach by specifying our results to model spaces with a dominating probability, including Orlicz spaces, as well as to robust model spaces without a dominating probability, where the standard topological techniques and exhaustion arguments cannot be applied.
%In particular, when specialized to topological vector lattices, our results hold without assuming order continuity, a property that is systematically assumed, implicitly or explicitly, in the existing literature.
%The greater level of generality makes our results applicable beyond the standard model spaces.
\end{abstract}

\maketitle

%%%%%%%%%%%%%%%%%%%%%%%%%%%%%%%%%%%%%

\section{Introduction}

%This paper presents a systematic study of the notion of {\em surplus invariance}, which has been recently investigated by several authors, sometimes under different names, in the context of risk measures and capital requirements. So far, the notion of surplus invariance has been studied in the setting of special spaces of random variables. A distinguishing aspect of our approach is that we highlight the lattice foundations of the concept of surplus invariance and develop a theory of surplus invariance by relying solely on lattice properties. This ensures applications to a wide spectrum of standard and less standard model spaces. Besides harmonizing the existing literature, we provide a variety of new results for surplus-invariant acceptance sets and risk measures. In particular, when specialized to topological vector lattices, our results hold without assuming order continuity, a property that is systematically used, implicitly or explicitly, in the existing literature on surplus invariance.
%In particular, our analysis of order closedness allows us to derive powerful dual characterizations of ``surplus-invariant'' risk functionals, which constitute an enriching contribution to the theory of surplus-invariant risk measures.
\subsection{Surplus-invariant acceptance sets}

To introduce the notion of surplus invariance in a financial context, take the perspective of an internal or external financial supervisor who has to design a test to assess whether the risk of a future financial position is acceptable or not. If we represent financial positions by means of the elements of a vector lattice $\cX$, the test can be naturally identified with a subset $\cA\subset\cX$: A position will be deemed acceptable if, and only if, it belongs to $\cA$. In the theory of risk measures one often refers to $\cA$ as an {\em acceptance set}; see Artzner et al.~(1999). The main objective of the theory is to formalize and study desirable properties of acceptance sets. While the assessment of desirability will clearly depend on the specific financial application, there is at least one universally desirable property, namely {\em monotonicity}. If we interpret, as we will always do in the sequel, a positive element of $\cX$ as a payoff
(profit or surplus) and a negative element of $\cX$ as a payout (loss or default), then we say that $\cA$ is monotone whenever
\[
X\in\cA, \ Y\geq X \ \implies \ Y\in\cA.
\]
This property formalizes the basic idea that every position whose profit-loss pattern is better than that of an acceptable position should also be acceptable. Another property that is often seen as desirable is {\em convexity}, which reflects the diversification principle according to which a ``portfolio'' of acceptable positions should remain acceptable; see F\"{o}llmer and Schied (2002) and Frittelli and Rosazza Gianin (2002).
%{\em conicity}. The latter property makes acceptability independent of the position's size.
%The combination of convexity and conicity is usually referred to as {\em coherence}.
%If $\cX$ is a space of random variables, then distributional properties such as {\em law invariance} or {\em stochastic-order monotonicity} may also come into play, see F\"{o}llmer and Schied (2011).

\smallskip

The property of {\em surplus invariance} stipulates that acceptability does not depend upon the (size of the) positive part of a financial position. Formally, we say that $\cA$ is surplus invariant whenever
\[
X\in\cA, \ \min(Y,0)=\min(X,0) \ \implies \ Y\in\cA.
\]
%Under monotonicity, surplus invariance becomes equivalent to
%\[
%X\in\cA, \ \min(Y,0)\geq\min(X,0) \ \implies \ Y\in\cA,
%\]
%which can be interpreted as a form of monotonicity at the level of the negative parts of financial positions. Recall that we interpret negative parts as payouts (in absolute terms). At this point, one should ask:
Whether surplus invariance is desirable or not will depend, once again, on the particular financial application. If the threshold to acceptability is defined by a tradeoff between risk and return, so that the benefits engendered by the profits are allowed to compensate for the costs imposed by the losses, then surplus invariance is clearly off the target. Acceptance sets based on risk-return tradeoffs are often encountered in pricing theory; see Cochrane and Sa\'{a}-Requejo (2000), Bernardo and Ledoit (2000), Carr et al.~(2001), and more recently Madan and Cherny (2010). However, if acceptability has to capture the downside risk of a financial position only, then surplus invariance is a natural and desirable property. As illustrated by the following examples, the exclusive focus on downside risk is meaningful if we are in a situation where the benefits stemming from the profits and the burdens caused by the losses are enjoyed, respectively suffered, by different stakeholders. Here, we denote by $\cX$ a suitable space of random variables on a given probability triple $(\Omega,\cF,\probp)$.

\smallskip

{\bf Capital requirements}. When designing capital adequacy tests for financial institutions with limited liability, regulators must take into account the following fundamental asymmetry. On the one side, if the company is solvent, then all the liabilities are paid off and the company will generally enjoy a surplus, which accrues to the shareholders (and can be distributed as dividends and/or reivested in future business). In this case, taking the perspective of a one-period model as usually done in capital adequacy, the size of the surplus should not matter to the liability holders while it is a crucial figure for the shareholders. On the other side, if the company is insolvent, then the shareholders may exercise their limited liability option and are under no obligation to finance the deficit. In this case, the shareholders will be in principle indifferent to the size of default, which is entirely suffered by the liability holders. Hence, surplus invariance seems to be aligned with the fundamental objective of solvency regulation, namely the protection of liability holders. A variety of results for surplus-invariant acceptance sets in a capital adequacy context can be found in Staum (2013), Koch-Medina et al.~(2015), Liu and Wang (2016), He and Peng (2017), Koch-Medina et al.~(2017), and Bignozzi et al.~(2018). In particular, acceptance sets based on Value at Risk
\[
\cA = \{X\in\cX \,; \ \VaR_\alpha(X)\leq0\}, \ \ \ \VaR_\alpha(X)=\inf\{m\in\R
\,; \ \probp(X+m<0)\leq\alpha\},
\]
are surplus invariant for every $\alpha\in(0,1)$ while those based on Expected Shortfall
\[
\cA = \{X\in\cX \,; \ \ES_\alpha(X)\leq0\}, \ \ \
\ES_\alpha(X)=\frac{1}{\alpha}\int_0^\alpha\VaR_\beta(X)d\beta,
\]
fail to be surplus invariant for every level $\alpha\in(0,1)$. We refer to Koch-Medina and Munari (2016) for a thorough discussion about the financial implications of the failure of surplus invariance for $\ES$-based acceptance sets.
%Acceptability based on $\VaR$ is equivalent to requiring solvency in at least $100(1-\alpha)\%$ of cases while acceptability based on $\ES$ is roughly equivalent to requiring solvency on average in the worst $100\alpha\%$ of cases. 

\smallskip

{\bf Margin requirements}. Consider a central exchange or a central counterparty (CCP) where transactions between multiple counterparties are netted. Each counterparty is required to hold a margin deposit in order to ensure sufficient liquidity to absorb unexpected losses. As argued in Cont et al.~(2013), from a CCP's perspective only downside risk matters and, therefore, the underlying capital adequacy test should be represented by a surplus-invariant acceptance set. The SPAN (Standard Portfolio ANalysis) methodology for the computation of margin requirements, which was developed by the Chicago Mercantile Exchange and is implemented by many other exchanges, is based on surplus-invariant acceptance sets of the form
\[
\cA = \{X\in\cX \,; \ \probp(\{X<0\}\cap E)=0\},
\]
where $E\in\cF$ is a suitable set of test scenarios. As proved in Koch-Medina et al.~(2017), this is essentially the only surplus-invariant acceptance set that is simultaneously convex and conic, or equivalently coherent in the terminology of Artzner et al.~(1999).
%in the family of coherent acceptance sets.
%In this case, acceptability is equivalent to requiring solvency in every scenario belonging to $E$. 

\smallskip

{\bf Hedging}. Consider a risk-averse agent who wants to hedge some future monetary exposure. In this context, as discussed in F\"{o}llmer and Schied (2002), the agent may want to define a notion of acceptability based on the shortfall of the hedged position. This naturally leads to focusing on surplus-invariant acceptance sets of the form
\[
\cA = \{X\in\cX \,; \ \E[\ell(-\min(X,0))]\leq\alpha\},
\]
where $\ell:\R\to\R$ is a suitable loss function and $\alpha\in\R$. In this case,
acceptability boils down to requiring that the expected ``disutility'' attached to the shortfall of the hedged position is
sufficiently small.

%\medskip
\subsection{Surplus-invariant risk functionals}

So far we have focused on surplus invariance from the perspective of acceptance sets. Once we turn our attention to risk functionals $\rho:\cX\to\overline{\R}$, there are two natural ways to introduce a notion of surplus invariance. On the one side, we say that $\rho$ is {\em surplus invariant} if for every $X\in\cX$
\[
\rho(X)=\rho(\min(X,0))
\]
In this case, the outcome of $\rho$ is assumed to depend exclusively on the downside of a position. This is a sensible definition if $\rho$ is used to rank positions according to their downside risk. The {\em loss-based risk measures} in Cont et al.~(2013) and the {\em shortfall risk measures} in Staum (2013) are examples of measures of downside risk.

\smallskip

On the other side, we say that $\rho$ is {\em surplus invariant subject to positivity} whenever
\[
\rho(X)>0 \ \implies \ \rho(X)=\rho(\min(X,0)).
\]
This weaker notion of surplus invariance, which was introduced in a slightly different guise in Koch-Medina et al.~(2015), is suitable when $\rho$ is employed as a rule to determine capital or margin requirements. In this case, the quantity $\rho(X)$ is interpreted as a capital requirement. The condition $\rho(X)>0$ means that $X$ is not adequately capitalized and requires the injection of risk capital. The above property tells us that, in this case, the amount of risk capital will depend on the downside of the position $X$
only. On the other side, the condition $\rho(X)\leq0$ means that $X$ is already sufficiently capitalized and, hence, $\rho(X)$ can be interpreted as a capital amount that can be returned to shareholders. It is intuitively clear that, in this case, the quantity $\rho(X)$ may (and typically will) depend also on the size of the surplus.
%This interpretation becomes fully convincing for the capital requirement rules introduced in Artzner et al.~(1999), for which $\rho(X)$ represents the ``distance'', measured in terms of a given reference asset, to a pre-specified acceptance set.

\smallskip

It is interesting to mention that surplus invariance was already listed among the defining properties of a coherent risk measure (!) in a Technical Report of Cornell University by Artzner et al.~(1996). The focus of that preprint was on solvency metrics and surplus invariance was argued to be a natural property in that context. In Artzner et al.~(1999) the axiom of surplus invariance was eventually replaced by {\em translation invariance}, according to which for all $X\in\cX$ and $m\in\R$
\[
\rho(X+mS) = \rho(X)-m,
\]
where $S\in\cX$ is a given payoff. Translation invariance allows to express $\rho$ as
\[
\rho(X) = \inf\{m\in\R \,; \ X+mS\in\{\rho\leq0\}\}
\]
for every $X\in\cX$ and thus yields a convenient operational interpretation: $\rho(X)$ can be interpreted as the minimal amount of capital that has to be raised and invested in the asset with payoff $S$ in order to ensure ``acceptability''. The reason to dismiss surplus invariance lies in its fundamental incompatibility with translation invariance. There is, however, no incompatibility between translation invariance and surplus invariance subject to positivity. This shows how to reconcile the original motivation for surplus invariance in a capital adequacy setting with the axioms of coherence.

\subsection{Structure of the paper and main results}
%We provide a comprehensive study of surplus invariance in the framework of vector lattices. This allows us to subsume in a unifying perspective the existing results on surplus invariance, for which a more elementary proof can be sometimes obtained. At the same time, being established in full generality, our new results apply to a vast spectrum of model spaces, which range from the standard spaces of random variables, equipped with a dominating probability, to model spaces with multiple priors or no prior, which constitute the natural working environment in the growing field of ``robust finance''. In particular, when specialized to topological vector lattices, our results hold without assuming order continuity, a property that is systematically used, implicitly or explicitly, in the existing literature on surplus invariance. This implies, e.g., that our results are valid, under a dominating probability, in the context of any Orlicz space.
%In this paper we focus on applications to the standard model spaces.
%The paper is structured as follows.

In Section 2 we introduce the notion of surplus invariance for subsets of a general vector lattice. The main result is the characterization of order closedness recorded in Theorem~\ref{closed-top}, which constitutes the key to deriving a number of powerful dual representations for surplus-invariant risk measures. In addition, we establish a general decomposition result for surplus-invariant acceptance sets in Theorem~\ref{decom} and use it to derive the existence of strictly-positive supporting functionals in Theorem~\ref{theo: strictly positive supporting functional}. In Section 3 we take up the study of surplus-invariant functionals. After highlighting the link with surplus-invariant sets, we apply our results to establish dual characterizations of the Fatou property for quasiconvex surplus-invariant functionals in Theorems~\ref{s-i functionals} and~\ref{theo: dual representation S additive}. Moreover, we discuss extension results for surplus-invariant functionals in Theorems~\ref{theo: general extension} and~\ref{theo: extension si subject pos}.

\smallskip

We devote Section 4 to applications. We focus on model spaces with and without a dominating probability. In the latter case we adopt the setting of Maggis et al.~(2018). The generality of our lattice approach allows us to deal with the nondominated setting essentially with no extra effort.

\smallskip

{\bf Fatou property and duality}. It follows from the far-reaching result by Delbaen (2002) that every convex functional that satisfies the Fatou property on $L^\infty(\probp)$ is automatically $\sigma(L^\infty(\probp),L^1(\probp))$ lower semicontinuous and, hence, admits a ``nice'' dual representation where the dual elements belong to $L^1(\probp)$ and can be therefore identified with countably-additive signed measures. As recently shown in Gao et al.~(2016), the Fatou property will not generally ensure a ``nice'' dual representation on a general Orlicz space $L^\Phi(\probp)$. A robust version of Delbaen's result has been recently obtained in Maggis et al.~(2018) in a bounded setting.
%, i.e.\ a dual representation where the dual domain is a subspace of $L^1(\probp)$, if we move from $L^\infty(\probp)$ to a general Orlicz space $L^\Phi(\probp)$. We also refer to Delbaen and Owari (2016) and, for a comprehensive picture, to Gao et al.~(2016). 

\smallskip

In the context of surplus-invariant functionals, we provide a variety of dual characterizations of the Fatou property that refine and extend the results by Delbaen (2002) and Maggis et al.~(2018) beyond the bounded setting. In Theorem \ref{theo: fatou in L infty} we start by showing that surplus invariance on $L^\infty(\probp)$ allows to restrict the set of dual elements from $L^1(\probp)$ to the smaller space $L^\infty(\probp)$. This result leads to an improved version of the dual representation in Cont et al.~(2013). In Theorem \ref{theo: fatou orlicz spaces} we prove that, under the assumption of surplus invariance, the Fatou property on a general Orlicz space $L^\Phi(\probp)$ {\em
always} implies ``nice'' dual representations where the dual elements can be taken to belong to $L^{\Phi^\ast}(\probp)$ or $H^{\Phi^\ast}(\probp)$, where $\Phi^\ast$ denotes the conjugate of $\Phi$, or even $L^\infty(\probp)$. 

\smallskip

Finally, we focus on the robust case and extend the result in Maggis et al.~(2018) in two ways. First, we consider robust model spaces beyond the bounded case. Second, we show that, under surplus invariance (subject to positivity), the Fatou property yields a dual representation where the dual elements can be identified with countably-additive signed measures that have bounded Radon-Nikodym derivatives and that are directly associated with the underlying family of probability measures. 

\smallskip

Both in the dominated and in the nondominated setting, we actually show that the Fatou property
is equivalent to the (much) stronger property of lower semicontinuity with respect to unbounded-order convergence (which corresponds to almost-sure convergence in the dominated case).

\smallskip

{\bf Extension results}. In Theorem \ref{theo: robust extensions} we show that every (quasi)convex monotone functional with the Fatou property defined on $L^\infty(\probp)$ can be uniquely extended, under surplus invariance (subject to positivity), to the entire space $L^0(\probp)$ without losing its properties. In addition, the extension to $L^1(\probp)$ is also unique and can be characterized in dual terms. This shows that the extension problem formulated in Delbaen (2002) can be always tackled under surplus invariance (subject to positivity). The extension result holds even if there exists no dominating probability.

\smallskip

{\bf Decomposition results}. A variety of decomposition results for surplus-invariant acceptance sets have been obtained in Koch-Medina et al.~(2017) in the context of subspaces of $L^0(\probp)$ containing $L^\infty(\probp)$ in a dense way. In Theorem \ref{theo: robust decomposition} we extend the decomposition to every space of random variables in a dominated setting. In fact, using our general results, we are able to extend the decomposition to robust model spaces with no extra effort.

\smallskip

{\bf Bipolar theorems}. As a final application, we exploit the link between surplus-invariant sets and solid subsets of the positive cone to provide a new lattice-based proof of the bipolar theorem by Brannath and Schachermayer (1999) in Theorem~\ref{theo: bipolar} and to extend it to the robust setting in Theorem \ref{theo: robust bipolar}. It is worth remarking that the classical proof is based on an exhaustion argument that cannot be applied in the robust setting due to the failure of the countable sup property.
%This highlights the power of the lattice approach.

\smallskip

A final appendix collects standard terminology and notation related to vector lattices.

%%%%%%%%%%%%%%%%%%%%%%%%%%%%%%%%%%%%%

\section{Surplus-invariant monotone sets}

We start by introducing the notion of surplus invariance and establishing a variety of basic properties of
surplus-invariant monotone sets. In a convex setting, we discuss the link between order and topological closedness, which will be
later exploited to derive a variety of dual representations for risk functionals, and establish a decomposition result for
surplus-invariant monotone sets. Throughout the section we assume that $\cX$ is a fixed vector lattice. We refer to the appendix for
an overview of standard terminology and notation related to vector lattice.

\begin{definition}
A subset $\cA\subset\cX$ is {\em surplus invariant} if for every $X,Y\in\cX$ we have
\[
X\in\cA, \ X^-=Y^- \ \implies \ Y\in\cA.
\]

\smallskip

We say that $\cA$ is {\em monotone} if for every $X,Y\in\cX$ we have
\[
X\in\cA, \ Y\geq X \ \implies \ Y\in\cA.
\]
\end{definition}

\smallskip

The property of surplus invariance was introduced in a mathematical finance context by Staum~(2013) under the name of {\em excess
invariance} and was later studied by Koch-Medina et al.~(2015) and Koch-Medina et al.~(2017). The property of monotonicity is standard
in the theory of acceptance sets and risk measures, see F\"{o}llmer and Schied~(2017).

\smallskip

The following proposition collects a variety of equivalent conditions for a (monotone) set to be surplus invariant. In particular, every surplus-invariant monotone set can be decomposed into the difference between the positive cone and a solid subset of the positive cone. This highlights the intimate link between surplus invariance and solidity.

\begin{proposition}
\label{s-i}
For a set $\cA\subset\cX$ the following statements are equivalent:
\begin{enumerate}
	\item[(a)] $\cA$ is surplus invariant.
	\item[(b)] $\cA=\{X\in\cX \,; \ X^-\in\cD\}$ for some $\cD\subset\cX_+$.
\end{enumerate}
If $\cA$ is monotone, the above are also equivalent to:
\begin{enumerate}
  \item[(c)] $X\in\cA$ implies $-X^-\in\cA$.
  \item[(d)] $X\in\cA$ and $Y^-\leq X^-$ imply $Y\in\cA$.
  \item[(e)] $\cA=\cX_+-\cD$ for some $\cD\subset\cX_+$ that is solid in $\cX_+$.
\end{enumerate}
In both (b) and (e) we have $\cD=-\cA_-$.
\end{proposition}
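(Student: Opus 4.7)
My plan is to prove (a)$\Leftrightarrow$(b) directly, then to cycle (a)$\Rightarrow$(c)$\Rightarrow$(d)$\Rightarrow$(a) under monotonicity, then to close with (a)$\Rightarrow$(e)$\Rightarrow$(a). For the equivalence of (a) and (b), the forward direction is obtained by taking $\cD=\{X^-\mid X\in\cA\}$; any $X$ with $X^-\in\cD$ then shares its negative part with some member of $\cA$ and hence lies in $\cA$ by surplus invariance. The converse is immediate from the defining formula for $\cA$. To verify the last sentence for (b), I would observe that if $X\in\cA_-$ then $X\leq0$, so $-X=X^-\in\cD$; conversely if $D\in\cD$ then the element $-D$ lies in $\cA$ (its negative part is $D$) and is nonpositive, so $D\in-\cA_-$.

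Assume now that $\cA$ is monotone. For (a)$\Rightarrow$(c), the element $Y=-X^-$ satisfies $Y^-=X^-$, so surplus invariance applied to $X\in\cA$ yields $-X^-\in\cA$. For (c)$\Rightarrow$(d), start from $X\in\cA$ with $Y^-\leq X^-$. By (c) we have $-X^-\in\cA$, and since $-Y^-\geq-X^-$, monotonicity gives $-Y^-\in\cA$; since $Y\geq-Y^-$, a second application of monotonicity yields $Y\in\cA$. The implication (d)$\Rightarrow$(a) is trivial, since $Y^-=X^-$ entails $Y^-\leq X^-$.

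For (a)+monotonicity$\Rightarrow$(e) I would set $\cD=-\cA_-\subset\cX_+$. Solidity of $\cD$ in $\cX_+$ is precisely condition (d): if $D\in\cD$ and $0\leq D'\leq D$, apply (d) to $X:=-D\in\cA$ and $Y:=-D'$, which satisfies $Y^-=D'\leq D=X^-$, to conclude $-D'\in\cA_-$, i.e.\ $D'\in\cD$. For $\cA\subset\cX_+-\cD$, decompose $X=X^+-X^-$ and use (c) to place $X^-\in\cD$. The reverse inclusion $\cX_+-\cD\subset\cA$ follows because any $Z-D$ with $Z\geq0$ and $D\in\cD$ dominates $-D\in\cA$, so monotonicity applies. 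Finally, for (e)$\Rightarrow$(a), if $X=Z-D\in\cX_+-\cD$, then $X^-=(D-Z)^+\leq D$, so solidity forces $X^-\in\cD$, which recovers (b) with the same $\cD$ and in particular yields (a). The identification $\cD=-\cA_-$ under (e) follows from the same two-line argument used in the (b) case, together with the solidity-based inclusion $-\cA_-\subset\cD$ just noted.

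The only step that requires more than direct unpacking of definitions is the verification of solidity in (e); everything else reduces to the identity $Y^-=X^-$ combined with one or two applications of monotonicity, so I expect no substantial obstacle.
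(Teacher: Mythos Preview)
Your proposal is correct and follows essentially the same route as the paper. The only cosmetic difference is that for (a)$\Rightarrow$(b) you first set $\cD=\{X^-\mid X\in\cA\}$ and then identify it with $-\cA_-$, whereas the paper works with $\cD=-\cA_-$ from the outset; the remaining implications and the solidity argument in (e) match the paper's proof step for step.
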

\begin{proof}
It is clear that {\em (b)} implies {\em (a)}. To prove the converse implication, assume that {\em (a)} holds and note that $X$ and
$-X^-$ have the same negative parts for each $X\in\cX$. Thus, it follows from surplus invariance that $X\in\cA$ if and only if $-X^-\in\cA_-$, or equivalently $X^-\in-\cA_-$. This shows that {\em (b)} holds and $\cD=-\cA_-$.

\smallskip

Now, assume that $\cA$ is monotone. Again, since the negative parts of $X$ and $-X^-$ coincide for every $X\in\cX$, it is clear that {\em
(a)} implies {\em (c)}. Next, assume that {\em (c)} holds and
take $X\in\cA$ and $Y\in\cX$ such that $Y^-\leq X^-$. Since $-X^-\in\cA$ by
assumption and $Y\geq-Y^-\geq-X^-$, we must have $Y\in\cA$ by
monotonicity. This shows that {\em (d)} holds. Clearly, {\em (d)} implies {\em (a)}.

\smallskip

Now, assume that {\em (c)} holds and set $\cD=-\cA_-$. By monotonicity, we clearly have $\cX_+-\cD\subset\cA$. On the other hand, for
every $X\in\cA$, we have $X^-\in\cD$ by assumption and thus it follows from $X=X^+-X^-$ that $\cA\subset\cX_+-\cD$. Therefore,
$\cA=\cX_+-\cD$. Now, take $0\leq Y\leq X\in
\cD$. Since $-Y\geq-X\in\cA_-$, we infer that $-Y\in \cA_-$ by monotonicity and thus $Y\in\cD$. This proves that $\cD$ is solid in
$\cX_+$ and establishes {\em (e)}. Conversely, assume that {\em (e)} holds and take an arbitrary $X\in\cA$. By assumption we find
$Y\in\cX_+$ and $Z\in\cD$ such that $X=Y-Z$. Then, we easily see that
\[
0 \leq X^- = \max(-X,0) = \max(Z-Y,0) \leq Z \in \cD,
\]
which implies $X^-\in\cD\subset-\cA_-$ by solidity of $\cD$. This shows that $-X^-\in\cA$ and proves that {\em (c)} holds. In fact, this also shows that for every $X\in\cA_-$ we have $-X=X^-\in\cD$ and thus $-\cA_-\subset\cD$, showing that {\em (e)} can only hold if $\cD=-\cA_-$.
\end{proof}

\smallskip

As a direct consequence of the preceding result, one can show that the convexity and conicity of a surplus-invariant monotone set can
be characterized at the level of its negative part. Recall that a set $\cA\subset\cX$ is said to be {\em convex} whenever $\lambda
X+(1-\lambda)Y\in\cA$ for every $\lambda\in[0,1]$ and $X,Y\in\cA$. We say that $\cA$ is a {\em cone} if $\lambda X\in\cA$ whenever
$\lambda\geq0$ and $X\in\cA$.

\begin{corollary}
\label{cor: convexity}
For a surplus-invariant monotone set $\cA\subset\cX$ we have:
\begin{enumerate}
  \item[(a)] $\cA$ is convex iff $\cA_-$ is convex.
  \item[(b)] $\cA$ is a cone iff $\cA_-$ is a cone.
\end{enumerate}
\end{corollary}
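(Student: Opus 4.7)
The plan is to exploit the decomposition $\cA = \cX_+ - \cD$ from Proposition~\ref{s-i}(e), where $\cD = -\cA_-$ is solid in $\cX_+$, and move the property of convexity (respectively, conicity) back and forth between $\cA$, $\cA_-$, and $\cD$. Since $\cD = -\cA_-$ is the image of $\cA_-$ under the linear bijection $X\mapsto -X$, it is immediate that $\cA_-$ is convex (respectively, a cone) if and only if $\cD$ is so. This reduces both statements to showing that $\cA$ is convex (respectively, a cone) if and only if $\cD$ is.

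For the forward directions, both implications are transparent: if $\cA$ is convex, then for $X,Y\in\cA_-$ and $\lambda\in[0,1]$, the element $\lambda X+(1-\lambda)Y$ lies in $\cA$ by convexity and is nonpositive by nonpositivity of $X$ and $Y$, so it belongs to $\cA_-$; if $\cA$ is a cone, the same argument with $\lambda\geq0$ gives $\lambda X\in\cA_-$.

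For the converse directions, I would use the decomposition $\cA=\cX_+-\cD$ together with the fact that $\cX_+$ is itself both convex and a cone. Given $X,Y\in\cA$, I write $X=X^+-X^-$ and $Y=Y^+-Y^-$ with $X^+,Y^+\in\cX_+$ and $X^-,Y^-\in\cD$ (by Proposition~\ref{s-i}(b)). If $\cD$ is convex, then $\lambda X+(1-\lambda)Y=\bigl(\lambda X^++(1-\lambda)Y^+\bigr)-\bigl(\lambda X^-+(1-\lambda)Y^-\bigr)$ lies in $\cX_+-\cD=\cA$; if $\cD$ is a cone, then for $\lambda\geq0$ the decomposition $\lambda X=\lambda X^+-\lambda X^-$ again exhibits $\lambda X$ as an element of $\cX_+-\cD=\cA$.

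There is no real obstacle: everything is driven by the structural decomposition of Proposition~\ref{s-i}. The only point worth noting is that one does not need uniqueness of the decomposition $X=U-V$ with $U\in\cX_+$ and $V\in\cD$; the canonical choice $U=X^+$, $V=X^-$ is enough, which is why solidity of $\cD$ (needed to guarantee $X^-\in\cD$ whenever $X\in\cA$) plays a silent but essential role in the argument.
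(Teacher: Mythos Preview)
Your proof is correct and follows essentially the same route as the paper: both directions rest on the identity $\cA=\cX_++\cA_-$ (equivalently $\cA=\cX_+-\cD$) from Proposition~\ref{s-i}, together with the fact that $\cX_+$ is a convex cone. The paper states the ``if'' direction in one line (sum of two convex sets, resp.\ cones, is convex, resp.\ a cone), whereas you unpack this via the canonical decomposition $X=X^+-X^-$; your closing remark about solidity is harmless but not actually needed, since Proposition~\ref{s-i}(b) already gives $X^-\in\cD$ directly for any $X\in\cA$.
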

\begin{proof}
The ``only if'' implications are clear because $\cX_-$ is a convex cone. The ``if'' implications are immediate once we observe that
$\cA=\cX_++\cA_-$ by Proposition~\ref{s-i}.
\end{proof}

\smallskip

In the case that $\cX$ has the projection property, so that every band is automatically a projection band, we find the following
useful characterization of surplus invariance in terms of projection operators.

\begin{proposition}
\label{prop: si and projections}
Assume $\cX$ has the projection property. Then, for a monotone set $\cA\subset\cX$ the following statements are equivalent:
\begin{enumerate}
  \item[(a)] $\cA$ is surplus invariant.
  \item[(b)] $X\in\cA$ implies $P(X)\in\cA$ for every band projection $P$ on $\cX$.
\end{enumerate}
\end{proposition}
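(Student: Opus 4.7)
The plan is to use Proposition~\ref{s-i}, which under monotonicity equates surplus invariance with the condition that $X \in \cA$ implies $-X^- \in \cA$ (condition (c)) and equivalently with the monotonicity-of-negative-parts condition (d). The key facts about band projections under the projection property that I will use are: every band projection $P$ preserves the lattice operations, so in particular $P(X)^- = P(X^-)$; every band projection satisfies $0 \leq P(Y) \leq Y$ for $Y \in \cX_+$; and if $B$ is a band, then its disjoint complement $B^d$ is also a projection band, so $I - P_B = P_{B^d}$ is again a band projection.

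For the direction (a) $\Rightarrow$ (b), I would take $X \in \cA$ and an arbitrary band projection $P$. Using preservation of lattice operations together with positivity, I get
\[
(P(X))^- = P(X^-) \leq X^-.
\]
Since $\cA$ is monotone and surplus invariant, Proposition~\ref{s-i}(d) then yields $P(X) \in \cA$.

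For (b) $\Rightarrow$ (a), it suffices by Proposition~\ref{s-i}(c) to show that $X \in \cA$ implies $-X^- \in \cA$. Given $X \in \cA$, let $B$ be the band generated by $X^+$ and let $P$ be the band projection onto the disjoint complement $B^d$. Since $X^+ \in B$ we have $P(X^+) = 0$, and since $X^- \perp X^+$ we have $X^- \in B^d$ and hence $P(X^-) = X^-$. Consequently
\[
P(X) = P(X^+) - P(X^-) = -X^-,
\]
and assumption (b) gives $-X^- \in \cA$, as required.

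The argument is essentially routine once Proposition~\ref{s-i} and the basic calculus of band projections are in hand; the only mildly delicate point is the reverse direction, where the trick is to project \emph{away from} the positive part rather than onto it. This is where the projection property of $\cX$ is used in an essential way, as it guarantees that the band $(\mathrm{band}(X^+))^d$ is a projection band for every $X \in \cX$.
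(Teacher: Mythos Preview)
Your proof is correct and follows essentially the same approach as the paper: both directions go through Proposition~\ref{s-i} and the basic calculus of band projections. The only cosmetic difference is that for (b) $\Rightarrow$ (a) the paper projects onto the principal band $\cB_{X^-}$ while you project onto $(\band(X^+))^d$, but in either case $P(X^+)=0$ and $P(X^-)=X^-$, so the arguments coincide.
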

\begin{proof}
Assume first that {\em (a)} holds and let $X\in\cA$ and $P$ be a band projection on $\cX$. Then, we have
\[
P(X) = P(X^+)-P(X^-) \geq -P(X^-) \geq -X^-.
\]
Since $-X^-\in\cA$ by Proposition~\ref{s-i}, it follows from monotonicity that $P(X)\in\cA$
so that {\em (b)} holds. Next, assume that {\em (b)} holds. Take any $X\in\cA$ and let $P$
be the band projection for the principal band $\cB_{X^-}$. Since
$X^-\in\cB_{X^-}$ by definition, we clearly have $P(X^-)=X^-$. In addition,
since $X^+\wedge X^-=0$, we infer that $X^+\wedge\abs{Y}=0$ for all
$Y\in\cB_{X^-}$ so that $P(X^+)=0$. As a result, it follows from {\em (b)} that
\[
-X^- = P(X^+)-P(X^-) = P(X)\in \cA.
\]
In view of Proposition~\ref{s-i}, this implies that {\em (a)} holds and concludes the proof.
\end{proof}

\smallskip

%%%%%%%%%%%%%%%%%%%%%%%%%%%%%%%%%%%%%%%%%%%%%%

\subsection{Closedness of surplus-invariant monotone sets}

In this section we investigate the link among some (order and topological)
closedness properties of surplus-invariant monotone sets. First, we show that (order and
topological) closedness properties of a surplus-invariant monotone set are
systematically captured by the closedness of the corresponding negative parts.

\begin{proposition}
\label{prop: closedness}
For a surplus-invariant monotone set $\cA\subset\cX$ we have:
\begin{enumerate}
  \item[(a)] $\cA$ is order closed iff $\cA_-$ is order closed.
  \item[(b)] $\cA$ is uo closed iff $\cA_-$ is uo closed.
\end{enumerate}
Assume $\tau$ is a locally-solid Hausdorff topology on $\cX$. Then, we have:
\begin{enumerate}
  \item[(c)] $\cA$ is $\tau$ closed iff $\cA_-$ is $\tau$ closed.
\end{enumerate}
\end{proposition}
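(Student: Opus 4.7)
The plan is to treat the three parts in parallel, since in each case the negative cone $\cX_-$ serves as a closed ``test set'' against which $\cA$ is intersected, and the map $X\mapsto X^-$ is continuous in the corresponding sense.

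For the three ``only if'' implications, I would simply use the identity $\cA_-=\cA\cap\cX_-$. The negative cone $\cX_-$ is order closed, uo closed, and $\tau$-closed in any locally-solid Hausdorff topology; in each case this follows from the continuity of the lattice operation $X\mapsto X^+$ together with the representation $\cX_-=\{X\in\cX\,;\ X^+=0\}$. Since the intersection of two closed sets is closed in each of these senses, $\cA_-$ inherits closedness from $\cA$.

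For the three ``if'' implications I would argue uniformly. Let $(X_\alpha)$ be a net in $\cA$ converging to some $X\in\cX$ in the relevant sense. By Proposition~\ref{s-i}(c) applied to $\cA$, we have $-X_\alpha^-\in\cA$ for every $\alpha$, and since $-X_\alpha^-\leq 0$ these elements all lie in $\cA_-$. From the standard lattice inequality $\abs{X_\alpha^- - X^-}\leq\abs{X_\alpha - X}$ it then follows that $-X_\alpha^-\to -X^-$ in the same sense (order, uo, or $\tau$). The assumed closedness of $\cA_-$ yields $-X^-\in\cA_-\subset\cA$, and since $X\geq -X^-$, monotonicity of $\cA$ forces $X\in\cA$. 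This gives closedness of $\cA$ in each of the three senses.

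The only delicate step is the continuity of $X\mapsto X^-$, but this is standard: for order and uo convergence it is a direct consequence of the inequality $\abs{X^--Y^-}\leq\abs{X-Y}$ via sandwiching (for uo one intersects with an arbitrary $U\in\cX_+$ before sandwiching), and for $\tau$ it is built into the definition of a locally-solid topology. So I do not expect a genuine obstacle; the main conceptual point is the reduction via Proposition~\ref{s-i}(c) to arguments that only involve the negative cone.
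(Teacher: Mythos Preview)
Your proof is correct and follows essentially the same approach as the paper: both directions rely on the closedness of $\cX_-$ for the ``only if'' part and on the continuity of $X\mapsto X^-$ combined with Proposition~\ref{s-i}(c) and monotonicity for the ``if'' part. The paper phrases the ``only if'' direction slightly differently (saying that limits of negative nets remain negative rather than writing $\cA_-=\cA\cap\cX_-$ explicitly), but the underlying idea is identical.
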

\begin{proof}
Since the limit of an order-convergent net consisting of negative elements of
$\cX$ is still negative, we see that the order closedness of $\cA_-$ follows
from that of $\cA$. The same implication holds if we replace order convergence
by unbounded-order convergence, or by $\tau$ convergence, once we recall that
$\cX_-$ is $\tau$ closed by Theorem~2.21 in Aliprantis and Burkinshaw (2003).
Conversely, assume that $\cA_-$ is order closed and consider a net
$(X_\alpha)\subset\cA$ and $X\in\cX$ satisfying $X_\alpha\xrightarrow{o}X$.
Since $-X^-_\alpha\in\cA$ for all $\alpha$ by surplus invariance and since we
have $-X^-_\alpha\xrightarrow{o}-X^-$, it follows that $-X^-\in\cA$ by order
closedness of $\cA_-$ and, thus, $X\in\cA$ by monotonicity. This proves that
$\cA$ is order closed. The same argument works if we replace order convergence
by unbounded-order convergence, or by $\tau$ convergence in view of the $\tau$
continuity of the lattice operations, see Theorem~2.17 in Aliprantis and
Burkinshaw (2003).
\end{proof}

\smallskip

The second preliminary result highlights some properties of unbounded-order convergence that also carry some independent interest.

\begin{lemma}
\label{con}
Consider a net $(X_\alpha)\subset\cX_+$ and $X\in\cX_+$ such that $X_\alpha\xrightarrow{uo}X$.
\begin{enumerate}
 \item[(a)] If $\cX$ is order complete, then $X=\sup_{\alpha}\inf_{\beta\geq\alpha}X_\beta$.
 \item[(b)] If $(X_\alpha)$ is contained in a set $\cD$ that is solid in $\cX_+$, then there exists an increasing net
     $(Z_\gamma)\subset\cD$ satisfying $X=\sup_\gamma Z_\gamma$.
\end{enumerate}
\end{lemma}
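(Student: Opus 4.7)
The workhorse for both parts is the following consequence of positivity. Taking the test element $U = X$ in the definition of $X_\alpha\xrightarrow{uo}X$ gives $|X_\alpha-X|\wedge X\xrightarrow{o}0$; since $X_\alpha,X\in\cX_+$, the identity $X-X_\alpha\wedge X=(X-X_\alpha)^+$ together with the elementary bound $(X-X_\alpha)^+\leq |X_\alpha-X|\wedge X$ yields $X_\alpha\wedge X\xrightarrow{o}X$ with $X_\alpha\wedge X\leq X$. This is the key order convergence at our disposal, and both parts are organised around it.

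For (a), I would set $Y_\alpha=\inf_{\beta\geq\alpha}X_\beta$, which exists by order completeness together with the lower bound $X_\beta\geq0$, and $Y=\sup_\alpha Y_\alpha$. For $Y\geq X$: since $X_\beta\wedge X\leq X_\beta$ we have $\inf_{\beta\geq\alpha}(X_\beta\wedge X)\leq Y_\alpha$, and the standard $\liminf$ characterisation of order convergence in order-complete lattices, applied to $X_\beta\wedge X\xrightarrow{o}X$, gives $X=\sup_\alpha\inf_{\beta\geq\alpha}(X_\beta\wedge X)\leq Y$. For $Y\leq X$: fix $\alpha_0$; then $Y_{\alpha_0}\leq X_\beta$ for every $\beta\geq\alpha_0$, whence $(Y_{\alpha_0}-X)^+\leq (X_\beta-X)^+\leq|X_\beta-X|$, and combining with $(Y_{\alpha_0}-X)^+\leq Y_{\alpha_0}$ gives $(Y_{\alpha_0}-X)^+\leq|X_\beta-X|\wedge Y_{\alpha_0}$. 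Now I would invoke uo-convergence with the test element $Y_{\alpha_0}\in\cX_+$ to conclude that the right-hand side order-converges to $0$; since the left-hand side is independent of $\beta$, this forces $(Y_{\alpha_0}-X)^+=0$, i.e.\ $Y_{\alpha_0}\leq X$. Taking the supremum over $\alpha_0$ delivers $Y\leq X$.

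For (b) order completeness is no longer available, so the infima $Y_\alpha$ need not exist in $\cX$ and the increasing net must be manufactured by hand. Unpacking $X_\beta\wedge X\xrightarrow{o}X$ yields a net $(V_\lambda)\subset\cX_+$ with $V_\lambda\downarrow 0$ such that, for each $\lambda$, one has $X-X_\beta\wedge X\leq V_\lambda$ eventually in $\beta$, equivalently $X-V_\lambda\leq X_\beta\wedge X$ eventually. I would set $Z_\lambda=(X-V_\lambda)^+$. This net is increasing because $(V_\lambda)$ is decreasing, and each $Z_\lambda$ satisfies $Z_\lambda\leq X_\beta\wedge X\leq X_\beta$ for some $\beta$, so $Z_\lambda\in\cD$ by solidity of $\cD$ in $\cX_+$. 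Finally $X-Z_\lambda=X\wedge V_\lambda$, which decreases and is dominated by $V_\lambda\downarrow 0$, so $X-Z_\lambda\downarrow 0$ and $\sup_\lambda Z_\lambda=X$.

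The delicate point is the construction in (b): without order completeness there is no canonical ``$\liminf$'' living inside $\cX$, and the trick is to convert the order-null net $(X-X_\beta\wedge X)$ into a decreasing dominating net $(V_\lambda)$ and then read off the increasing approximants $(X-V_\lambda)^+$. The solidity hypothesis is used precisely once, to place these approximants inside $\cD$.
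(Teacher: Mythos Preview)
Your argument is correct. For part~(a) your approach is close in spirit to the paper's (both hinge on the order convergence $X_\alpha\wedge Y\xrightarrow{o}X\wedge Y$ for suitable test elements $Y$); just note that the two halves should be swapped, since the bound $Y_{\alpha_0}\le X$ is precisely what guarantees that $Y=\sup_\alpha Y_\alpha$ exists before you use it.

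For part~(b) your route is genuinely different and more elementary than the paper's. The paper passes to the order completion $\cX^\delta$, invokes the fact that uo-convergence is preserved there, applies part~(a) in $\cX^\delta$ to obtain $X=\sup_\alpha Z_\alpha$ with $Z_\alpha=\inf_{\beta\ge\alpha}X_\beta\in\cX^\delta$, and then manufactures the required net inside $\cD$ by taking the upward-directed family $\cZ=\bigcup_\alpha\{W\in\cX:0\le W\le Z_\alpha\}$, using order denseness of $\cX$ in $\cX^\delta$ to recover $\sup\cZ=X$ in $\cX$. Your construction stays entirely inside $\cX$: from $X_\alpha\wedge X\xrightarrow{o}X$ you extract a dominating net $V_\lambda\downarrow 0$ and set $Z_\lambda=(X-V_\lambda)^+$, checking directly that $Z_\lambda\uparrow X$ and $Z_\lambda\in\cD$. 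This avoids the external reference to preservation of uo-convergence under completion and yields an explicit increasing net, at the modest cost of losing the ``$\liminf$'' interpretation of the approximants that the paper's approach provides.
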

\begin{proof}
To prove {\em (a)}, fix $Y\in\cX_+$ and note that
$\abs{X_\alpha-X}\wedge Y\xrightarrow{o}0$ by definition of unbounded-order
convergence. Since for every $\alpha$ we have
\[
\abs{X_\alpha\wedge Y-X\wedge Y} \leq \abs{X_\alpha-X}\wedge Y,
\]
we infer that $X_\alpha\wedge Y\xrightarrow{o}X\wedge Y$. In turn, this implies
\begin{equation}
\label{eq: lemma con}
X\wedge Y = \sup_{\alpha}\inf_{\beta\geq\alpha}(X_\beta\wedge Y) =
\sup_{\alpha}\big((\inf_{\beta\geq\alpha}X_\beta)\wedge Y\big).
\end{equation}
In particular, for any $Y\in\cX_+$ and for any index $\alpha$ we have
\[
\inf_{\beta\geq\alpha}X_\beta\wedge Y \leq X.
\]
Taking $Y=\inf_{\beta\geq\alpha}X_\beta$, we infer that
$\inf_{\beta\geq\alpha}X_\beta\leq X$ and, hence,
$\sup_{\alpha}\inf_{\beta\geq\alpha}X_\beta$ exists. Therefore, we can apply a
further distribution law in \eqref{eq: lemma con} and obtain
\[
X\wedge Y = \big(\sup_{\alpha}\inf_{\beta\geq\alpha}X_\beta\big)\wedge Y
\]
for any $Y\in\cX_+$. Taking  $Y=X+\sup_{\alpha}\inf_{\beta\geq\alpha}X_\beta$
we  obtain the desired identity.

\smallskip

To prove {\em (b)}, let $\cX^\delta$ be the order completion of $\cX$ and note that
$X_\alpha\xrightarrow{uo}X$ in $\cX^\delta$, see Theorem~3.2 in Gao, Troitsky and Xanthos (2017).
Thus, by point {\em (a)}, we have
\[
X = \sup_\alpha\inf_{\beta\geq\alpha}X_\beta,
\]
where the lattice operations are understood in $\cX^\delta$. For each $\alpha$, define $Z_\alpha=\inf_{\beta\geq\alpha}X_\beta$ and
set
\[
\mathcal{Z}_\alpha=\{X\in \cX \,; \ 0\leq X\leq Z_\alpha\}.
\]
In addition, define the collection
\[
\mathcal{Z} = \bigcup_\alpha\mathcal{Z}_\alpha.
\]
Since for every $\alpha$ we have $Z_\alpha\leq X_\alpha$ and, hence, $\mathcal{Z}_\alpha\subset\cD$ by solidity of $\cD$, we infer
that $\mathcal{Z}\subset\cD$. Moreover, $\mathcal{Z}$ is directed upward in $\cX_+$. To see this, for $i\in\{1,2\}$ consider any
$0\leq X_i\leq Z_{\alpha_i}$ and take $\alpha$ with $\alpha\geq\alpha_i$ for all $i\in\{1,2\}$. Then, we have $X_1\vee X_2\leq
Z_{\alpha_1}\vee
Z_{\alpha_2}\leq Z_\alpha$, and thus $X_1\vee X_2\in\mathcal{Z}$. As a result, $\mathcal{Z}$ can be viewed as an increasing net in
$\cX_+$ indexed over itself.
To conclude the proof, recall that $\cX$ is order dense in $\cX^\delta$ and therefore we have
$\sup\mathcal{Z}_\alpha=Z_\alpha$ where the supremum is taken in $\cX^\delta$.
Thus
\[
\sup\mathcal{Z}=\sup_\alpha\sup \mathcal{Z}_\alpha=\sup_\alpha
Z_\alpha=X,
\]
where the suprema are again taken in $\cX^\delta$. By order denseness of $\cX$
in $\cX^\delta$, it follows that $\sup\mathcal{Z}=X$ in $\cX$ as well, completing the proof.
\end{proof}

\smallskip

The third and last preliminary result follows immediately by combining Lemma~4.2 and
Theorem~4.20 in Aliprantis and Burkinshaw (2003).

\begin{lemma}
\label{lem: amemiya-fremlin}
Assume $\tau$ is an order-continuous locally-solid Hausdorff topology on
$\cX$. Then, for every solid set $\cS\subset\cX$ the following statements are
equivalent:
\begin{enumerate}
	\item[(a)] $\cS$ is order closed.
	\item[(b)] $\cS$ is $\tau$ closed.
\end{enumerate}
\end{lemma}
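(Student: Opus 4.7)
The plan is to establish the two implications separately. The direction $(b)\Rightarrow(a)$ is essentially direct from the order continuity of $\tau$, while $(a)\Rightarrow(b)$ requires a deeper structural fact about solid sets in locally solid topologies.

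For $(b)\Rightarrow(a)$, I would assume $\cS$ is $\tau$-closed and take a net $(X_\alpha)\subset\cS$ with $X_\alpha\xrightarrow{o}X$. By the definition of order convergence there exists a net $P_\beta\downarrow 0$ in $\cX_+$ such that for every $\beta$ one has $\abs{X_\alpha-X}\leq P_\beta$ eventually in $\alpha$. Order continuity of $\tau$ gives $P_\beta\xrightarrow{\tau}0$, and combined with local solidity this forces $X_\alpha\xrightarrow{\tau}X$. Hence $X\in\cS$ by $\tau$-closedness, and $\cS$ is order closed.

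For $(a)\Rightarrow(b)$, I would assume $\cS$ is order closed and take $X$ in the $\tau$-closure of $\cS$, say $X_\alpha\xrightarrow{\tau}X$ with $(X_\alpha)\subset\cS$. By $\tau$-continuity of the lattice operations and solidity of $\cS$, replacing $X_\alpha$ by $\abs{X_\alpha}\wedge\abs{X}\in\cS$ reduces the problem to the case $X\geq 0$ with $X_\alpha\in[0,X]\cap\cS$ and $X_\alpha\xrightarrow{\tau}X$; once $\abs{X}\in\cS$ is secured, solidity returns the original $X\in\cS$. The core step is then to produce, from this order-bounded $\tau$-convergent net, an order-monotone approximation of $X$ lying entirely inside $\cS$, so that order closedness of $\cS$ delivers $X\in\cS$. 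This is precisely what is packaged by combining Lemma~4.2 and Theorem~4.20 of Aliprantis and Burkinshaw (2003): under an order-continuous locally-solid Hausdorff topology, the $\tau$-closure of a solid set agrees with its order closure, and the required increasing net can be read off from their argument.

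The main difficulty is in the $(a)\Rightarrow(b)$ direction. Solidity of $\cS$ does \emph{not} guarantee that finite suprema of elements of $\cS$ remain in $\cS$ (for instance, $\{f\in L^1\mid\int\abs{f}\leq 1\}$ is solid but not closed under finite joins), so a naive approximation by $\bigvee_{\alpha\in F}X_\alpha$ over finite index sets $F$ does not stay in $\cS$. It is for this reason that one cannot avoid invoking the Amemiya--Fremlin-type machinery of Aliprantis--Burkinshaw in place of a direct hands-on construction.
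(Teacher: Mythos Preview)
Your proposal is correct and takes essentially the same approach as the paper: the paper's proof consists entirely of the sentence ``follows immediately by combining Lemma~4.2 and Theorem~4.20 in Aliprantis and Burkinshaw (2003)'', and you invoke exactly the same two results for the substantive direction $(a)\Rightarrow(b)$, while spelling out the routine $(b)\Rightarrow(a)$ direction and adding useful commentary on why solidity alone does not permit a naive finite-suprema argument.
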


\smallskip

We are now ready to establish the main result of this section, which provides a twofold characterization of order closedness for surplus-invariant monotone sets. In a first step, we show that order closedness is equivalent to unbounded-order closedness (which is typically much stronger than order closedness). In a second step, under convexity, we establish the equivalence between order closedness and closedness with respect to a convenient weak topology (which fails to be locally solid in general). This result will be key to derive powerful dual representations for surplus-invariant risk functionals in the sequel. Here, we denote by $\cX^\sim_n$ the order-continuous dual of $\cX$. We say that a family $\cI\subset\cX^\sim_n$ is {\em separating} if for every distinct elements $X,Y\in\cX$ there exists $\varphi\in\cI$ such that $\varphi(X)\neq\varphi(Y)$. Recall that $\sigma(\cX,\cI)$ is Hausdorff precisely when $\cI$ is separating.

\begin{theorem}
\label{closed-top}
For every surplus-invariant monotone set $\cA\subset\cX$ the following statements are
equivalent:
\begin{enumerate}
  \item[(a)] $\cA$ is order closed.
  \item[(b)] $\cA$ is uo closed.
\end{enumerate}
If $\cX^\sim_n$ is separating and $\cA$ is convex, the above are also equivalent to:
\begin{enumerate}
  \item[(c)] $\cA$ is $\sigma(\cX,\cI)$ closed for every separating ideal $\cI\subset\cX^\sim_n$.
  \item[(d)] $\cA$ is $\sigma(\cX,\cI)$ closed for some separating ideal $\cI\subset\cX^\sim_n$.
\end{enumerate}
\end{theorem}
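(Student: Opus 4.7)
The plan is to first establish (a) $\Leftrightarrow$ (b) by routing both closedness properties through the solid set $\cD := -\cA_-$, and then to close the cycle (a) $\Rightarrow$ (c) $\Rightarrow$ (d) $\Rightarrow$ (a) by using the absolute weak topology $|\sigma|(\cX,\cI)$ as a locally-solid bridge between order and weak closedness.

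The implication (b) $\Rightarrow$ (a) is immediate since order convergence of nets implies unbounded-order convergence. For (a) $\Rightarrow$ (b), I would start with a net $(X_\alpha)\subset\cA$ with $X_\alpha\xrightarrow{uo}X$. Proposition~\ref{s-i}(c) gives $-X^-_\alpha\in\cA$, and since $\abs{X^-_\alpha-X^-}\wedge U\leq\abs{X_\alpha-X}\wedge U$ for every $U\in\cX_+$ one obtains $X^-_\alpha\xrightarrow{uo}X^-$. The positive net $(X^-_\alpha)$ sits inside the solid set $\cD$, so Lemma~\ref{con}(b) produces an increasing net in $\cD$ with supremum $X^-$. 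This net order-converges to $X^-$, and order closedness of $\cA_-$ (Proposition~\ref{prop: closedness}(a))  --- equivalently of $\cD$ --- forces $-X^-\in\cA$; Proposition~\ref{s-i}(d) then delivers $X\in\cA$.

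For (a) $\Rightarrow$ (c), I would fix a separating ideal $\cI\subset\cX^\sim_n$ and consider the absolute weak topology $|\sigma|(\cX,\cI)$, which is an order-continuous locally-solid Hausdorff topology on $\cX$. The set $\cD$ is solid in $\cX_+$ (Proposition~\ref{s-i}(e)) and order closed, so Lemma~\ref{lem: amemiya-fremlin} gives $|\sigma|(\cX,\cI)$-closedness of $\cD$ and hence of $\cA_-$, and Proposition~\ref{prop: closedness}(c) transfers this to $\cA$. Since $\cI$ is already an ideal in $\cX^\sim_n$ (and therefore in $\cX^\sim$, because $\cX^\sim_n$ is a band of $\cX^\sim$), the topological dual of $|\sigma|(\cX,\cI)$ equals $\cI$. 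Consequently $|\sigma|(\cX,\cI)$ and $\sigma(\cX,\cI)$ are compatible with the same dual pair $(\cX,\cI)$, and convexity of $\cA$ together with Hahn-Banach yields $\sigma(\cX,\cI)$-closedness. The implication (c) $\Rightarrow$ (d) is trivial since $\cX^\sim_n$ itself is a separating ideal. Finally, (d) $\Rightarrow$ (a) is immediate: every $\varphi\in\cI$ is order continuous, so an order-convergent net in $\cA$ is $\sigma(\cX,\cI)$-convergent and its limit lies in $\cA$ by weak closedness.

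The main obstacle will be step (a) $\Rightarrow$ (c), because the weak topology $\sigma(\cX,\cI)$ is typically not locally solid, so neither Lemma~\ref{lem: amemiya-fremlin} nor Proposition~\ref{prop: closedness}(c) can be applied to $\sigma(\cX,\cI)$ directly. The detour through $|\sigma|(\cX,\cI)$ is essential, and both the solidity of $\cD$ (to run the Amemiya-Fremlin argument) and the convexity of $\cA$ (to pass from $|\sigma|$ back to $\sigma$) are indispensable. A subtler point hidden inside (a) $\Rightarrow$ (b) is the use of Lemma~\ref{con}(b), which replaces a potentially wild uo-convergent net inside a solid set with a monotone, order-convergent one; this step would fail without solidity of $\cD$.
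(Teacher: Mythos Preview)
Your approach is essentially the same as the paper's, and the overall architecture is correct. There is, however, one genuine slip in your step (a) $\Rightarrow$ (c): you apply Lemma~\ref{lem: amemiya-fremlin} directly to $\cD=-\cA_-$, but that lemma is stated for sets that are \emph{solid in $\cX$}, i.e.\ closed under $\abs{Y}\leq\abs{X}$, whereas $\cD$ is only solid \emph{in $\cX_+$}. A subset of $\cX_+$ is never solid in $\cX$ unless it equals $\{0\}$, so the hypothesis of the lemma is not met.

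The paper fixes this by passing to the genuinely solid set
\[
\cS=\{X\in\cX \,; \ \abs{X}\in\cD\},
\]
which is order closed because the lattice operations are order continuous. Lemma~\ref{lem: amemiya-fremlin} applies to $\cS$, giving $|\sigma|(\cX,\cI)$-closedness of $\cS$; then $\cD=\cS\cap\cX_+$ is $|\sigma|(\cX,\cI)$-closed because $\cX_+$ is closed in any locally-solid topology. From there your argument proceeds exactly as written. So the patch is a one-line insertion, but without it the appeal to the lemma is formally unjustified.
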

\begin{proof}
Throughout the proof we set $\cD=-\cA_-$. Recall from Proposition~\ref{s-i} that $\cD$ is solid in $\cX_+$. Since order convergence always implies unbounded-order convergence, we
immediately see that {\em (b)} implies {\em (a)}. To prove the converse
implication, assume that $\cA$ is order closed so that $\cD$ is also order closed by Proposition~\ref{prop: closedness}. By applying
point {\em (b)} in Lemma~\ref{con} to the solid set $\cD$ we easily conclude that $\cD$ is unbounded-order closed. Consequently, $\cA$
is unbounded-order
closed as well by virtue of Proposition~\ref{prop: closedness}, proving that {\em (a)} implies {\em (b)}.

\smallskip

Now, assume that $\cX^\sim_n$ is separating and $\cA$ is convex. It is clear that {\em (c)} implies {\em (d)}. Next, assume that $\cI$ is a separating
ideal of $\cX^\sim_n$. Since $\sigma(\cX,\cI)$ is clearly order continuous, it is immediate to see that order convergence implies
$\sigma(\cX,\cI)$ convergence and, hence, {\em (d)} implies {\em (a)}. To conclude the proof, assume that $\cA$ is order closed so
that $\cD$ is itself order closed by Proposition~\ref{prop: closedness}. Note that
\[
\cS = \{X\in\cX \,; \ \abs{X}\in \cD\}
\]
is a solid set in $\cX$. Since lattice operations are order continuous, it is immediate to see that $\cS$ is also order closed. Recall
that $\abs{\sigma}(\cX,\cI)$ is locally solid and order continuous. Then, we can apply Lemma~\ref{lem: amemiya-fremlin} to
infer that $\cS$ is closed with respect to the absolute topology
$\abs{\sigma}(\cX,\cI)$. In particular, since $\cD=\cS\cap \cX_+$, this implies
that $\cD$, hence $\cA$ by Proposition~\ref{prop: closedness}, is also
$\abs{\sigma}(\cX,\cI)$ closed. Recall that $\cI$ is the topological dual of
$\cX$ equipped with the topology $\abs{\sigma}(\cX,\cI)$ by Theorem~2.33 in Aliprantis and Burkinshaw
(2003).  Being convex, the set $\cA$ is then automatically $\sigma(\cX,\cI)$
closed. This establishes that {\em (a)} implies {\em (c)} and concludes the
proof of the equivalence.
\end{proof}

\smallskip

\begin{remark}
(i) In general, the smaller the ideal $\cI$ is, the stronger the above point {\em (d)}. Under the assumptions of the theorem, point
{\em (c)} tells us that all the choices of $\cI$ have the same strength. Thus, one has the freedom to choose $\cI$ as small as possible, or as
convenient as needed.

\smallskip

(ii) The reason why, in the context of the preceding result, we did not apply Lemma~\ref{lem: amemiya-fremlin} to the topology $\sigma(\cX,\cI)$ to immediately establish the desired equivalence is that $\sigma(\cX,\cI)$ typically fails to be locally solid.
%Indeed, as implied by Theorem~2.36 in Aliprantis and Burkinshaw (2003), the topology $\sigma(\cX,\cI)$
%can be locally solid only when each principal ideal of $\cI$ is of finite dimension. To see how restrictive this is, consider a probability space $(\Omega,\cF,\probp)$ and let $\cX$ be an ideal of $L^1(\Omega,\cF,\probp)$. Then, as soon as $\cI$ contains a constant random variable, the local solidity of $\sigma(\cX,\cI)$ would force $L^\infty(\Omega,\cF,\probp)$ to have finite dimension.
\end{remark}

\smallskip

\begin{remark}[On solid subsets of $\cX_+$]
Theorem~\ref{closed-top} remains valid if we replace the convex surplus-invariant monotone set $\cA\subset\cX$ with a convex set $\cC\subset\cX_+$ that is solid in $\cX_+$. In view of Proposition~\ref{s-i}, this follows at once from Corollary~\ref{cor: convexity}
and Proposition~\ref{prop: closedness} by considering $\cA=\cX_+-\cC$.
\end{remark}

\smallskip

%%%%%%%%%%%%%%%%%%%%%%%%%%%%%%%%%%%%%%%%%%%%%%%%%

\subsection{A decomposition result for surplus-invariant monotone sets}

We now turn to the announced decomposition result for convex, surplus invariant,
monotone sets. In order to streamline the proof, it is useful to first state the
following simple result about convex sets in a vector lattice.

\begin{lemma}
\label{add}
For every convex order-closed set $\cA\subset\cX$ containing a cone $\cC$ we have
$\cA+\cC\subset\cA$.
%Consider a convex set $\cA\subset\cX$.
%If $\cA$ contains two cones $\cC_1$ and
%$\cC_2$, then we have $\cC_1+\cC_2\subset\cA$.
%If $\cA$ is order closed and
%contains a cone $\cC$, then we have $\cA+\cC\subset\cA$.
\end{lemma}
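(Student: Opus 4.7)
The plan is to approximate $X+Z$ from within $\cA$ by a sequence that order converges to $X+Z$, and then invoke order closedness. The key structural remark is that, because $\cC$ is a cone, arbitrarily large positive scalar multiples of any given $Z\in\cC$ still lie in $\cA$; convexity will then allow us to ``dilute'' this contribution against $X\in\cA$ so that only a vanishing fraction of $X$ is perturbed away.

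Concretely, I would fix $X\in\cA$ and $Z\in\cC$. For each $n\in\N$ we have $(n+1)Z\in\cC\subset\cA$, so by convexity of $\cA$
\[
W_n \;=\; \tfrac{n}{n+1}X + \tfrac{1}{n+1}\bigl((n+1)Z\bigr) \;=\; (X+Z) - \tfrac{1}{n+1}X \;\in\; \cA.
\]
This produces a distinguished sequence inside $\cA$ whose deviation from the target point $X+Z$ is a small multiple of the fixed element $X$.

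To finish, I would invoke the (tacitly assumed) Archimedean property of $\cX$: one has $\tfrac{1}{n+1}\abs{X}\downarrow 0$, so $\tfrac{1}{n+1}X\xrightarrow{o}0$ and hence $W_n\xrightarrow{o}X+Z$. Order closedness of $\cA$ then forces $X+Z\in\cA$, as required. The only point that needs a moment of care---and what I would flag as the main technical check---is that order closedness is formulated for nets whereas my approximation is sequential; this is harmless because a sequence is just a net indexed by $\N$, and the same family $p_n=\tfrac{1}{n+1}\abs{X}$ that ensures the Archimedean decrease to $0$ also serves as the decreasing dominating net witnessing order convergence of $(W_n)$ to $X+Z$.
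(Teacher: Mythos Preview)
Your proof is correct and is essentially identical to the paper's argument: the paper fixes $X\in\cA$, $Y\in\cC$, writes $Y+(1-\tfrac{1}{n})X=\tfrac{1}{n}(nY)+(1-\tfrac{1}{n})X\in\cA$ by convexity (since $nY\in\cC\subset\cA$), and then passes to the order limit. Your sequence $W_n$ differs only by the harmless index shift $n\mapsto n+1$, and your explicit remarks on the Archimedean property and the sequence-versus-net issue simply spell out what the paper leaves implicit.
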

\begin{proof}
%The first statement follows immediately once we note that
%$X_1+X_2=\frac{1}{2}(2X_1)+\frac{1}{2}(2X_2)$ for all $X_1
%\in\cC_1$ and $X_2\in\cC_2$.
Take arbitrary $X\in\cA$ and $Y\in\cC$ and note that
$Y+(1-\frac{1}{n})X\xrightarrow{o}Y+X$. Since for all $n\in\N$ we have
$nY\in\cC$ by assumption, hence $nY\in\cA$, and
\[
Y+\Big(1-\frac{1}{n}\Big)X = \frac{1}{n}(nY)+\Big(1-\frac{1}{n}\Big)X \in \cA,
\]
we infer from the order closedness of $\cA$ that $X+Y\in\cA$, concluding the
proof.
\end{proof}

\smallskip

We can now state and prove our decomposition result. Here, given a convex set
$\cA\subset\cX$, we denote by $\rec(\cA)$ its {\em recession cone}, i.e.~we set
\[
\rec(\cA) = \bigcap_{\lambda>0}\lambda(\cA-Z).
\]
for a fixed $Z\in\cA$ (the above intersection is independent of the choice of $Z\in\cA$).
Moreover, we denote by $\lin(\cA)$ its {\em lineality space}, i.e.~we set
\[
\lin(\cA) = \rec(\cA)\cap(-\rec(\cA)).
\]
Note that, if $\cA$ contains the zero element (as is the case under monotonicity
and surplus invariance), then $\rec(\cA)$ coincides with the largest convex cone
contained in $\cA$. Similarly, $\lin(\cA)$ is the largest vector space contained
in $\cA$. Moreover, we say that $\cA$ is {\em radially
bounded} if for every $X\in\cA\backslash\{0\}$ there exists $\lambda_X>0$ such
that $\lambda X\notin\cA$ for all $\lambda\geq\lambda_X$.

\begin{theorem}
\label{decom}
Assume $\cX $ has the projection property. Then, every convex order-closed surplus-invariant monotone set $\cA\subset\cX$ admits a
unique decomposition
\begin{equation}
\label{eq: main decomposition}
\cA = (\cB_1)_+\oplus\big((\cB_2)_+-\cD\big)\oplus\cB_3
\end{equation}
where $\cB_1$, $\cB_2$, $\cB_3$ are bands such that $\cX=\cB_1\oplus\cB_2\oplus\cB_3$ and $\cD$ is a subset of $(\cB_2)_+$ satisfying
the following properties:
\begin{enumerate}
  \item[(a)] $\cD$ is convex, order closed, radially bounded and solid in
$\cX_+$.
  \item[(b)] $\cD=-\cA_-\cap\cB_2$.
  \item[(c)] $X\in(\cB_2)_+\backslash\{0\}$ implies $X\wedge Y\neq0$ for some
$Y\in\cD$.
  \item[(d)] $\cB_2=\band(\cD)$.
\end{enumerate}
In this case, we have $\rec(\cA)=(\cB_1)_+\oplus(\cB_2)_+\oplus\cB_3$ and $\lin(\cA)=\cB_3$ and we say that $\cA$ is {\em represented
by $(\cB_1;\cB_2,\cD;\cB_3)$}.
%the following statements hold:
%\begin{enumerate}
%	\item[(e)] $\rec(\cA)=(\cB_1)_+\oplus(\cB_2)_+\oplus\cB_3$.
%	\item[(f)] $\lin(\cA)=\cB_3$.
%\end{enumerate}
%If $\cA$ admits the decomposition \eqref{eq: main decomposition}, we say that
%$\cA$ is {\em represented by $(\cB_1;\cB_2,\cD;\cB_3)$}.
\end{theorem}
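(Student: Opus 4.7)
The approach is to reduce the decomposition of $\cA$ to one of the set $\cD:=-\cA_-\subset\cX_+$. Indeed, Proposition~\ref{s-i} gives $\cA=\cX_+-\cD$ with $\cD$ solid in $\cX_+$, while Corollary~\ref{cor: convexity} and Proposition~\ref{prop: closedness} transfer convexity and order closedness from $\cA$ to $\cD$. The crucial structural input is Proposition~\ref{prop: si and projections}: every band projection on $\cX$ leaves $\cD$ invariant, so $\cD$ splits additively across any band decomposition of $\cX$.

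To build the three bands I would isolate the conic part of $\cD$ by setting $\cE:=\rec(\cD)=\{X\in\cX_+\,; \ \lambda X\in\cD\ \text{for all}\ \lambda>0\}$, the largest convex cone inside $\cD$. Routine checks show that $\cE$ is convex, solid in $\cX_+$, and order closed (each inherited from $\cD$), from which one deduces that $\cE-\cE=\{X\in\cX\,; \ \abs{X}\in\cE\}$ is an order-closed ideal of $\cX$, hence a band; this is $\cB_3$. Put $\cB_1:=(\band\cD)^d$ and take $\cB_2$ to be the band complement of $\cB_3$ inside $\band\cD$; since $\cE\subset\cD$ forces $\cB_3\subset\band\cD$, the projection property yields $\cX=\cB_1\oplus\cB_2\oplus\cB_3$. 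With $\cD_0:=\cD\cap\cB_2=-\cA_-\cap\cB_2$, properties (a), (b), (d) are bookkeeping: convexity, order closedness, and solidity in $\cX_+$ pass from $\cD$, while (d) and (c) follow from $\cB_2\subset\band\cD$ together with the projection-invariance of $\cD$ (any $X\in(\cB_2)_+\setminus\{0\}$ meets some $Y\in\cD$, and the $\cB_2$-projection of $Y$ still lies in $\cD_0$). Radial boundedness is the decisive point: if $\R_+X\subset\cD_0$ then $X\in\cE\cap\cB_2\subset\cB_3\cap\cB_2=\{0\}$.

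To verify \eqref{eq: main decomposition}, the inclusion ``$\supset$'' uses that $(\cB_2)_+-\cD_0\subset\cX_+-\cD=\cA$ and that $\cB_3\subset\cA$ (because any $X\in\cE-\cE$ satisfies $X^-\in\cE\subset\cD$ by solidity), and then Lemma~\ref{add} applied to the cone $\cB_3\subset\cA$ transports the triple sum into $\cA$. For ``$\subset$'' one uses that for $X\in\cA$ each band projection $P_i(X)\in\cA$ by Proposition~\ref{prop: si and projections}, and then $P_1(X)\in(\cB_1)_+$ since $\cD\cap\cB_1=\{0\}$, $P_2(X)\in(\cB_2)_+-\cD_0$ since $P_2(X)^-\in\cD\cap\cB_2=\cD_0$, and $P_3(X)\in\cB_3$ trivially.

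Finally, for the $\rec$ and $\lin$ formulas and for uniqueness, the key is to characterize $\cB_3$ intrinsically as $\lin(\cA)$: the inclusion $\cB_3\subset\lin(\cA)$ is clear from \eqref{eq: main decomposition}, while any $X\in\lin(\cA)$ has all three band projections in $\lin(\cA)$, which forces $P_1(X)=0$ (since $\pm P_1(X)\in(\cB_1)_+$) and $P_2(X)=0$ (since $P_2(X)^{\pm}\in\rec(\cD)\cap\cB_2=\{0\}$). An analogous argument yields $\rec(\cA)=(\cB_1)_+\oplus(\cB_2)_+\oplus\cB_3$, and uniqueness then follows by reading off $\cB_1\oplus\cB_2=\cB_3^d$ and $\cB_2=\band(-\cA_-\cap\cB_3^d)$ directly from $\cA$. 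I expect the main obstacle to be the verification that $\cE-\cE$ is order closed as an ideal---i.e., a band---which hinges on the order closedness of $\cE$ in $\cX_+$; once that is in place, the band-projection invariance of $\cD$ orchestrates everything else cleanly.
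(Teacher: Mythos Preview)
Your proposal is correct and follows essentially the same route as the paper: both define $\cB_1=(\cA_-)^d=(\band\cD)^d$, $\cB_3$ via $\rec(\cA_-)$, and $\cB_2$ as the remaining complement, then use projection invariance (Proposition~\ref{prop: si and projections}) for the ``$\subset$'' inclusion, Lemma~\ref{add} for ``$\supset$'', and radial boundedness of $\cD_0$ to pin down $\rec(\cA)$ and $\lin(\cA)$.

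The one mild difference worth noting is your treatment of $\cB_3$: you argue directly that $\cE-\cE=\{X:\abs{X}\in\cE\}$ is an order-closed ideal, hence already a band, whereas the paper sets $\cB_3:=\band(\rec(\cA_-))$ and separately verifies $\cB_3\subset\cA$ by showing $\cA$ contains the ideal generated by $\rec(\cA_-)$ and invoking order closedness. Your shortcut is legitimate---$\cE=\bigcap_{n}\tfrac{1}{n}\cD$ is order closed, so $\{X:\abs{X}\in\cE\}$ is too---and it makes the inclusion $\cB_3\subset\cA$ immediate from $X^-\le\abs{X}\in\cE\subset\cD$. Your uniqueness argument, identifying $\cB_3=\lin(\cA)$ first and then recovering $\cB_2=\band(-\cA_-\cap\cB_3^d)$, is a small reshuffling of the paper's (which reads off both $\cB_1'=(\cA_-)^d$ and $\cB_3'=\lin(\cA)$ directly); both work once one computes $\cA_-=-\cD'\oplus(\cB_3')_-$ from any admissible decomposition.
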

\begin{proof}
First of all, define the bands
\[
\cB_1=(\cA_-)^d, \ \ \ \cB_3=\band(\rec(\cA_-)), \ \ \
\cB_2=(\cB_1\oplus\cB_3)^d.
\]
Note that, by definition of $\cB_2$, we clearly have $\cX=\cB_1\oplus\cB_2\oplus\cB_3$. Moreover, consider the subset of $(\cB_2)_+$
given by
\[
\cD = -\cA_-\cap\cB_2.
\]
Clearly, $\cD$ satisfies {\em (b)}. Moreover, it is easily seen to be convex,
order closed, and solid in $\cX_+$. To show that $\cD$ is also radially bounded, assume to the contrary that
$\lambda_nX\in\cD$ for some strictly-increasing sequence
$\lambda_n\uparrow\infty$ and some $X\in\cD\backslash\{0\}$. In this case, we
would get $-X\in\rec(\cA_-)$ by solidity and thus $X\in\cB_3$. However, since
we assumed that $X\in\cB_2$, this would imply $X=0$ in contrast to our initial
assumption. As a result, $\cD$ must be radially bounded. This establishes {\em
(a)}.

\smallskip

To show {\em (c)}, assume there exists some $X\in(\cB_2)_+\backslash\{0\}$ such
that $X\wedge Y=0$ for all $Y\in\cD$. For each $i\in\{1,2,3\}$, let $P_i$ be the
band projection of $\cB_i$. By Proposition~\ref{prop: si and projections}, one easily sees that
\begin{equation}
\label{pro-ima}
P_i(\cA)=\cA\cap\cB_i \ \ \ \mbox{and} \ \ \ P_i(\cA_-) = \cA_-\cap\cB_i
\end{equation}
%so that
%\begin{equation}
%\label{pro-ima-n}
%P_i(\cA_-) = \cA_-\cap\cB_i
%\end{equation}
for all $i\in\{1,2,3\}$. Now, take an arbitrary $Z\in-\cA_-$ and note that,
since $X\in\cB_2$, we clearly have $X\wedge P_1(Z)=X\wedge P_3(Z)=0$. Moreover,
we have $X\wedge P_2(Z)=0$ by definition of $\cD$. This implies that
$X\wedge Z=0$. Since $Z$ was chosen arbitrarily in $-\cA_-$, it follows that
$X\in\cB_1$. However, this is not possible because $X$ is a nonzero element
belonging to $\cB_2$. This proves that {\em (c)} must hold. Since
$\cB_2=\band(\cD)$ is equivalent to {\em (c)} by Theorem~1.39 in Aliprantis and
Burkinshaw (2006) applied to $\cD$ in $\cB_2$, we conclude that {\em (d)} holds
as well.

\smallskip

We proceed to establish the decomposition~\eqref{eq: main decomposition}. First,
take $X\in\cA\cap\cB_1$ and note that
\[
-X^- \in \cA_-\cap\cB_1 = \cA_-\cap(\cA_-)^d = \{0\}.
\]
This shows that $\cA\cap\cB_1\subset(\cB_1)_+$. Since we can write
\[
X = P_1(X)+(P_2(X^+)-P_2(X^-))+P_3(X)
\]
for all $X\in\cX$, we infer from \eqref{pro-ima} that
\[
\cA \subset (\mathcal{B}_1)_+\oplus\big((\cB_2)_+-\cD\big)\oplus\cB_3.
\]
To prove the converse inclusion, note first that $\cA$ contains $-\cD$ and thus,
by monotonicity, also $(\cB_1)_+\oplus\big((\cB_2)_+-\cD\big)$. We claim that
$\cA$ contains $\cB_3$ as well. To show this, take any $n\in\N$, any
$\lambda_1,\dots,\lambda_n>0$, any $X_1,\dots,X_n\in\rec(\cA_-)$ and any
$Y\in\cX$ such that
$\abs{Y}\leq\sum_{i=1}^n\lambda_i\abs{X_i}=-\sum_{i=1}^n\lambda_iX_i$. Note that
\[
Y \geq \sum_{i=1}^n\lambda_iX_i =
\sum_{i=1}^n\frac{\lambda_i}{\sum_{j=1}^n\lambda_j}\Big(\sum_{j=1}
^n\lambda_j\Big)X_i \in \cA
\]
by convexity of $\cA$, showing that $Y\in\cA$ by monotonicity. This implies that
$\cA$ contains the
ideal generated by $\rec(\cA_-)$ and, being order closed, also the band
generated by $\rec(\cA_-)$ or, equivalently, $\cB_3$. As a result, it follows
from Lemma~\ref{add} that
\[
\cA \supset \cA+\cB_3 \supset (\cB_1)_+\oplus\big((\cB_2)_+-\cD\big)\oplus\cB_3.
\]
The decomposition~\eqref{eq: main decomposition} is then established.

\smallskip

To prove that $\rec(\cA)=(\cB_1)_+\oplus(\cB_2)_+\oplus\cB_3$ and thus $\lin(\cA)=\cB_3$, note first that $\rec(\cA)$ clearly contains
$(\cB_1)_+\oplus(\cB_2)_+\oplus\cB_3$. To show the converse inclusion,
it suffices to prove that $\rec(\cA\cap\cB_2)\subset\cX_+$. To this effect, take
$X\in\rec(\cA\cap\cB_2)$ and note that $-X^-\in\rec(\cA\cap\cB_2)$ as well by
surplus invariance. In particular, $X^-$ belongs to $\rec(\cD)$. Since $\cD$ is
radially bounded, we must have $\rec(\cD)=\{0\}$ and this shows that
$X\in\cX_+$.

\smallskip

Finally, to prove the uniqueness of the above decomposition, assume that
\[
\cA = (\cB'_1)_+\oplus\big((\cB'_2)_+-\cD'\big)\oplus\cB'_3
\]
where $\cB'_1$, $\cB'_2$, $\cB'_3$ are bands such that
$\cX=\cB'_1\oplus\cB'_2\oplus\cB'_3$ and $\cD'$ is a subset of $(\cB'_2)_+$
satisfying conditions {\em (a)} and {\em (d)} (once we
replace $\cD$ by $\cD'$ and $\cB_2$ by $\cB'_2$). Clearly,
$\cB'_1=(\cA_-)^d=\cB_1$  and $\cB'_3=\lin(\cA)=\cB_3$. This implies that
$\cB'_2=\cB_2$. By Proposition~\ref{s-i}, it is also easy to show that
$\cA_-=-\cD'\oplus (\cB_3')_-$, so that $\cD'=-\cA_-\cap \cB_2'$. This implies
$\cD'=\cD$, concluding the proof of uniqueness.
\end{proof}

\smallskip

The next corollary specifies the above decomposition to the case of conic surplus-invariant sets.

\begin{corollary}
Assume $\cX$ has the projection property. Then, every convex order-closed surplus-invariant monotone set $\cA\subset\cX$ that is additionally a cone admits a unique decomposition
\[
\cA = \cB_+\oplus\cB^d
\]
where $\cB$ is a band.
\end{corollary}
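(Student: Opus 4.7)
The plan is to apply Theorem~\ref{decom} directly and exploit conicity to show that the ``middle'' component of the decomposition collapses to zero. First I would invoke Theorem~\ref{decom} to obtain the representation $\cA=(\cB_1)_+\oplus\bigl((\cB_2)_+-\cD\bigr)\oplus\cB_3$ together with a set $\cD\subset(\cB_2)_+$ that is convex, order closed, radially bounded, solid in $\cX_+$, and satisfies $\cD=-\cA_-\cap\cB_2$ and $\cB_2=\band(\cD)$.

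Next I would use the additional assumption that $\cA$ is a cone. By Corollary~\ref{cor: convexity}, $\cA_-$ is then also a cone, and hence so is $-\cA_-$; intersecting with the band $\cB_2$ (which is in particular a vector subspace) gives that $\cD=-\cA_-\cap\cB_2$ is itself a cone. Here is the key observation: a cone that is radially bounded must be trivial, because for every nonzero $X\in\cD$ one has $\lambda X\in\cD$ for all $\lambda\geq0$, contradicting the existence of some $\lambda_X>0$ with $\lambda X\notin\cD$ for $\lambda\geq\lambda_X$. Hence $\cD=\{0\}$, and property (d) of Theorem~\ref{decom} then forces $\cB_2=\band(\cD)=\{0\}$.

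With $\cB_2=\{0\}$, the decomposition from the theorem simplifies to $\cA=(\cB_1)_+\oplus\cB_3$ with $\cX=\cB_1\oplus\cB_3$, so that $\cB_3=(\cB_1)^d$. Setting $\cB=\cB_1$ yields the desired representation $\cA=\cB_+\oplus\cB^d$. For uniqueness, any decomposition of this form corresponds to the Theorem~\ref{decom}-representation with the choice $(\cB_1,\cB_2,\cD,\cB_3)=(\cB,\{0\},\{0\},\cB^d)$, which trivially satisfies conditions (a)--(d); uniqueness of $\cB$ then follows from the uniqueness statement in Theorem~\ref{decom}.

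I do not expect any real obstacle here: once the radial boundedness of $\cD$ is combined with the inherited conicity, the collapse $\cD=\{0\}$ is immediate, and the rest is bookkeeping via Theorem~\ref{decom}.
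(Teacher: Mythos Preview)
Your proof is correct and, like the paper's, rests entirely on Theorem~\ref{decom}. The only difference is in which consequence of that theorem you invoke: the paper observes that a cone equals its own recession cone and reads off $\cA=\rec(\cA)=(\cB_1)_+\oplus(\cB_2)_+\oplus\cB_3$ directly, then sets $\cB=\cB_1\oplus\cB_2$; you instead show $\cD$ is a radially bounded cone, hence $\cD=\{0\}$ and $\cB_2=\band(\cD)=\{0\}$, and set $\cB=\cB_1$. Both routes are short and end at the same band, and your explicit handling of uniqueness via the representation $(\cB,\{0\},\{0\},\cB^d)$ is a nice touch.
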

\begin{proof}
Assume that $\cA$ is represented by $(\cB_1;\cB_2,\cD;\cB_3)$. Since
$\cA=\rec(\cA)$, it follows from Theorem \ref{decom} that $\cA = (\cB_1)_+\oplus(\cB_2)_+\oplus\cB_3$. Setting $\cB=\cB_1\oplus\cB_2$
immediately yields the desired claim.
\end{proof}

\smallskip

\begin{remark}[On solid subsets of $\cX_+$]
In view of Proposition~\ref{s-i}, Corollary~\ref{cor: convexity}, and Proposition~\ref{prop: closedness}, the above decomposition result can be easily adapted to every convex order-closed set $\cC\subset\cX_+$ that is solid in $\cX_+$. In this case, provided that $\cX$ has the projection property, $\cC$ admits a unique decomposition $\cC = \cD\oplus\cB_+$ where $\cB$ is a band and $\cD$ is a convex order-closed radially-bounded subset of $\cB^d$ that is solid in $\cX_+$. In this case, we say that $\cC$ is {\em
represented by $(\cD;\cB)$}. If $\cC$ is additionally assumed to be a cone, then we must have $\cC = \cB_+$ for a suitable band $\cB$.
\end{remark}

\smallskip

%%%%%%%%%%%%%%%%%%%%%%%%%%%%%%%%%%%%%%%%%%%%

\subsection{Strictly-positive supporting functionals}

We apply the above closedness and decomposition results to investigate the existence of strictly-positive and order-continuous functionals that support a convex surplus-invariant monotone set. We start with the following preliminary boundedness result.

\begin{lemma}
\label{lem: strictly positive supporting functional}
Assume $\cX_n^\sim$ has the countable sup property and admits a strictly-positive element. Let $\cC\subset\cX_+$ be a convex order-closed radially-bounded solid subset of $\cX_+$. Then, there exists a strictly-positive functional $\varphi\in\cX^\sim_n$ such
that $\sup_{X\in\cC}\varphi(X)<\infty$.
\end{lemma}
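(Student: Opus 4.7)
The plan is to construct $\varphi$ as a countable weighted sum of positive order-continuous functionals, each bounded on $\cC$ and dominated by the strictly-positive reference $\psi$, whose carriers collectively exhaust $\cX_+\setminus\{0\}$. The three steps are a pointwise Hahn--Banach separation, a countable extraction via the countable sup property, and a dominated infinite sum.

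For the pointwise step, I would show that for each $X_0\in\cX_+\setminus\{0\}$ there is $\eta_{X_0}\in(\cX^\sim_n)_+$ with $\eta_{X_0}\leq\psi$, $\eta_{X_0}(X_0)>0$, and $\sup_\cC\eta_{X_0}<\infty$. First, I would verify that $\cX^\sim_n$ separates the points of $\cX$: for $X\neq 0$ with $X^+,X^->0$ and $\psi(X^+)=\psi(X^-)$, one can use a band projection in the order-complete $\cX^\sim_n$ onto the band of positive functionals vanishing on $X^-$ to produce a positive functional assigning different values to $X^+$ and $X^-$. The Remark following Theorem~\ref{closed-top} then yields that $\cC$ is $\sigma(\cX,\cX^\sim_n)$-closed. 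Since $\cC$ is radially bounded, $\lambda X_0\notin\cC$ for some $\lambda>0$, so Hahn--Banach separation produces $\eta\in\cX^\sim_n$ with $\sup_\cC\eta<\lambda\eta(X_0)$, forcing $\eta(X_0)>0$. Replacing $\eta$ by $\eta^+$ (which remains bounded on $\cC$ by solidity of $\cC$ and the Riesz--Kantorovich formula) and then by $\eta^+\wedge\psi$ gives the required $\eta_{X_0}$: strict positivity at $X_0$ is preserved because $\psi$, being strictly positive, is also a weak order unit of $\cX^\sim_n$.

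Next, the family $\mathcal{H}=\{\eta_{X_0}:X_0\in\cX_+\setminus\{0\}\}$ lies in the order interval $[0,\psi]$ of the order-complete dual $\cX^\sim_n$, so it has a supremum $\widetilde\eta\in\cX^\sim_n$, and $\widetilde\eta(X_0)\geq\eta_{X_0}(X_0)>0$ for every $X_0\in\cX_+\setminus\{0\}$, i.e.\ $\widetilde\eta$ is strictly positive. By the countable sup property I can then extract a sequence $\{\eta_n\}\subset\mathcal{H}$ with $\sup_n\eta_n=\widetilde\eta$; passing to $\eta_1\vee\cdots\vee\eta_n$ (still bounded by $\psi$, hence still in $\cX^\sim_n$) I may assume $\eta_n\uparrow\widetilde\eta$, so that $\eta_n(X)\uparrow\widetilde\eta(X)$ for every $X\in\cX$. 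Finally, I set
\[
\varphi=\sum_{n=1}^\infty\frac{\eta_n}{2^n\bigl(1+\sup_\cC\eta_n\bigr)}.
\]
The partial sums form an increasing sequence in $(\cX^\sim_n)_+$ dominated by $\sum_n 2^{-n}\psi=\psi$, hence converge in order to some $\varphi\in(\cX^\sim_n)_+$. By construction $\sup_\cC\varphi\leq\sum_n 2^{-n}=1<\infty$, and strict positivity of $\varphi$ follows because $\eta_n(X_0)\uparrow\widetilde\eta(X_0)>0$ ensures some term contributes strictly positively at any given $X_0\in\cX_+\setminus\{0\}$.

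The main obstacle is the pointwise separation step: one must verify that $\cX^\sim_n$ genuinely separates $\cX$ and that the truncation $\eta^+\wedge\psi$ keeps $\eta^+(X_0)$ from being annihilated. Both reduce to strict positivity of $\psi$ serving as a weak order unit of the order-complete $\cX^\sim_n$. Once this is in hand, the countable sup property and the natural domination by $\psi$ make the assembly into a global strictly-positive $\varphi\in\cX^\sim_n$ with $\sup_\cC\varphi<\infty$ essentially mechanical.
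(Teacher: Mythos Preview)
Your strategy is sound and the argument goes through, but the route differs from the paper's in one substantive technical point. Both proofs share the same skeleton: invoke the remark after Theorem~\ref{closed-top} to get $\sigma(\cX,\cX^\sim_n)$-closedness of $\cC$, separate via Hahn--Banach, pass to positive parts using solidity and Riesz--Kantorovich, extract a countable subfamily via the countable sup property, and combine. The divergence is in how strict positivity of the combined functional is secured. You separate \emph{every} $X_0\in\cX_+\setminus\{0\}$ from $\cC$ and then truncate each separating functional by $\psi$, relying on the fact that a strictly positive $\psi\in\cX^\sim_n$ is a weak order unit of $\cX^\sim_n$; this lets you assemble a single weighted series dominated by $\psi$ that is automatically strictly positive everywhere. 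The paper instead separates only points of $\cC$, builds $\varphi_0=\sup_n((\lambda_n\varphi_{X_n})\wedge\psi)$, and then runs a fairly intricate contradiction argument to establish strict positivity on $\cC^{dd}$, after which a second piece $\varphi_1$ is glued on to handle $\cC^d$. Your approach is shorter and avoids the $\cC^{dd}/\cC^d$ split entirely; the price is that the weak-order-unit claim must be justified.

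That claim is true but deserves a proof rather than an assertion: given $\varphi\in(\cX^\sim_n)_+$ with $\varphi\wedge\psi=0$, fix $X\in\cX_+$ and $\varepsilon>0$, choose $0\le Y_n\le X$ with $\varphi(X-Y_n)+\psi(Y_n)<\varepsilon 2^{-n}$, set $Z_n=\bigwedge_{k\le n}Y_k$; then $\psi(Z_n)\to 0$ forces $\inf_n Z_n=0$ by strict positivity, so $X-Z_n\uparrow X$ and order continuity gives $\varphi(X)=\lim\varphi(X-Z_n)\le\sum_k\varphi(X-Y_k)<\varepsilon$. This also confirms that $(\eta^+\wedge\psi)(X_0)>0$ whenever $\eta^+(X_0)>0$, since $\eta^+\wedge n\psi\le n(\eta^+\wedge\psi)$ and $\eta^+\wedge n\psi\uparrow\eta^+$. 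Likewise, the separation of $\cX$ by $\cX^\sim_n$ follows cleanly by noting that $\|X\|:=\psi(|X|)$ is an order-continuous lattice norm, so every norm-continuous functional (which separates by Hahn--Banach) is already order continuous. Two minor points: your phrase ``since $\cC$ is radially bounded, $\lambda X_0\notin\cC$'' tacitly uses that either $X_0\in\cC$ (radial boundedness applies) or $X_0\notin\cC$ (take $\lambda=1$); and after replacing $\eta_n$ by $\eta_1\vee\cdots\vee\eta_n$ you should note that $\sup_\cC(\eta_1\vee\cdots\vee\eta_n)\le\sum_{k\le n}\sup_\cC\eta_k<\infty$ so boundedness on $\cC$ survives.
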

\begin{proof}
Clearly, we can assume that $\cC$ contains nonzero elements. Fix a nonzero $X\in\cC$ and note that $\lambda_X X\notin\cC$ for a
suitable $\lambda_X>0$. Since $\cX^\sim_n$ admits a strictly-positive element and is thus separating, it follows from
Theorem~\ref{closed-top} that $\cC$ is $\sigma(\cX,\cX_n^\sim)$ closed. As a consequence of Hahn-Banach Separation, there
exists a functional $\varphi_X\in\cX_n^\sim$ satisfying
\begin{equation}
\label{eq: strictly positive support functional 1}
\sup_{Y\in\cC}\varphi_X(Y) < \lambda_X\varphi_X(X).
\end{equation}
Since $\cC$ is solid in $\cX_+$, the above supremum coincides with $\sup_{Y\in\cC}{\varphi_X^+}(Y)$ by the Riesz-Kantorovich formula,
see Theorem~1.18 in Aliprantis and Burkinshaw~(2006). As a result, we obtain
\[
\sup_{Y\in \cC}\varphi_X^+(Y) < \lambda_X\varphi_X(X) \leq \lambda_X{\varphi_X^+}(X),
\]
where we used that $X\in\cX_+$. Hence, we may assume without loss of generality that $\varphi_X$ is a positive functional. Now, take a
strictly-positive functional $\psi\in\cX_n^\sim$,
which exists by assumption, and define
\[
\varphi = \sup_{X\in\cC}(\varphi_X\wedge\psi).
\]
The supremum exists in $\cX_n^\sim$ because $\cX_n^\sim$ is always order
complete. By the countable sup property of $\cX_n^\sim$, we find a sequence
$(X_n)\subset\cC$ such that
\begin{equation}
\label{eq: strictly positive support functional 3}
\varphi = \sup_{n\in\N}(\varphi_{X_n}\wedge\psi).
\end{equation}
Moreover, choose a sequence of strictly-positive real numbers
$(\lambda_n)\subset\R$ such that
\[
\sum_{n\in\N}\lambda_n\big(\sup_{Y\in\cC}\varphi_{X_n}(Y)\big) \in \R
\]
and define
\[
\varphi_0 = \sup_{n\in\N}\big((\lambda_n\varphi_{X_n})\wedge\psi\big).
\]
Note that $\sup_{X\in\cC}\varphi_0(X)<\infty$ because
\[
\varphi_0(X) = \lim_{k\to\infty}\sup_{n\leq k}\big((\lambda_n\varphi_{X_n})\wedge\psi\big)(X) \leq
\sum_{n\in\N}\lambda_n\varphi_{X_n}(X) \leq
\sum_{n\in\N}\lambda_n\bigg(\sup_{Y\in\cC}\varphi_{X_n}(Y)\bigg)
\]
for every $X\in\cC$.

\smallskip

We show that $\varphi_0$ is strictly positive on $\cC^{dd}$. To this effect, assume to the contrary that we find a nonzero positive
$U\in C^{dd}$ such that $\varphi_0(U)=0$. Then, we must have
\[
0 \leq \min(\lambda_n,1)(\varphi_{X_n}\wedge\psi)(U) \leq \big((\lambda_n\varphi_{X_n})\wedge\psi\big)(U) = 0
\]
for every $n\in\N$. In particular, we get $(\varphi_{X_n}\wedge\psi)(U)=0$ for all $n\in\N$. Fix now $n\in\N$ and set
\[
\varphi_k = \sup_{n\leq k}(\varphi_{X_n}\wedge\psi)
\]
for $k\in\N$. For every $k\in\N$ we can use the inequality $\varphi_k \leq \sum_{n\leq k}\varphi_{X_n} \wedge \psi$ to obtain that
$\varphi_k(U)=0$. Since $(\varphi_k)$ is an increasing sequence that is order convergent to $\varphi$ in $\cX_n^\sim$, we can apply
Theorem~1.18 in Aliprantis and Burkinshaw (2006) to infer that $\varphi(U)=\lim_{k\to\infty}\varphi_k(U) = 0$. This implies
\[
(\varphi_X\wedge\psi)(U)=0 \ \ \ \mbox{for all $X\in C$}.
\]
Pick any $X\in C$. We claim that $\varphi_X(U)=0$. To see this, take $\varepsilon>0$ and use Theorem~1.18 in Aliprantis and Burkinshaw
(2006) to find for each $n\in\N$ an element $U_n\in\cX_+$ satisfying $U_n\leq U$ and
\[
0 \leq \varphi_X(U-U_n)+\psi (U_n)\leq \frac{\varepsilon}{2^n}.
\]
Set $V_n=\bigwedge_{m=1}^n U_m$ for $n\in\N$ and note that $(V_n)$ is decreasing and $\psi(V_n)\to0$. Then, by strict positivity of
$\psi$, it is easy to see that $\inf_{n\in\N}V_n=0$, so that $U-V_n\xrightarrow{o}U$ and $\varphi_X(U-V_n)\to\varphi_X(U)$. But for
every $n\in\N$ we have
\[
\varphi_X(U-V_n)=\varphi_X\Big(\bigvee_{m=1}^n(U-U_m)\Big)\leq \sum_{m=1}^n\varphi_X(U-U_m)\leq \varepsilon,
\]
implying $\varphi_X(U)=0$. This proves the claim. Now, take some element $V\in\cC$ such that $U\wedge V$ is nonzero. Such an element
exists for otherwise $U$ would belong to $\cC^d$ and $\cC^{dd}$ by assumption, leading to the contradiction $U=0$. Then, $U\wedge
V\in\cC$ by solidity of $\cC$. As a result, the preceding claim and the positivity of $\varphi_{U\wedge V}$ yield
\[
0 = \varphi_{U\wedge V}(U) \geq \varphi_{U\wedge V}(U\wedge V) \geq 0
\]
so that $\varphi_{U\wedge V}(U\wedge V)=0$. However, since $0\in\cC$, this contradicts~\eqref{eq: strictly positive support functional
1}. In conclusion, it follows that $\varphi_0$ is strictly positive on $\cC^{dd}$.

\smallskip

Finally, set $\varphi_1(X)=\sup\{\psi(X\wedge Y) \,; \ Y\in\cC^d\cap\cX_+\}$ for $X\in\cX_+$. Then, a similar argument as in
Theorem~1.22 in Aliprantis and Burkinshaw~(2006) shows that $\varphi_1$ extends to a unique positive functional on the entire $\cX$.
Clearly, $\varphi_1\leq \psi$ so that $\varphi_1\in\cX_n^\sim$. Moreover, $\varphi_1=0$ on $\cC^{dd}$ and $\varphi_1=\psi$ on $\cC^d$.
Now, define $\varphi^\ast=\varphi_0+\varphi_1$. Then, $\varphi^\ast\in\cX_n^\sim$ and, since $\cC\subset \cC^{dd}$,
\[
\sup_{X\in\cC}\varphi^\ast(X) = \sup_{X\in\cC}\varphi_0(X) < \infty.
\]
To conclude the proof, it suffices to prove that $\varphi^\ast$ is strictly positive on $\cX$. To this effect, take a nonzero
$X\in\cX_+$ and recall from Theorem~1.36 in Aliprantis and Burkinshaw~(2006) that there exist positive $U\in\cC^d$ and $V\in\cC^{dd}$
such that $0<U+V\leq X$. This yields $\varphi^\ast(X)\geq\varphi^\ast(U+V)\geq\varphi_1(U)+\varphi_0(V)>0$ by the strict positivity of
$\varphi_1$ on $\cC^d$ and of $\varphi_0$ on $\cC^{dd}$.
\end{proof}
%\begin{remark}
%Incidentally, note that one could use the above result to deduce the fundamental fact that every Banach function space over a $\sigma$-finite measure space admits a strictly-positive order continuous functional.
%\end{remark}

\smallskip

Here, we define the {\em order-continuous barrier cone} of a set $\cA\subset\cX$ by setting
\[
\barr(\cA) = \big\{\varphi\in\cX_n^\sim \,; \
\inf_{X\in\cA}\varphi(X)>-\infty\big\}.
\]
The next result shows that, under suitable assumptions on the underlying space, the ``nontrivial'' part of a convex order-closed surplus-invariant monotone set can be supported by a strictly-positive order-continuous linear functional.

\begin{theorem}
\label{theo: strictly positive supporting functional}
Assume $\cX$ has the projection property and $\cX_n^\sim$ has the countable
sup property and admits a strictly-positive element. Consider a convex order-closed surplus-invariant monotone set $\cA\subset\cX$ represented by $(\cB_1;\cB_2,\cD;\cB_3)$. Then, there exists $\varphi\in\barr(\cA)$ that is strictly positive on $\cB_1\oplus \cB_2$.
\end{theorem}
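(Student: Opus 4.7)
My plan is to invoke Lemma~\ref{lem: strictly positive supporting functional} on the ``solid piece'' $\cD$ of the canonical decomposition to obtain a strictly-positive order-continuous functional $\varphi_0$ that is bounded above on $\cD$, and then compose with band projections to annihilate the $\cB_3$-component. The $\cB_3$-component must be killed because $\cB_3$ is a subspace contained in $\cA$, so any supporting functional is forced to vanish there. Together these will give a functional that is strictly positive on $\cB_1\oplus\cB_2$ and bounded below on $\cA$.

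In detail: condition~(a) of Theorem~\ref{decom} already states that $\cD$ is a convex, order-closed, radially-bounded subset of $\cX_+$ which is solid in $\cX_+$, so the hypotheses of Lemma~\ref{lem: strictly positive supporting functional} are met by the assumptions on $\cX_n^\sim$. The lemma then delivers a strictly-positive $\varphi_0\in\cX_n^\sim$ with
\[
M := \sup_{D\in\cD}\varphi_0(D) < \infty.
\]
Using the projection property, I let $P_i$ be the band projection onto $\cB_i$ for $i=1,2,3$ and set $\varphi := \varphi_0\circ(P_1+P_2)$. Band projections are order-continuous positive operators (in fact lattice homomorphisms), so $\varphi\in\cX_n^\sim$. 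On $\cB_1\oplus\cB_2$ we have $(P_1+P_2)X=X$, hence $\varphi=\varphi_0$ there, which is strictly positive; on $\cB_3$ the functional $\varphi$ vanishes.

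To finish, I would verify the barrier condition $\varphi\in\barr(\cA)$ using the decomposition in~\eqref{eq: main decomposition}. Any $X\in\cA$ can be written as $X = X_1 + (Y - D) + X_3$ with $X_1\in(\cB_1)_+$, $Y\in(\cB_2)_+$, $D\in\cD$, and $X_3\in\cB_3$; applying $\varphi$ collapses the $\cB_3$-piece and reduces matters to
\[
\varphi(X) = \varphi_0(X_1) + \varphi_0(Y) - \varphi_0(D) \geq -M,
\]
since $\varphi_0\geq 0$ on $\cX_+$. The only nontrivial ingredient is Lemma~\ref{lem: strictly positive supporting functional}, which has already been proved; once its hypotheses are checked against condition~(a) of Theorem~\ref{decom}, the rest is bookkeeping around the canonical band decomposition, so I do not anticipate any real obstacle beyond correctly assembling the three ingredients (the lemma, the projection-property surgery, and the decomposition of $\cA$).
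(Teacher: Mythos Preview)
Your proposal is correct and follows essentially the same approach as the paper: apply Lemma~\ref{lem: strictly positive supporting functional} to $\cD$ using property~(a) from Theorem~\ref{decom}, then kill the $\cB_3$-component. The paper's proof is terser---it simply says ``setting the corresponding functional equal to zero on $\cB_3$''---whereas you make this explicit via the composition $\varphi_0\circ(P_1+P_2)$ and spell out the verification of the barrier condition, but the substance is identical.
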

\begin{proof}
Recall from Theorem \ref{decom} that $\cA$ can be decomposed as
\[
\cA = (\cB_1)_+\oplus\big((\cB_2)_+-\cD\big)\oplus\cB_3,
\]
where $\cD=-\cA_-\cap\cB_2$ is convex, order closed, radially bounded and solid
in $\cX_+$. Note that every functional in $\barr(\cA)$ must annihilate the vector space $\cB_3$. The desired functional can be
obtained by applying Lemma~\ref{lem: strictly positive supporting functional} and by setting the corresponding functional equal to
zero on $\cB_3$.
\end{proof}

\smallskip

%%%%%%%%%%%%%%%%%%%%%%%%%%%%%%%%%%%%%%%%

\section{Surplus-invariant risk functionals}
\label{sec: risk measures}

In this section we discuss two possible ways to transfer the notion of surplus invariance to the realm of functionals. Our main focus is on dual representations and extensions of surplus-invariant functionals. Throughout this section we maintain our assumption that $\cX$ is a fixed vector lattice.

\begin{definition}
A functional $\rho:\cX\to\overline{\R}$ is {\em monotone} if for every $X,Y\in\cX$ we have
\[
X\geq Y \ \implies \ \rho(X)\leq\rho(Y).
\]
We say that $\rho$ is {\em quasiconvex} if for every $X,Y\in\cX$ and
$\lambda\in[0,1]$ we have
\[
\rho(\lambda X+(1-\lambda)Y) \leq \max(\rho(X),\rho(Y)).
\]
We say that $\rho$ is {\em $S$-additive} for $S\in\cX_+\backslash\{0\}$ if for every $X\in\cX$ and $m\in\R$ we have
\[
\rho(X+mS) = \rho(X)-m.
\]
We say that $\rho$ satisfies the {\em Fatou property} if it is lower semicontinuous with respect to order convergence, i.e.~for every
net $(X_\alpha)\subset\cX$ and $X\in\cX$ we have
\[
X_\alpha\xrightarrow{o}X \ \implies \ \rho(X)\leq\liminf_{\alpha}\rho(X_\alpha).
\]
Similarly, $\rho$ is said to satisfy the {\em super Fatou property} if it is
lower semicontinuous with respect to unbounded-order convergence, i.e.~for every $(X_\alpha)\subset\cX$ and $X\in\cX$ we have
\[
X_\alpha\xrightarrow{uo}X \ \implies \ \rho(X)\leq\liminf_{\alpha}\rho(X_\alpha).
\]
\end{definition}

\smallskip

The property of monotonicity is standard in risk measure theory, see F\"{o}llmer and Schied~(2017). The notion of quasiconvexity has
also been extensively studied in a risk measure context, see e.g.~Cerreia-Vioglio et al.~(2011) and Drapeau and Kupper (2013). The
property of $S$-additivity was introduced in Artzner et al.~(1999) and is crucial to assign an operational interpretation to a risk
functional. Indeed, in this case,
\begin{equation}
\label{eq: S additivity}
\rho(X) = \inf\{m\in\R \,; \ X+mS\in\cA\}
\end{equation}
for every position $X\in\cX$, where $\cA=\{\rho\leq0\}$. Hence, the quantity $\rho(X)$ can be naturally interpreted as the smallest
number of units of the ``reference asset'' with payoff $S$ that has to be acquired in order to ensure ``acceptability''. We refer to
Munari (2015) for a comprehensive treatment of this type of risk functionals.

\smallskip

The Fatou property plays a critical role in obtaining tractable dual representations for convex risk measures. The original
contribution, in the setting of bounded random variables, goes back to Delbaen (2002). Note that, if $\cX$ has the countable sup
property as it is the case whenever $\cX$ is an ideal of $L^0(\Omega,\cF,\probp)$, then one can replace the net $(X_\alpha)$ with a
sequence in the definition of the (super) Fatou property. Indeed, by the countable sup property, if a net $(X_\alpha)$
(unbounded-)order converges to $X$, then we find a countable family $(\alpha_n)$ such that $(X_{\alpha_n})$ (unbounded-)order
converges to $X$ as well.

\smallskip

A simple characterization of the above properties in terms of sublevel sets is recorded in the next result. The simple proof is
omitted.

\begin{lemma}
\label{lem: properties rho}
For a functional $\rho:\cX\to\overline{\R}$ and $S\in\cX_+\backslash\{0\}$ the following hold:
\begin{enumerate}
  \item[(a)] $\rho$ is monotone iff $\{\rho\leq m\}$ is monotone for each $m\in\R$.
  \item[(b)] $\rho$ is quasiconvex iff $\{\rho\leq m\}$ is convex for each $m\in\R$.
  \item[(c)] $\rho$ is $S$-additive iff $\{\rho\leq m\}=\{\rho\leq0\}-mS$ for each $m\in\R$.
  \item[(d)] $\rho$ has the Fatou property iff $\{\rho\leq m\}$ is order closed for each $m\in\R$.
  \item[(e)] $\rho$ has the super Fatou property iff $\{\rho\leq m\}$ is uo closed for each $m\in\R$.
\end{enumerate}
\end{lemma}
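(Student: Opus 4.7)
The plan is to prove each of the five equivalences separately; all follow the same template of translating a pointwise condition on $\rho$ into a set-theoretic condition on its sublevel sets $\{\rho\leq m\}$, with nothing deeper than unwinding definitions. I will write out $(a)$ and $(b)$ as warm-ups, then indicate the small twist needed for $(c)$, and finally recall the standard sublevel-set characterization of lower semicontinuity that handles $(d)$ and $(e)$.

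For $(a)$, if every $\{\rho\leq m\}$ is monotone, then for $X\geq Y$ and any real $m>\rho(Y)$ we have $Y\in\{\rho\leq m\}$, hence $X\in\{\rho\leq m\}$, so $\rho(X)\leq m$; letting $m\downarrow\rho(Y)$ (and handling $\rho(Y)=\pm\infty$ trivially) gives $\rho(X)\leq\rho(Y)$. Conversely, monotonicity of $\rho$ immediately makes each sublevel set monotone. For $(b)$, if each sublevel set is convex, then given $X,Y\in\cX$ we set $m=\max(\rho(X),\rho(Y))$ (interpreted suitably if infinite), observe that $X,Y\in\{\rho\leq m\}$, and conclude $\lambda X+(1-\lambda)Y\in\{\rho\leq m\}$. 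The converse is the defining inequality of quasiconvexity applied at the level $m$.

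For $(c)$, assume $\rho$ is $S$-additive. Then $Y\in\{\rho\leq 0\}-mS$ means $Y+mS\in\{\rho\leq 0\}$, i.e., $\rho(Y+mS)\leq 0$, which by $S$-additivity is $\rho(Y)-m\leq 0$, i.e., $Y\in\{\rho\leq m\}$. Conversely, if the sublevel sets satisfy $\{\rho\leq m\}=\{\rho\leq 0\}-mS$ for every $m$, then for a fixed $Y$ and any real $m'$ one applies the identity at level $m'$ to the shifted element $Y+mS$:
\[
\rho(Y+mS)\leq m' \iff Y+mS+m'S\in\{\rho\leq 0\} \iff \rho(Y)\leq m+m',
\]
so $\rho(Y+mS)\leq m'$ holds iff $\rho(Y)-m\leq m'$, which forces $\rho(Y+mS)=\rho(Y)-m$ with the usual conventions on $\pm\infty$. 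This is the only part where one has to be mildly careful.

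For $(d)$ and $(e)$ I will invoke the standard characterization: a functional valued in $\overline\R$ is lower semicontinuous with respect to any given notion of convergence of nets if and only if each of its sublevel sets is closed for that notion. Concretely, if $\{\rho\leq m\}$ is order closed for every $m$ and $X_\alpha\xrightarrow{o}X$, then for any real $c>\liminf_\alpha\rho(X_\alpha)$ eventually $X_\alpha\in\{\rho\leq c\}$, whence $X\in\{\rho\leq c\}$ by order closedness, giving $\rho(X)\leq c$ and then $\rho(X)\leq\liminf_\alpha\rho(X_\alpha)$. Conversely, if $\rho$ has the Fatou property and $(X_\alpha)\subset\{\rho\leq m\}$ with $X_\alpha\xrightarrow{o}X$, then $\rho(X)\leq\liminf_\alpha\rho(X_\alpha)\leq m$, so $X\in\{\rho\leq m\}$. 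The exact same argument, verbatim with order convergence replaced by unbounded-order convergence, gives $(e)$. There is no serious obstacle; the only care point, as noted above, is the conversion of the set identity in $(c)$ back into the functional equation, and the routine handling of infinite values of $\rho$ in $(a)$.
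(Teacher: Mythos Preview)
The paper omits the proof of this lemma entirely (``The simple proof is omitted''), so there is nothing to compare against; your argument is the standard one and is essentially correct. One small slip: in your treatment of $(d)$ and $(e)$, from $c>\liminf_\alpha\rho(X_\alpha)$ you only get that $X_\alpha\in\{\rho\leq c\}$ \emph{frequently}, not eventually---pass to the cofinal subnet of indices where $\rho(X_\alpha)\leq c$, note that cofinal subnets inherit (unbounded-)order convergence to $X$, and then apply closedness of the sublevel set.
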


\smallskip

%%%%%%%%%%%%%%%%%%%%%%%%%%%%%%%%%%%%%%%%%%%

\subsection{Surplus invariance}

We start by focusing on surplus-invariant monotone maps.

\begin{definition}
A functional $\rho:\cX\to\overline{\R}$ is {\em surplus invariant} if for every $X\in\cX$
\[
\rho(X) = \rho(-X^-).
\]
\end{definition}

\smallskip

The notion of surplus invariance for risk functionals has been studied under the name of {\em loss-based property} in Cont et
al.~(2013) and of {\em excess invariance} in Staum (2013) in the context of spaces of bounded random variables. Interestingly enough,
surplus invariance was already listed among the defining properties of a coherent risk measure in a Technical Report of Cornell
University by Artzner et al.~(1996), where it was stated that a good solvency metric, in line with a long-rooted actuarial tradition,
should be indifferent to the size of profits and only focus on losses. We will see later why surplus invariance eventually disappeared
from the axioms of coherence in Artzner et al.~(1999).

\smallskip

The next simple result shows that a functional is surplus invariant precisely when each of its sublevel sets is surplus invariant.

\begin{proposition}
For a functional $\rho:\cX\to\overline{\R}$ the following are equivalent:
\begin{enumerate}
	\item[(a)] $\rho$ is surplus invariant.
	\item[(b)] $\{\rho\leq m\}$ is surplus invariant for each $m\in\R$.
\end{enumerate}
\end{proposition}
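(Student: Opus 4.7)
The plan is to exploit the elementary observation that, for every $X\in\cX$, the elements $X$ and $-X^-$ share the same negative part, since $(-X^-)^-=X^-$. Surplus invariance of a set is precisely a statement about elements sharing a common negative part, so this observation is the bridge between the functional-level and set-level formulations.

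First, I would do the direction (a) $\Rightarrow$ (b). Fix $m\in\R$ and suppose $X\in\{\rho\leq m\}$ and $Y\in\cX$ satisfies $X^-=Y^-$. Using surplus invariance of $\rho$ twice, $\rho(Y)=\rho(-Y^-)=\rho(-X^-)=\rho(X)\leq m$, so $Y\in\{\rho\leq m\}$. This gives surplus invariance of $\{\rho\leq m\}$.

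For (b) $\Rightarrow$ (a), fix $X\in\cX$. The goal is to verify $\rho(X)=\rho(-X^-)$ by double inequality, using that $X^-=(-X^-)^-$. If $\rho(X)\in\R$, then $X\in\{\rho\leq\rho(X)\}$, and surplus invariance of this sublevel set forces $-X^-\in\{\rho\leq\rho(X)\}$, i.e.~$\rho(-X^-)\leq\rho(X)$. Symmetrically, if $\rho(-X^-)\in\R$, then $\rho(X)\leq\rho(-X^-)$, and the two real cases combine. The infinite cases are equally direct: if $\rho(X)=-\infty$, then $X\in\{\rho\leq m\}$ for every $m\in\R$, hence $-X^-\in\{\rho\leq m\}$ for every $m$, giving $\rho(-X^-)=-\infty$; and if $\rho(X)=+\infty$, then assuming $\rho(-X^-)\leq m$ for some $m\in\R$ would, by surplus invariance of $\{\rho\leq m\}$, force $X\in\{\rho\leq m\}$, a contradiction, so $\rho(-X^-)=+\infty$. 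The reverse assignments handle the remaining infinite values of $\rho(-X^-)$.

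There is no real obstacle here; the only mild care needed is to treat the extended-real values $\pm\infty$ separately rather than blindly plugging them into inequalities. As the excerpt itself announces, the proof is short and is justifiably omitted in many references, but the case split above covers the full generality of $\rho:\cX\to\overline{\R}$.
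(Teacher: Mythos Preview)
Your proof is correct and follows essentially the same approach as the paper. The only difference is that, for (b)$\Rightarrow$(a), the paper compresses your case analysis into the single identity $\rho(X)=\inf\{m\in\R \,; \ X\in\{\rho\leq m\}\}$, which handles all extended-real values at once.
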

\begin{proof}
It is clear that {\em (a)} implies {\em (b)}. To show the converse implication, assume that {\em (b)} holds and note that $\rho(X) =
\inf\{m\in\R \,; \ X\in\{\rho\leq m\}\}$ for every $X\in\cX$. Since $\{\rho\leq m\}$ is surplus invariant for each $m\in\R$, we easily
see that $\rho(X)=\rho(-X^-)$ for every $X\in\cX$ and therefore {\em (a)} holds.
\end{proof}

\smallskip

The next result provides a characterization of the Fatou property for surplus-invariant monotone functionals that are quasiconvex. The
statement is a direct consequence of Theorem~\ref{closed-top}.

\begin{theorem}
\label{s-i functionals}
For a quasiconvex surplus-invariant monotone functional $\rho:\cX\to\R\cup\{\infty\}$ the
following statements are equivalent:
\begin{enumerate}
  \item[(a)] $\rho$ has the Fatou property.
  \item[(b)] $\rho$ has the super Fatou property.
\end{enumerate}
If $\cX^\sim_n$ is separating, then the above are also equivalent to:
\begin{enumerate}
  \item[(c)] $\rho$ is $\sigma(\cX,\cI)$ lower semicontinuous for every separating ideal $\cI\subset\cX^\sim_n$.
  \item[(d)] $\rho$ is $\sigma(\cX,\cI)$ lower semicontinuous for some separating ideal $\cI\subset\cX^\sim_n$.
\end{enumerate}
\end{theorem}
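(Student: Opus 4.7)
The plan is to reduce this theorem to Theorem~\ref{closed-top} via the sublevel-set characterizations of Lemma~\ref{lem: properties rho}. Fix $m\in\R$ and set $\cA_m=\{\rho\leq m\}$. By parts (a) and (b) of Lemma~\ref{lem: properties rho}, monotonicity and quasiconvexity of $\rho$ give that each $\cA_m$ is monotone and convex; and by the proposition preceding this theorem, surplus invariance of $\rho$ yields that each $\cA_m$ is surplus invariant. Thus every sublevel set of $\rho$ falls exactly in the setting of Theorem~\ref{closed-top}.

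For the equivalence \emph{(a)}$\Leftrightarrow$\emph{(b)}: apply parts (d) and (e) of Lemma~\ref{lem: properties rho} to translate the (super) Fatou property into order (resp.\ uo) closedness of every $\cA_m$. Since each $\cA_m$ is a convex, surplus-invariant, monotone set, the equivalence \emph{(a)}$\Leftrightarrow$\emph{(b)} of Theorem~\ref{closed-top} gives order closedness iff uo closedness of every $\cA_m$, and hence the desired equivalence at the level of $\rho$.

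For the equivalences involving \emph{(c)} and \emph{(d)} under the additional hypothesis that $\cX^\sim_n$ is separating: invoke the standard fact that a functional on a topological vector space is $\tau$-lower semicontinuous if and only if each of its sublevel sets is $\tau$-closed (the relevant topologies here are $\sigma(\cX,\cI)$, which are Hausdorff precisely because $\cI$ is separating). Consequently \emph{(c)} (respectively \emph{(d)}) is equivalent to $\cA_m$ being $\sigma(\cX,\cI)$-closed for every $m$ and every (respectively some) separating ideal $\cI\subset\cX^\sim_n$. The equivalences \emph{(a)}$\Leftrightarrow$\emph{(c)}$\Leftrightarrow$\emph{(d)} in Theorem~\ref{closed-top} then transport directly to statements about $\rho$.

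There is no real obstacle: all the substantive work lies in Theorem~\ref{closed-top}, and the present proof is bookkeeping built on Lemma~\ref{lem: properties rho} and the sublevel-set characterization of lower semicontinuity. The only point requiring a brief comment is that sublevel sets can be empty (for sufficiently negative $m$), but the empty set is order closed, uo closed, and $\sigma(\cX,\cI)$-closed, so this edge case is harmless across all four statements.
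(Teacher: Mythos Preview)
Your proposal is correct and follows exactly the approach of the paper, which states only that the result ``is a direct consequence of Theorem~\ref{closed-top}'' without spelling out the sublevel-set reduction. Your version simply makes explicit the bookkeeping via Lemma~\ref{lem: properties rho} and the preceding proposition, which is precisely what the paper leaves implicit.
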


\smallskip

\begin{remark}
\label{rem: dual representation}
(i) In general, the smaller the ideal $\cI$, the stronger the statement in point {\em (d)}. Under the assumptions of the theorem,
point {\em (c)} says that all the choices of $\cI$ have the same strength. Thus, one can choose $\cI$ as small as possible, or as
convenient as needed.

\smallskip

(ii) It follows from classical convex duality that, if $\rho$ is additionally convex instead of being just quasiconvex, then {\em (d)}
is equivalent to
\begin{equation}
\label{eq: dual representation}
\rho(X) = \sup_{\varphi\in\cI}\big(\varphi(X)-\rho^\ast(\varphi)\big)
\end{equation}
for every $X\in\cX$, where
\[
\rho^\ast(\varphi) = \sup_{X\in\cX}\big(\varphi(X)-\rho(X)\big).
\]
Note that, since $\rho$ is monotone, $\rho^\ast(\varphi)<\infty$ implies that $\varphi$ is negative. Thus, we can restrict the
supremum in the representation of $\rho$ to the set $\cI_-$. Moreover, in view of surplus invariance, we can equivalently write
\[
\rho(X) = \sup_{\varphi\in\cI_-}\big(\varphi(-X^-)-\rho^\ast(\varphi)\big)
\]
for every $X\in\cX$. In general, it is desirable to choose a small set of dual elements $\cI$ to yield a more tractable dual
representation. We will demonstrate the power of this result in the next section in both classical and robust models of financial
positions.
\end{remark}

\smallskip

We proceed to discussing extension properties of surplus-invariant functionals. In particular, we show that every functional defined
on the principal ideal generated by a weak order unit in $\cX$ can be always extended to the whole of $\cX$ by preserving surplus
invariance, monotonicity, and the (super) Fatou property. An element $U\in\cX_+$ is said to be a {\em weak order unit} if
$\abs{X}\wedge nU\uparrow\abs{X}$ for every $X\in\cX$. The smallest ideal containing $U$ is denoted by $\cI_U$.

\begin{lemma}
\label{lem: unique surplus invariant extension}
Let $U$ be a weak order unit of $\cX$ and consider two surplus-invariant monotone functionals with the Fatou property
$\rho_1,\rho_2:\cX\to\R\cup\{\infty\}$. Then, $\rho_1=\rho_2$ on $\cI_U$ implies $\rho_1=\rho_2$ on $\cX$.
\end{lemma}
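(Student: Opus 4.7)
The plan is to reduce the claim, via surplus invariance, to showing that $\rho_1$ and $\rho_2$ agree on $-\cX_+$, and then to recover an arbitrary $-Y$ with $Y\in\cX_+$ as an order limit of elements of $\cI_U$ on which the two functionals already coincide.

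First I would fix $X\in\cX$ and exploit surplus invariance to write $\rho_i(X)=\rho_i(-X^-)$ for $i=1,2$. Since $X^-\in\cX_+$, it is enough to prove that $\rho_1(-Y)=\rho_2(-Y)$ for every $Y\in\cX_+$. Next I would use that $U$ is a weak order unit to obtain $Y\wedge nU\uparrow Y$. The elements $Y\wedge nU$ satisfy $0\leq Y\wedge nU\leq nU$, so they lie in the principal ideal $\cI_U$, and consequently $-(Y\wedge nU)\in\cI_U$ for every $n\in\N$.

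The key step is to pass from $\cI_U$ to $\cX$ by combining monotonicity with the Fatou property. From $Y\wedge nU\uparrow Y$ we get $-(Y\wedge nU)\downarrow -Y$, hence in particular $-(Y\wedge nU)\xrightarrow{o}-Y$. Monotonicity of $\rho_i$ gives $\rho_i(-(Y\wedge nU))\leq \rho_i(-Y)$ for every $n$, so $\limsup_n\rho_i(-(Y\wedge nU))\leq \rho_i(-Y)$. The Fatou property applied to the order-convergent sequence yields the reverse inequality $\rho_i(-Y)\leq\liminf_n\rho_i(-(Y\wedge nU))$. Combining the two,
\[
\rho_i(-Y)=\lim_{n\to\infty}\rho_i(-(Y\wedge nU)).
\]
Since $-(Y\wedge nU)\in\cI_U$ and $\rho_1=\rho_2$ on $\cI_U$ by hypothesis, the two limits coincide, so $\rho_1(-Y)=\rho_2(-Y)$. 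Reinserting this into the initial reduction gives $\rho_1(X)=\rho_2(X)$ for every $X\in\cX$.

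There is no real obstacle here; the argument is a short interplay of the three defining properties, and the only point requiring a little care is the bookkeeping of signs when converting $Y\wedge nU\uparrow Y$ into a downward-convergent net in $-\cX_+$ so that monotonicity and the Fatou property squeeze $\rho_i(-Y)$ against the values on $\cI_U$.
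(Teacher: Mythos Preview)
Your proof is correct and follows essentially the same approach as the paper: reduce via surplus invariance to negative elements, approximate $-X^-$ by the truncations $-(X^-\wedge nU)\in\cI_U$, and squeeze $\rho_i(-X^-)$ between $\limsup$ and $\liminf$ using monotonicity and the Fatou property. The only cosmetic difference is that you pass through a general $Y\in\cX_+$ rather than working directly with $X^-$.
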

\begin{proof}
Fix $X\in\cX$. Since $(X^-\wedge nU)\leq X^-$ for every $n\in\N$ and $X^-\wedge nU\xrightarrow{o}X^-$, we have
\[
\limsup_{n\to\infty}\rho_1(-X^-\wedge nU) \leq \rho_1(-X^-) \leq \liminf_{n\to\infty}\rho_1(-X^-\wedge nU).
\]
As a result, surplus invariance yields
\begin{equation}
\label{eq: extension}
\rho_1(X) = \lim_{n\to\infty}\rho_1(-X^-\wedge nU).
\end{equation}
Clearly, the same holds for $\rho_2$. Since $-X^-\wedge nU\in\cI_U$ for all $n\in\N$ and $\rho_1$ coincides with $\rho_2$ on the
principal ideal $\cI_U$, we infer that $\rho_1(X)=\rho_2(X)$.
\end{proof}

\smallskip

\begin{theorem}
\label{theo: general extension}
Let $U$ be a weak order unit of $\cX$. Then, every surplus-invariant monotone functional with the Fatou property
$\rho:\cI_U\to\R\cup\{\infty\}$ can be uniquely extended to a surplus-invariant monotone functional with the Fatou property
$\overline{\rho}:\cX\to\R\cup\{\infty\}$. In particular, for every $X\in\cX$ we have
\[
\overline{\rho}(X) = \lim_{n\to\infty}\rho(-X^-\wedge nU).
\]
In addition, if $\rho$ is (quasi)convex, then $\overline{\rho}$ is (quasi)convex as well.
\end{theorem}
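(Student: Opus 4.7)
The plan is to define $\overline\rho$ by the displayed formula, verify directly that it inherits surplus invariance, monotonicity, the Fatou property, and (quasi)convexity from $\rho$, and then appeal to Lemma~\ref{lem: unique surplus invariant extension} for uniqueness. The formula is well-defined because $X^-\wedge nU\uparrow X^-$ in the order sense, so the sequence $(-(X^-\wedge nU))_n\subset\cI_U$ is decreasing; monotonicity of $\rho$ then forces $\rho(-(X^-\wedge nU))$ to be a non-decreasing sequence in $\R\cup\{\infty\}$, so its limit exists.

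Surplus invariance, the extension property, monotonicity and the Fatou property for $\overline\rho$ are fairly routine. Surplus invariance is immediate because the defining formula depends only on $X^-$. If $X\in\cI_U$ then $X^-\in\cI_U$, so $X^-\leq mU$ for some $m\in\N$, which gives $X^-\wedge nU=X^-$ for all $n\geq m$; hence $\overline\rho(X)=\rho(-X^-)=\rho(X)$ by surplus invariance of $\rho$ on $\cI_U$. Monotonicity follows because $X\leq Y$ implies $X^-\geq Y^-$, whence $-(X^-\wedge nU)\leq -(Y^-\wedge nU)$, so the monotonicity of $\rho$ gives the inequality term-by-term and we pass to the limit. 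For the Fatou property, whenever $X_\alpha\xrightarrow{o}X$, the order continuity of the lattice operations yields $-(X_\alpha^-\wedge nU)\xrightarrow{o}-(X^-\wedge nU)$ inside $\cI_U$ for every fixed $n$; the Fatou property of $\rho$ on $\cI_U$ then delivers
\[
\rho(-(X^-\wedge nU))\leq\liminf_\alpha\rho(-(X_\alpha^-\wedge nU))\leq\liminf_\alpha\overline\rho(X_\alpha),
\]
and letting $n\to\infty$ yields the Fatou property for $\overline\rho$.

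The preservation of (quasi)convexity is the main obstacle, because for generic $X,Y\in\cX$ the combination $\lambda X+(1-\lambda)Y$ need not lie in $\cI_U$, so the defining inequality of (quasi)convexity of $\rho$ cannot be applied directly. The idea is to reduce to the case of negative arguments and then approximate from inside $\cI_U$, using the already-established Fatou property of $\overline\rho$ as a lifting mechanism. First, using surplus invariance and monotonicity of $\overline\rho$ together with the elementary estimate $(\lambda X+(1-\lambda)Y)^-\leq\lambda X^-+(1-\lambda)Y^-$, one obtains
\[
\overline\rho(\lambda X+(1-\lambda)Y)\leq\overline\rho(\lambda(-X^-)+(1-\lambda)(-Y^-)).
\]
Next, set $\xi_n:=-(X^-\wedge nU)$ and $\eta_n:=-(Y^-\wedge nU)$, which lie in $\cI_U$ and satisfy $\xi_n\xrightarrow{o}-X^-$, $\eta_n\xrightarrow{o}-Y^-$, so that $\lambda\xi_n+(1-\lambda)\eta_n\in\cI_U$ order-converges to $\lambda(-X^-)+(1-\lambda)(-Y^-)$. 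Applying the Fatou property of $\overline\rho$ to this net and invoking quasiconvexity of $\rho$ on $\cI_U$ gives
\[
\overline\rho(\lambda(-X^-)+(1-\lambda)(-Y^-))\leq\liminf_n\rho(\lambda\xi_n+(1-\lambda)\eta_n)\leq\liminf_n\max(\rho(\xi_n),\rho(\eta_n))=\max(\overline\rho(X),\overline\rho(Y)),
\]
where the final equality uses that $\rho(\xi_n)\uparrow\overline\rho(X)$ and $\rho(\eta_n)\uparrow\overline\rho(Y)$ by definition combined with surplus invariance. The convex case is identical, with $\max(\rho(\xi_n),\rho(\eta_n))$ replaced by $\lambda\rho(\xi_n)+(1-\lambda)\rho(\eta_n)$.
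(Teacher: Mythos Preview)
Your proof is essentially correct and follows the same overall strategy as the paper, but there is one point that you gloss over and that the paper handles with care. When you assert that $X_\alpha\xrightarrow{o}X$ in $\cX$ yields $-(X_\alpha^-\wedge nU)\xrightarrow{o}-(X^-\wedge nU)$ ``inside $\cI_U$'', you are implicitly claiming that order convergence in $\cX$ of a net contained in $\cI_U$ implies order convergence in $\cI_U$. This is not automatic: the definition of order convergence in $\cI_U$ demands a dominating net that lives in $\cI_U$ and decreases to zero \emph{in} $\cI_U$, whereas from convergence in $\cX$ you only obtain a dominating net $(Y_\beta)\subset\cX$. The paper closes this gap by observing that $\lvert X_\alpha^-\wedge nU-X^-\wedge nU\rvert\leq nU$ always holds, so one may replace $Y_\beta$ by $Y_\beta\wedge nU\in\cI_U$, and then check that $Y_\beta\wedge nU\downarrow 0$ in $\cI_U$. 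You should add this sentence.

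Your treatment of (quasi)convexity differs slightly from the paper's and is arguably cleaner. The paper approximates $X$ and $Y$ by the truncations $X_n=X^+\wedge nU-X^-\wedge nU$ (so that $X_n^-=X^-\wedge nU$) and applies the Fatou property of $\overline\rho$ directly to $\lambda X_n+(1-\lambda)Y_n\xrightarrow{o}\lambda X+(1-\lambda)Y$. You instead first reduce to negative arguments via the convexity estimate $(\lambda X+(1-\lambda)Y)^-\leq\lambda X^-+(1-\lambda)Y^-$ combined with surplus invariance and monotonicity of $\overline\rho$, and only then approximate. Both routes work; yours avoids introducing the positive parts at all. Your argument for the extension property (eventual constancy of the sequence once $n\geq m$) is also more direct than the paper's, which invokes the Fatou property of $\rho$ on $\cI_U$.
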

\begin{proof}
The functional $\overline{\rho}:\cX\to\R\cup\{\infty\}$ is well-defined because $-X^-\wedge nU\in\cI_U$ for all $X\in\cX$ and $n\in\N$
and $\rho(-X^-\wedge nU)\uparrow\overline{\rho}(X)$ for every $X\in\cX$. In view of \eqref{eq: extension} applied to $\rho$, it
follows that $\overline{\rho}$ is an extension of $\rho$. It is immediate to see that $\overline{\rho}$ is surplus invariant and
monotone. To show that $\overline{\rho}$ has the Fatou property, consider for every $n\in\N$ the functional
$\rho_n:\cX\to\R\cup\{\infty\}$ defined by
\[
\rho_n(X) = \rho(-X^-\wedge nU)
\]
for $X\in\cX$. Take a net $(X_\alpha)\subset\cX$ such that $X_\alpha\xrightarrow{o}X$ for some $X\in\cX$. Then, we have
$X_\alpha^-\wedge nU\xrightarrow{o}X^-\wedge nU$. In particular, we find a net $(Y_\beta)\subset\cX_+$ with $Y_\beta\downarrow0$ in $\cX$ and such that for every
$\beta$ there exists $\alpha_0$ satisfying $\abs{X_\alpha^-\wedge nU-X^-\wedge nU}\leq Y_\beta$ for all $\alpha\geq\alpha_0$. Since we
clearly have $\abs{X_\alpha^-\wedge nU-X^-\wedge nU}\leq nU$ for every $\alpha$, it follows that
\[
\abs{X_\alpha^-\wedge nU-X^-\wedge nU}\leq Y_\beta\wedge nU
\]
for all $\alpha\geq\alpha_0$. We now infer from $Y_\beta\wedge nU\downarrow0$ in $\cI_U$ that $X_\alpha^-\wedge
nU\xrightarrow{o}X^-\wedge nU$ in $\cI_U$. As a result, the Fatou property of $\rho$ implies that
\[
\rho_n(X) = \rho(-X^-\wedge nU) \leq \liminf_\alpha\rho(-X_\alpha^-\wedge nU) = \liminf_\alpha\rho_n(X_\alpha).
\]
This shows that $\rho_n$ has the Fatou property. Since for every $m\in\R$ we have
\[
\{\overline{\rho}\leq m\} = \bigcap_{n\in\N}\{\rho_n\leq m\},
\]
it follows that $\overline{\rho}$ also has the Fatou property by Lemma \ref{lem: properties rho}. It remains to prove that
$\overline{\rho}$ is the unique surplus-invariant monotone extension of $\rho$ with the Fatou property. But this is a direct
consequence of Lemma~\ref{lem: unique surplus invariant extension}.

\smallskip

Finally, we show that the extension preserves (quasi)convexity. To this end, assume first that $\rho$ is quasiconvex. Pick any
$X,Y\in\cX$ and $\lambda\in[0,1]$. For every $n\in\N$ define $X_n=X^+\wedge nU-X^-\wedge nU$ and $Y_n=Y^+\wedge nU-Y^-\wedge nU$ and
observe that $X_n^-=X^-\wedge nU$ as well as $Y_n^-=Y^-\wedge nU$. Clearly, $X_n\xrightarrow{o}X$ and $Y_n\xrightarrow{o}Y$, so that
$\lambda X_n+(1-\lambda) Y_n\xrightarrow{o}\lambda X+(1-\lambda) Y$. Thus, by the Fatou property of $\overline{\rho}$, we have
\begin{align*}
&\overline{\rho}(\lambda X+(1-\lambda)Y)\\
\leq &\liminf_{n\to\infty} \overline{\rho}(\lambda X_n+(1-\lambda)Y_n)=\liminf_{n\to\infty} \rho(\lambda X_n+(1-\lambda)Y_n)\\
\leq & \liminf_{n\to\infty} \max(\rho(X_n),\rho(Y_n))=\liminf_{n\to\infty} \max(\rho(-X_n^-),\rho(-Y_n^-))\\
=&\liminf_{n\to\infty}\max(\rho(-X^-\wedge nU),\rho(-Y^-\wedge nU))=\max(\overline{\rho}(X),\overline{\rho}(Y)),
\end{align*}
where in the second inequality we have used the quasiconvexity of $\rho$. This proves that $\overline{\rho}$ is quasiconvex. The proof
of convexity can be obtained along the same lines.
\end{proof}

\smallskip

The next corollary provides a dual formulation of the above extension in the convex case. For this to be possible, we need to require
that $\cX^\sim_n$ is separating.

\begin{corollary}
\label{cor: convex extension}
Let $U$ be a weak order unit of $\cX$ and assume that $\cI$ is a separating ideal of $\cX^\sim_n$. Let $\rho:\cI_U\to\R\cup\{\infty\}$
be a convex surplus-invariant monotone functional with the Fatou property and let $\overline{\rho}:\cX\to\R\cup\{\infty\}$ be its
unique extension as determined in Theorem \ref{theo: general extension}. Then, for every $X\in\cX$ we have
\[
\overline{\rho}(X) = \sup_{\varphi\in\cI_-}\big(\varphi(-X^-)-\rho^\ast(\varphi)\big),
\]
where
\[
\rho^\ast(\varphi) = \sup_{X\in\cI_U}\big(\varphi(X)-\rho(X)\big).
\]
\end{corollary}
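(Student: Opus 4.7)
The plan is to deduce the dual formula from the general dual representation in Remark~\ref{rem: dual representation} applied to $\overline{\rho}$, and then identify the two conjugates.

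First, I would check that $\overline{\rho}$ meets the hypotheses of Theorem~\ref{s-i functionals} together with the convex part of Remark~\ref{rem: dual representation}. By Theorem~\ref{theo: general extension}, $\overline{\rho}$ is a convex, surplus-invariant, monotone functional with the Fatou property on $\cX$, and $\cX_n^\sim$ is separating because it contains the separating ideal $\cI$. Hence Theorem~\ref{s-i functionals} together with Remark~\ref{rem: dual representation}(ii) yields, for every $X\in\cX$,
\[
\overline{\rho}(X) = \sup_{\varphi\in\cI_-}\bigl(\varphi(-X^-)-\overline{\rho}^\ast(\varphi)\bigr),
\]
where $\overline{\rho}^\ast(\varphi)=\sup_{Y\in\cX}(\varphi(Y)-\overline{\rho}(Y))$.

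The only remaining task is therefore to prove that $\overline{\rho}^\ast(\varphi)=\rho^\ast(\varphi)$ for every $\varphi\in\cI_-$. The easy inequality $\rho^\ast(\varphi)\leq\overline{\rho}^\ast(\varphi)$ is immediate since $\overline{\rho}$ restricts to $\rho$ on $\cI_U$. For the converse, I would exploit the explicit formula $\overline{\rho}(X)=\lim_{n\to\infty}\rho(-X^-\wedge nU)$ from Theorem~\ref{theo: general extension} together with surplus invariance of $\overline{\rho}$ and the fact that $\varphi\leq 0$ gives $\varphi(Y)\leq\varphi(-Y^-)$. For any $Y\in\cX$ and $\varphi\in\cI_-$,
\[
\varphi(Y)-\overline{\rho}(Y)\leq \varphi(-Y^-)-\overline{\rho}(-Y^-).
\]
Since $\varphi\in\cX_n^\sim$ is order continuous and $-Y^-\wedge nU\xrightarrow{o}-Y^-$, we have $\varphi(-Y^-\wedge nU)\to\varphi(-Y^-)$. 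Combining this with $\rho(-Y^-\wedge nU)\to\overline{\rho}(-Y^-)$ and the fact that $-Y^-\wedge nU\in\cI_U$,
\[
\varphi(-Y^-)-\overline{\rho}(-Y^-)=\lim_{n\to\infty}\bigl(\varphi(-Y^-\wedge nU)-\rho(-Y^-\wedge nU)\bigr)\leq \rho^\ast(\varphi).
\]
Taking the supremum over $Y\in\cX$ gives $\overline{\rho}^\ast(\varphi)\leq\rho^\ast(\varphi)$, and the two conjugates agree.

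The main obstacle, to the extent that there is one, is the identification $\overline{\rho}^\ast=\rho^\ast$ on $\cI_-$; everything else is a direct appeal to earlier results. This identification is not purely formal because $\cI_U$ is generally a proper subset of $\cX$, so one has to verify that negative order-continuous functionals see no extra mass outside $\cI_U$. The inequality $\varphi(Y)\leq\varphi(-Y^-)$ for $\varphi\leq 0$ together with the approximation $-Y^-\wedge nU\uparrow -Y^-$ (in order) and the order continuity of $\varphi$ supplies precisely what is needed.
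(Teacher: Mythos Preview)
Your proof is correct and follows essentially the same route as the paper's: apply Remark~\ref{rem: dual representation} to $\overline{\rho}$ and then show that the conjugate over $\cX$ coincides with the conjugate over $\cI_U$ via the explicit approximation formula from Theorem~\ref{theo: general extension} and order continuity of $\varphi$. The only minor difference is that the paper approximates an arbitrary $X$ directly by $X_n=X^+\wedge nU-X^-\wedge nU\in\cI_U$, whereas you first reduce to $-Y^-$ using $\varphi\le 0$ and surplus invariance before approximating; both arrive at the same limit computation. (One small slip: $-(Y^-\wedge nU)$ \emph{decreases} to $-Y^-$, not increases, but this does not affect the order-convergence argument.)
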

\begin{proof}
It follows from Remark \ref{rem: dual representation} that $\overline{\rho}$ has the desired representation with
\[
\rho^\ast(\varphi) = \sup_{X\in\cX}\big(\varphi(X)-\overline{\rho}(X)\big)
\]
for $\varphi\in\cI_-$. Thus, it remains to show that
\[
\sup_{X\in\cX}\big(\varphi(X)-\overline{\rho}(X)\big) = \sup_{X\in\cI_U}\big(\varphi(X)-\rho(X)\big)
\]
for every fixed $\varphi\in\cI_-$. Clearly, we only need to establish the inequality ``$\leq$''. To this effect, take $X\in\cX$ and
set $X_n=X^+\wedge nU-X^-\wedge nU$ for all $n\in\N$. Note that $X_n\in\cI_U$ for every $n\in\N$. Since $X_n\xrightarrow{o}X^+-X^-=X$,
we have $\varphi(X_n)\to\varphi(X)$ and
\[
\rho(X_n) = \rho(-X_n^-) = \rho(-X^-\wedge nU) \to \overline{\rho}(X).
\]
This implies that the inequality ``$\leq$'' holds and concludes the proof.
\end{proof}

\smallskip

%%%%%%%%%%%%%%%%%%%%%%%%%%%%%%%%%%%%%%%%%%%%%%

\subsection{Surplus invariance subject to positivity}

In this section we focus on a weaker notion of surplus invariance that is compatible with $S$-additivity. As such, this notion is
particularly appealing in a capital adequacy context. From now on we fix an element $S\in\cX_+\backslash\{0\}$.

\smallskip

We start by highlighting that, as remarked in Staum~(2013), there is a fundamental tension between surplus invariance and
$S$-additivity.

\begin{proposition}
There exists no monotone functional $\rho:\cX\to\R\cup\{\infty\}$ that is surplus invariant, $S$-additive, and such that
$\rho(X)\in\R$ for some $X\in\cX$.
\end{proposition}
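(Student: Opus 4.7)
The plan is to derive a contradiction from the three axioms by first pinning down the value $\rho(0)$ and then exploiting the fact that surplus invariance collapses every position into a negative one. Suppose, for contradiction, that $\rho:\cX\to\R\cup\{\infty\}$ is monotone, surplus invariant, $S$-additive, and satisfies $\rho(X_0)\in\R$ for some $X_0\in\cX$.

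First I would exploit surplus invariance on the positive cone. For every $Y\in\cX_+$ we have $Y^-=0$, so surplus invariance gives $\rho(Y)=\rho(-Y^-)=\rho(0)$. Applied to $Y=S\in\cX_+\setminus\{0\}$, this yields $\rho(S)=\rho(0)$. On the other hand, $S$-additivity with $X=0$ and $m=1$ gives $\rho(S)=\rho(0)-1$. Combining the two identities forces the equation $\rho(0)=\rho(0)-1$ inside $\R\cup\{\infty\}$, which is solvable only by $\rho(0)=\infty$.

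Next I would propagate this infinity throughout the negative cone. For every $Z\in\cX_-$ we have $Z\leq0$, so monotonicity gives $\rho(Z)\geq\rho(0)=\infty$, i.e.\ $\rho\equiv\infty$ on $\cX_-$. Finally, surplus invariance applied to $X_0$ yields $\rho(X_0)=\rho(-X_0^-)$; since $-X_0^-\in\cX_-$, the previous step forces $\rho(-X_0^-)=\infty$, contradicting $\rho(X_0)\in\R$.

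There is essentially no obstacle: the only subtlety is observing that surplus invariance is already extremely rigid on $\cX_+$ (it forces $\rho$ to be constant there), so the incompatibility with $S$-additivity surfaces immediately at the single point $Y=S$. The role of monotonicity is only to transfer the pathological value $\rho(0)=\infty$ down to $-X_0^-$.
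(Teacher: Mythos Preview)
Your proof is correct. The paper's proof is a one-line variant of the same idea: for arbitrary $X$ and $m$ it chains $S$-additivity, surplus invariance, and monotonicity as
\[
\rho(X)-m=\rho(X+mS)=\rho\bigl(-(X+mS)^-\bigr)\geq\rho(0),
\]
and lets $m\to\infty$ to conclude $\rho(X)=\infty$ for every $X$. You instead first isolate the contradiction at the single point $S$ (using that surplus invariance makes $\rho$ constant on $\cX_+$) to pin down $\rho(0)=\infty$, and then propagate via monotonicity and surplus invariance. The two arguments use the same three ingredients in the same order; yours is just slightly more granular and makes explicit the observation that $\rho$ is constant on the positive cone, whereas the paper absorbs this into a single chain of inequalities.
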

\begin{proof}
If $\rho$ satisfies the above properties, then we must have
\[
\rho(X)-m = \rho(X+mS) = \rho(-(X+mS)^-) \geq \rho(0)
\]
for every $X\in\cX$ and $m\in\R$, which is only possible if $\rho(X)=\infty$ for all $X\in\cX$.
\end{proof}

\smallskip

The previous result shows that one has to make a choice between surplus invariance and $S$-additivity. As discussed above, the
property of $S$-additivity plays a crucial role in a capital adequacy context because of its operational interpretation. This explains
why surplus invariance, which was initially listed among the defining axioms of a coherent risk measure in Artzner et al.~(1996), was
eventually replaced by $S$-additivity in Artzner et al.~(1999). The key idea to reconcile surplus invariance and $S$-additivity in a
capital adequacy framework is to observe that a capital requirement rule should depend on the default profile of a financial position
only when the position is {\em not acceptable} from a regulatory perspective and thus a capital injection is actually required. This
leads to introducing the following weaker form of surplus invariance.

\begin{definition}
A functional $\rho:\cX\to\overline{\R}$ is {\em surplus invariant subject to positivity} if for every
$X\in\cX$ we have
\[
\rho(X)>0 \ \implies \ \rho(X)=\rho(-X^-).
\]
\end{definition}

\smallskip

The notion of surplus invariance subject to positivity was introduced, in a slightly different form, in Koch-Medina et al.~(2015) in
the setting of bounded univariate positions. By adapting Proposition 3.10 in that paper, one can show that, for $S$-additive monotone
functionals, surplus invariance subject to positivity corresponds to the surplus invariance of the underlying ``acceptance set''.

\begin{proposition}
\label{prop: characterization si subject to pos}
For an $S$-additive monotone functional $\rho:\cX\to\R\cup\{\infty\}$ the following are equivalent:
\begin{enumerate}
  \item[(a)] $\rho$ is surplus invariant subject to positivity.
  \item[(b)] $\{\rho\leq0\}$ is surplus invariant.
\end{enumerate}
\end{proposition}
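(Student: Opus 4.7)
My plan is to exploit $S$-additivity to shift $X$ by a multiple of $S$ so that, in one direction, the positivity hypothesis of (a) becomes applicable, and in the other direction, the shifted element lies in $\cA=\{\rho\leq0\}$. The remaining work combines monotonicity with an elementary lattice inequality relating $(X+mS)^-$ and $X^-$.

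\textbf{For (a)\,$\Rightarrow$\,(b).} By Lemma~\ref{lem: properties rho} the set $\{\rho\leq0\}$ is monotone, so Proposition~\ref{s-i}(c) reduces surplus invariance of $\{\rho\leq0\}$ to showing that $\rho(X)\leq0$ implies $\rho(-X^-)\leq0$. Fix $\epsilon>0$ and set $m_\epsilon=-\rho(X)+\epsilon>0$. By $S$-additivity, $\rho(X-m_\epsilon S)=\rho(X)+m_\epsilon=\epsilon>0$, so (a) gives $\rho(-(X-m_\epsilon S)^-)=\epsilon$. Since $X-m_\epsilon S\leq X$ implies $-(X-m_\epsilon S)^-\leq-X^-$, monotonicity of $\rho$ yields $\rho(-X^-)\leq\rho(-(X-m_\epsilon S)^-)=\epsilon$. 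Letting $\epsilon\downarrow0$ gives $\rho(-X^-)\leq0$.

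\textbf{For (b)\,$\Rightarrow$\,(a).} Take $X$ with $\rho(X)>0$. Since $-X^-\leq X$, monotonicity gives the lower bound $\rho(-X^-)\geq\rho(X)$; in particular, the case $\rho(X)=+\infty$ is immediate, so I may assume $\rho(X)\in(0,\infty)$. For the reverse bound, fix $\epsilon>0$ and set $m_\epsilon=\rho(X)+\epsilon>0$. Then $\rho(X+m_\epsilon S)=-\epsilon\leq0$, so $X+m_\epsilon S\in\cA$, and Proposition~\ref{s-i}(c) applied to the surplus-invariant monotone set $\cA$ yields $-(X+m_\epsilon S)^-\in\cA$, i.e., $\rho(-(X+m_\epsilon S)^-)\leq0$. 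The key lattice step is the inequality
\[
(X+m_\epsilon S)^-\geq X^- - m_\epsilon S,
\]
which follows from the subadditivity $(U+V)^+\leq U^++V^+$ in $\cX$ applied to $U=-(X+m_\epsilon S)$ and $V=m_\epsilon S\geq0$ (note $U+V=-X$). Consequently $-(X+m_\epsilon S)^-\leq-X^-+m_\epsilon S$, and monotonicity combined with $S$-additivity give
\[
\rho(-X^-)-m_\epsilon=\rho(-X^-+m_\epsilon S)\leq\rho(-(X+m_\epsilon S)^-)\leq0,
\]
so $\rho(-X^-)\leq\rho(X)+\epsilon$. Letting $\epsilon\downarrow0$ gives $\rho(-X^-)\leq\rho(X)$, completing the equivalence.

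The only nonroutine ingredient is the lattice inequality $(X+m_\epsilon S)^-\geq X^- - m_\epsilon S$ used in the second direction; it is elementary but not recorded explicitly in the paper and constitutes the subtlest step of the plan. Everything else reduces to $S$-additivity, monotonicity, and Proposition~\ref{s-i}.
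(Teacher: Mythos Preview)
Your proof is correct and follows essentially the same approach as the paper's: both directions use an $S$-additive shift to land in the regime where the relevant hypothesis applies, followed by monotonicity and the lattice inequality $-X^-+mS\geq-(X+mS)^-$. The paper states this inequality without proof, so your explicit justification via subadditivity of the positive part is a minor expository addition rather than a different idea.
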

\begin{proof}
Assume that {\em (a)} holds and take $X\in\{\rho\leq0\}$. Note that $\rho(X)\in\R$. Then, for every $\varepsilon>0$ we have
\[
\rho(X+(\rho(X)-\varepsilon)S) = \rho(X)-(\rho(X)-\varepsilon) = \varepsilon > 0
\]
by $S$-additivity. It follows from our assumption that
\[
\rho(-(X+(\rho(X)-\varepsilon)S)^-) = \rho(X+(\rho(X)-\varepsilon)S) = \varepsilon.
\]
Since $-X^-\geq-(X+(\rho(X)-\varepsilon)S)^-$, we infer from monotonicity that
\[
\rho(-X^-) \leq \rho(-(X+(\rho(X)-\varepsilon)S)^-) = \varepsilon.
\]
This holds for every $\varepsilon>0$ and yields $\rho(-X^-)\leq0$, proving that {\em (b)} is satisfied.

\smallskip

Conversely, assume that {\em (b)} holds and take an arbitrary $X\in\cX$ with
$\rho(X)>0$. Note that $\rho(X)\leq\rho(-X^-)$ by monotonicity. If $\rho(X)=\infty$, then
\[
\rho(X) = \rho(-X^-) = \infty.
\]
Hence, assume that $\rho(X)<\infty$ and take an arbitrary $m>\rho(X)$. Since $\rho(X+mS)=\rho(X)-m<0$, it follows from surplus
invariance that $-(X+mS)^-\in\{\rho\leq0\}$. Then, the monotonicity of $\{\rho\leq0\}$ and the fact that $-X^-+mS\geq-(X+mS)^-$ imply
that $-X^-+mS\in\{\rho\leq0\}$, which in turn yields $\rho(-X^-)\leq m$ by virtue of $S$-additivity. Since $m$ was arbitrary, we infer
that
\[
\rho(X) \geq \rho(-X^-).
\]
As said above, the converse inequality holds by monotonicity and we therefore conclude that $\rho(X)=\rho(-X^-)$, proving
that {\em (a)} holds.
\end{proof}

\smallskip

Since each sublevel set of an $S$-additive risk measure is simply a translate of the zero sublevel set, see Lemma~\ref{lem: properties
rho}, we can apply Theorem~\ref{closed-top} to immediately derive the following characterization of the Fatou property under surplus
invariance subject to positivity. Note that a quasiconvex $S$-additive functional is automatically convex. This is why, differently
from Theorem~\ref{s-i functionals} above, we state the next theorem for convex functionals.

\begin{theorem}
\label{theo: dual representation S additive}
For a convex $S$-additive monotone functional $\rho:\cX\to\R\cup\{\infty\}$ that is
surplus invariant subject to positivity the following statements are equivalent:
\begin{enumerate}
  \item[(a)] $\rho$ has the Fatou property.
  \item[(b)] $\rho$ has the super Fatou property.
\end{enumerate}
If $\cX^\sim_n$ is separating, then the above are also equivalent to:
\begin{enumerate}
  \item[(c)] $\rho$ is $\sigma(\cX,\cI)$ lower semicontinuous for every separating ideal $\cI\subset\cX^\sim_n$.
  \item[(d)] $\rho$ is $\sigma(\cX,\cI)$ lower semicontinuous for some separating ideal $\cI\subset\cX^\sim_n$.
\end{enumerate}
\end{theorem}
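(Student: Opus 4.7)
The plan is to reduce the statement to a direct application of Theorem~\ref{closed-top} to the single sublevel set $\{\rho\leq 0\}$, using $S$-additivity to handle all other sublevel sets simultaneously.

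First I would invoke Lemma~\ref{lem: properties rho}~{\em (c)}, which gives $\{\rho\leq m\}=\{\rho\leq 0\}-mS$ for every $m\in\R$ by virtue of $S$-additivity. Consequently, every sublevel set is simply a translate of the zero sublevel set. Since order convergence, unbounded-order convergence, and $\sigma(\cX,\cI)$ convergence are all stable under translation by a fixed vector, it follows that $\{\rho\leq m\}$ is order closed (respectively uo closed, $\sigma(\cX,\cI)$ closed) if and only if $\{\rho\leq 0\}$ is. In view of the characterizations of the Fatou and super Fatou properties in Lemma~\ref{lem: properties rho}~{\em (d)} and~{\em (e)} (and the analogous elementary fact that $\sigma(\cX,\cI)$ lower semicontinuity corresponds to $\sigma(\cX,\cI)$ closedness of every sublevel set), all four conditions {\em (a)}--{\em (d)} can be rephrased purely in terms of a closedness property of the single set $\{\rho\leq 0\}$.

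Next I would verify that $\{\rho\leq 0\}$ is a convex, surplus-invariant, monotone subset of $\cX$. Convexity is inherited from the convexity of $\rho$ via Lemma~\ref{lem: properties rho}~{\em (b)}, monotonicity from Lemma~\ref{lem: properties rho}~{\em (a)}, and surplus invariance from Proposition~\ref{prop: characterization si subject to pos}, which is precisely where the hypothesis of surplus invariance subject to positivity enters.

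Finally, Theorem~\ref{closed-top}, applied to $\cA=\{\rho\leq 0\}$, immediately yields the equivalence of {\em (a)} and {\em (b)} with no additional assumption, as well as the equivalence with {\em (c)} and {\em (d)} under the hypothesis that $\cX^\sim_n$ is separating. There is no real obstacle in this argument: all the work has been done in establishing Proposition~\ref{prop: characterization si subject to pos} and Theorem~\ref{closed-top}, and the only mild observation needed is that the three notions of closedness in play are translation invariant, so that the $S$-additive structure allows us to pass from the family of all sublevel sets to the single representative $\{\rho\leq 0\}$.
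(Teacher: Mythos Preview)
Your proposal is correct and follows exactly the approach indicated in the paper: reduce to the zero sublevel set via the translation identity from Lemma~\ref{lem: properties rho}~{\em (c)}, check that $\{\rho\leq 0\}$ is convex, monotone, and surplus invariant (the latter via Proposition~\ref{prop: characterization si subject to pos}), and then apply Theorem~\ref{closed-top}. You have simply spelled out the details that the paper leaves implicit in the sentence preceding the theorem.
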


\smallskip

\begin{remark}
\label{rem: dual representation S additive}
A variant of Remark \ref{rem: dual representation} applies to the above theorem. In this case, the condition
$\rho^\ast(\varphi)<\infty$ is easily seen to imply $\varphi(S)=-1$, so that the dual domain in the representation \eqref{eq: dual
representation} can be restricted to those functionals in $\cI$ that are negative and satisfy $\varphi(S)=-1$.
\end{remark}

\smallskip

The preceding extension results can still be obtained under the weaker property of surplus invariance subject to positivity in the
context of $S$-additive functionals.

\begin{lemma}
\label{lem: unique extension si subject to pos}
Let $U$ be a weak order unit of $\cX$ and consider two monotone $S$-additive functionals $\rho_1,\rho_2:\cX\to\R\cup\{\infty\}$ that
are surplus invariant subject to positivity and have the Fatou property. Then, $\rho_1=\rho_2$ on $\cI_U$ implies $\rho_1=\rho_2$ on
$\cX$.
\end{lemma}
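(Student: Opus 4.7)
The plan is to mimic the proof of Lemma~\ref{lem: unique surplus invariant extension}, but to work around the fact that the identity $\rho(X)=\rho(-X^-)$ is now available only on the region $\{\rho>0\}$. The trick will be to use $S$-additivity to shift every position $X\in\cX$ into that region, so that surplus invariance can be applied exactly as before.

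Fix $X\in\cX$. If $\rho_1(X)=\rho_2(X)=\infty$ there is nothing to prove, so assume $\min(\rho_1(X),\rho_2(X))<\infty$ and pick any $m\in\R$ with $m<\min(\rho_1(X),\rho_2(X))$. By $S$-additivity,
\[
\rho_i(X+mS)=\rho_i(X)-m>0\qquad(i=1,2),
\]
so surplus invariance subject to positivity yields $\rho_i(X+mS)=\rho_i(-(X+mS)^-)$ for both $i$. This is the step that replaces the direct use of surplus invariance in Lemma~\ref{lem: unique surplus invariant extension}.

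Next I would approximate $-(X+mS)^-$ by elements of $\cI_U$ via the weak order unit $U$. Setting $Y_n=-((X+mS)^-\wedge nU)$, the defining property of a weak order unit gives $(X+mS)^-\wedge nU\uparrow (X+mS)^-$, hence $Y_n\downarrow -(X+mS)^-$ in order, with each $Y_n$ lying in $\cI_U$. Combining monotonicity of $\rho_i$ with the Fatou property, exactly as in the derivation of \eqref{eq: extension}, I get
\[
\rho_i(-(X+mS)^-)=\lim_{n\to\infty}\rho_i(Y_n)\qquad(i=1,2).
\]

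Since $Y_n\in\cI_U$ and $\rho_1=\rho_2$ on $\cI_U$ by hypothesis, the two limits coincide, so $\rho_1(-(X+mS)^-)=\rho_2(-(X+mS)^-)$, i.e.\ $\rho_1(X+mS)=\rho_2(X+mS)$, and undoing the shift via $S$-additivity gives $\rho_1(X)=\rho_2(X)$. The only real obstacle relative to the earlier lemma is the initial positivity constraint in the definition of surplus invariance subject to positivity; choosing $m$ sufficiently negative and invoking $S$-additivity removes it, after which the Fatou argument on $\cI_U$ proceeds verbatim.
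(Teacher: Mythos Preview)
Your proof is correct and follows essentially the same approach as the paper: shift $X$ by a multiple of $S$ via $S$-additivity so that both $\rho_i$ become strictly positive, then invoke surplus invariance subject to positivity and repeat the approximation argument of Lemma~\ref{lem: unique surplus invariant extension}. The paper writes the shift as $X-mS$ with $m>0$ large (rather than $X+mS$ with $m$ sufficiently negative) and simply refers back to Lemma~\ref{lem: unique surplus invariant extension} for the rest, but the content is identical.
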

\begin{proof}
Fix $X\in\cX$ and choose $m>0$ large enough so that
\[
\rho_1(X-mS)=\rho_1(X)+m>0 \ \ \ \mbox{and} \ \ \ \rho_2(X-mS)=\rho_2(X)+m>0.
\]
Then, a similar argument as in Lemma~\ref{lem: unique surplus invariant extension} shows that $\rho_1(X-mS)=\rho_2(X-mS)$, which
yields $\rho_1(X)=\rho_2(X)$ by $S$-additivity.
\end{proof}

\smallskip

For every set $\cA\subset\cX$ we denote by $\cl_o(\cA)$ the {\em order closure} of $\cA$, i.e.\ the set of all limits of order-convergent nets consisting of elements of $\cA$.

\begin{theorem}
\label{theo: extension si subject pos}
Let $U$ be a weak order unit of $\cX$ and assume $S\in(\cI_U)_+\backslash\{0\}$. Then, every monotone $S$-additive functional
$\rho:\cI_U\to\R\cup\{\infty\}$ that is surplus invariant subject to positivity and has the Fatou property can be uniquely extended to
a monotone $S$-additive functional $\overline{\rho}:\cX\to\R\cup\{\infty\}$ that is surplus invariant subject to positivity and has
the Fatou property. In particular, for every $X\in\cX$ we have
\[
\overline{\rho}(X) = \inf\{m\in\R \,; \ -(X+mS)^-\in\cl_o(\cA_-)\},
\]
where $\cA=\{\rho\leq 0\}$. In addition, if $\rho$ is convex, then $\overline{\rho}$ is convex as well.
\end{theorem}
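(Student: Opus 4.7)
The plan is to realize $\overline{\rho}$ as the risk measure associated to a canonical extension of the acceptance set $\cA=\{\rho\leq 0\}$ to the full lattice $\cX$, then read off all of the claimed properties from this acceptance set. Uniqueness is immediate from Lemma~\ref{lem: unique extension si subject to pos}, so only existence requires work. Proposition~\ref{prop: characterization si subject to pos} gives that $\cA$ is a surplus-invariant monotone subset of $\cI_U$, Lemma~\ref{lem: properties rho}(d) together with Proposition~\ref{prop: closedness} ensures $\cA_-$ is order closed in $\cI_U$, and $\cD:=-\cA_-$ is solid in $(\cI_U)_+$ by Proposition~\ref{s-i}, hence in $\cX_+$ since $\cI_U$ is an ideal.

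I would then define the candidate extended acceptance set
\[
\overline{\cA}=\{X\in\cX\,;\ -X^-\in\cl_o(\cA_-)\},
\]
which is surplus invariant by Proposition~\ref{s-i}(b). Monotonicity of $\overline{\cA}$ follows directly from the solidity of $\cl_o(\cD)$ in $\cX_+$ (itself inherited from solidity of $\cD$ via the order continuity of $\wedge$: if $W_\alpha\xrightarrow{o}W$ in $\cD$ and $0\leq V\leq W$, then $V\wedge W_\alpha\in\cD$ order-converges to $V$). The compatibility $\overline{\cA}\cap\cI_U=\cA$ reduces to showing that $Y\in(\cI_U)_-\cap\cl_o(\cA_-)$ implies $Y\in\cA_-$: given an approximating net $(Y_\alpha)\subset\cA_-$ with $Y_\alpha\xrightarrow{o}Y$ in $\cX$, replacing $Y_\alpha$ by $Y_\alpha\vee Y\in\cA_-$ produces a net order-bounded between $Y$ and $0$, hence contained in $\cI_U$, still converging to $Y$ in $\cX$; replacing any dominating net $W_\beta\downarrow 0$ in $\cX$ by $W_\beta\wedge 2|Y|\downarrow 0$ upgrades this to order convergence in $\cI_U$, and the Fatou property of $\rho$ then forces $Y\in\cA_-$. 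The delicate point is order closedness of $\overline{\cA}$, equivalently of $\cl_o(\cA_-)$ as a subset of $\cX_-$; for this I would apply Lemma~\ref{con}(b) to the solid set $\cD$ in $\cX_+$, identifying $\cl_o(\cD)\cap\cX_+$ with the set of suprema of increasing nets from $\cD$, and then combine this with a truncation along the weak order unit $U$ (writing any candidate limit $V\leq 0$ as $V=\lim_n V\vee(-nU)$ with $V\vee(-nU)\in(\cI_U)_-$) to pass any order-convergent net within $\cl_o(\cD)$ to its limit.

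With $\overline{\cA}$ surplus-invariant, monotone and order closed in hand, set $\overline{\rho}(X)=\inf\{m\in\R\,;\ X+mS\in\overline{\cA}\}$; since $X+mS\in\overline{\cA}$ iff $-(X+mS)^-\in\cl_o(\cA_-)$ this matches the displayed formula. $S$-additivity follows by a change of variable in the infimum, monotonicity of $\overline{\rho}$ follows from monotonicity of $\overline{\cA}$, and the identification $\{\overline{\rho}\leq 0\}=\overline{\cA}$ uses that $X+\varepsilon S\in\overline{\cA}$ for every $\varepsilon>0$ (by monotonicity of $\overline{\cA}$) together with $X+\varepsilon S\xrightarrow{o}X$ as $\varepsilon\downarrow 0$ and the order closedness of $\overline{\cA}$; all sublevel sets are then translates of $\overline{\cA}$ via Lemma~\ref{lem: properties rho}(c) and hence order closed, yielding the Fatou property by Lemma~\ref{lem: properties rho}(d). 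Surplus invariance subject to positivity is Proposition~\ref{prop: characterization si subject to pos} applied to $\overline{\cA}$, while the extension property $\overline{\rho}|_{\cI_U}=\rho$ follows from $\overline{\cA}\cap\cI_U=\cA$ combined with $\rho(X)=\inf\{m\,;\ X+mS\in\cA\}$ (a consequence of the $S$-additivity of $\rho$). Finally, if $\rho$ is convex then $\cA$, hence $\cA_-$, hence $\cl_o(\cA_-)$ (since order limits preserve convex combinations) and $\overline{\cA}$ are all convex, transferring to $\overline{\rho}$. The main obstacle is the order closedness of $\overline{\cA}$: the one-step definition of $\cl_o$ need not be idempotent in general, and the argument must genuinely exploit Lemma~\ref{con}(b) together with the truncation along $U$ in order to promote order-convergent nets to suprema of increasing nets in $\cD$.
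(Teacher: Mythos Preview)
Your overall strategy matches the paper's: extend the acceptance set $\cA$ to $\cX$ via the order closure of $\cD=-\cA_-$, define $\overline{\rho}$ as the associated $S$-additive functional, and read off the properties. Your argument for $\overline{\cA}\cap\cI_U=\cA$ is correct and close to the paper's, and your sketch for the order closedness of $\cl_o(\cD)$ via truncation along $U$ can in fact be made to work (one shows $V\in\cl_o(\cA_-)$ iff $V\vee(-nU)\in\cA_-$ for all $n$, using solidity of $\cl_o(\cD)$ and your compatibility step, and then order closedness is immediate from order closedness of $\cA_-$ in $\cI_U$). The paper instead argues via Lemma~\ref{con}(b) applied to the \emph{solid set $\cl_o(\cD)$}: any order-convergent net in $\cl_o(\cD)$ may be replaced by an increasing one, each term of which is the supremum of a directed subset of $\cD$, and the union of those directed sets is directed with the right supremum. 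Your truncation route is arguably more direct and avoids the second layer of Lemma~\ref{con}(b).

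There is, however, a genuine gap: you never verify that $\overline{\rho}(X)>-\infty$ for all $X\in\cX$, which is part of the statement (the codomain is $\R\cup\{\infty\}$) and is not automatic. The paper devotes a full paragraph to this. Assuming $\overline{\rho}(X)=-\infty$, monotonicity of $\overline{\cA}$ gives $X^+-nS\in\overline{\cA}$ and hence $(nS-X^+)^+\in\cl_o(\cD)$ for all $n$; one then uses the compatibility $\cl_o(\cD)\cap\cI_U=\cD$ and solidity to deduce $n_0(S-\tfrac{X^+}{n})^+\in\cD$ for $n\geq n_0$, passes to the order limit in $\cI_U$ to get $n_0S\in\cD$ for every $n_0$, and concludes $\rho(0)\leq\rho(-n_0S)=\rho(0)-n_0\to-\infty$, contradicting $\rho:\cI_U\to\R\cup\{\infty\}$. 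Without this step, your $\overline{\rho}$ is a priori only $\overline{\R}$-valued, and several downstream invocations (Proposition~\ref{prop: characterization si subject to pos}, Lemma~\ref{lem: properties rho}, and the uniqueness lemma) are stated for $\R\cup\{\infty\}$-valued functionals.
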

\begin{proof}
Set $\cA=\{\rho\leq 0\}$. Being monotone and surplus invariant as well as order closed, it follows from Proposition \ref{s-i} and
Proposition \ref{prop: closedness} that $\cA$ can be decomposed as $\cA=(\cI_U)_+-\cD$, where $\cD=-\cA_-$ is order closed and solid
in $(\cI_U)_+$. Now, define
\[
\cD' = \{X\in\cX \,; \ \mbox{there exists a net $(X_\alpha)\subset\cD$ such that $X_\alpha\xrightarrow{o}X$ in $\cX$}\}.
\]
Since $\cD\subset\cX_+$, it is clear that $\cD'\subset\cX_+$. We claim that $\cD'$ is solid in $\cX_+$. To see this, take $X\in\cD'$
and $Y\in\cX_+$ such that $Y\leq X$. Note that, being solid in $(\cI_U)_+$, the set $\cD$ is solid in $\cX_+$ because $\cI_U$ is an
ideal of $\cX$. Thus, by Lemma \ref{con}, there exists a net $(X_\alpha)\subset\cD$ such that $X_\alpha\uparrow X$ in $\cX$, so that
$Y\wedge X_\alpha\uparrow Y\wedge X=Y$ in $\cX$. Since $Y\wedge X_\alpha\in\cD$ for all $\alpha$ by solidity of $\cD$ in $\cX_+$, we
conclude that $Y\in\cD'$.

\smallskip

Next, we show that $\cD'$ is order closed. Let $(X_\alpha)\subset\cD'$ and $X\in\cX$ be such that $X_\alpha\xrightarrow{o}X$. In
view of Lemma \ref{con} and of the solidity of $\cD'$ in $\cX_+$, we may assume that $X_\alpha\uparrow X$. Another application of
Lemma \ref{con} shows that, for each $\alpha$, there exists a subset $\cZ_\alpha\subset\cD$ that is directed upward and satisfies
$\sup\cZ_\alpha=X_\alpha$ in $\cX$. Then, the set $\bigcup_\alpha \cZ_\alpha\subset \cD$ is also directed upward (and thus can be
treated as an increasing net indexed over itself) and $\sup\bigcup_\alpha\cZ_\alpha=\sup_\alpha\sup \cZ_\alpha=\sup_\alpha X_\alpha=X$
in $\cX$. It follows that $X\in\cD'$ by definition of $\cD'$. This establishes order closedness.

\smallskip

We claim that $\cI_U\cap\cD'=\cD$. It is clear that $\cD\subset\cI_U\cap\cD'$. Conversely, take any $X\in\cI_U\cap\cD'$. Then, there
exists a net $(X_\alpha)\subset \cD$ such that $X_\alpha\uparrow X$ in $\cX$. Since $X\in\cI_U$ and since
$(X_\alpha)\subset\cD\subset\cI_U$ and $\cI_U$ is an ideal of $\cX$, it is easily verified that $X_\alpha\uparrow X$ in $\cI_U$.
Therefore, $X\in \cD$ because $\cD$ is order closed in $\cI_U$.

\smallskip

Now, set $\cA'=\cX_+-\cD'$ and observe that $\cA'$ is monotone, surplus invariant, and order closed by Proposition \ref{s-i} and
Proposition \ref{prop: closedness}. Then, for every $X\in\cX$ define
\begin{equation}
\label{eq: extension S additive case}
\overline{\rho}(X) = \inf\{m\in\R \,; \ X+mS\in\cA'\}.
\end{equation}
First of all, we show that $\overline{\rho}(X)>-\infty $ for every $X\in\cX$. To the contrary, suppose that
$\overline{\rho}(X)=-\infty$ for some $X\in\cX$ so that $X-nS\in\cA'$ for all $n\in\N$ by the monotonicity of $\cA'$. Note that
$X^+-nS\in\cA'$ and thus $(nS-X^+)^+=(X^+-nS)^-\in\cD'$ for all $n\in\N$. Now, fix $n_0\in\N$. Then, for every choice of $n\geq n_0$
we easily see that $n_0(S-\frac{X^+}{n})^+\leq n(S-\frac{X^+}{n})^+=(nS-X^+)^+\in\cD'$ as well as $n_0(S-\frac{X^+}{n})^+\leq n_0S\in
\cI_U$. Since $\cD'\cap \cI_U=\cD$, as established above, we infer from solidity that $n_0(S-\frac{X^+}{n})^+\in\cD$. Since
$n_0(S-\frac{X^+}{n})^+\xrightarrow{o}n_0S$ in $\cI_U$ and $\cD$ is order closed in $\cI_U$, it follows that
$n_0S\in\cD$ and, hence, $-n_0S\in\cA$. But this would hold for every $n_0\in\N$, which would lead to the contradiction
$\rho(0)=-\infty$. As a result, we must have $\overline{\rho}(X)>-\infty $ for every $X\in\cX$.

\smallskip

We claim that $\overline{\rho}$ has all the desired properties. To show that $\overline{\rho}$ extends $\rho$, take an arbitrary
$X\in\cI_U$ and note that for every $m\in\R$ we have $X+mS\in\cI_U$ and thus $(X+mS)^-\in\cI_U$. This implies that $(X+mS)^-\in\cD'$
is equivalent to $(X+mS)^-\in \cD$. Since $(X+mS)^-\in\cD'$ is equivalent to $X+m S\in\cA'$ and $(X+mS)^-\in \cD$ is equivalent to
$X+mS\in \cA$, it follows from~\eqref{eq: S additivity} that $\overline{\rho}(X)=\rho(X)$.

\smallskip

It is clear that $\overline{\rho}$ is monotone and $S$-additive. Since $\cA'$ is surplus invariant and order closed, it follows from
Lemma~\ref{lem: properties rho} and Proposition~\ref{prop: characterization si subject to pos} that, to prove that $\overline{\rho}$
is surplus invariant subject to positivity and has the Fatou property, it suffices to show that $\cA'=\{\overline{\rho}\leq 0\}$. The
inclusion ``$\subset$'' is clear. To show the converse inclusion ``$\supset$'', take an arbitrary $X\in\{\overline{\rho}\leq 0\}$ and
note that $X+m_nS\in\cA'$ for a decreasing sequence $(m_n)\subset\R$ converging to $\overline{\rho}(X)$. Since $X+m_nS\downarrow
X+\overline{\rho}(X)S$, we infer that $X+\overline{\rho}(X)S\in\cA'$ by order closedness. Since $\overline{\rho}(X)\leq0$, the
monotonicity of $\cA'$ implies that $X\in\cA'$ and this yields the desired inclusion.

\smallskip

The uniqueness of the extension follows from Lemma~\ref{lem: unique extension si subject to pos}. Moreover, the above representation
of $\overline{\rho}$ follows from \eqref{eq: extension S additive case} and from Proposition \ref{s-i}. Finally, note that, if $\rho$
is convex, then $\cA$ is convex and so is $\cD$ by Corollary~\ref{cor: convexity}. As a result, one sees easily that $\cD'$, and
therefore $\cA'$, is also convex and this implies that $\overline{\rho}$ is convex as well.
\end{proof}

\smallskip

\begin{corollary}
Let $U$ be a weak order unit of $\cX$ and assume $S\in(\cI_U)_+\backslash\{0\}$. Moreover, assume that $\cI$ is a separating ideal of $\cX^\sim_n$. Let $\rho:\cI_U\to\R\cup\{\infty\}$ be a convex monotone $S$-additive functional that is surplus-invariant subject to positivity and has the Fatou property and let $\overline{\rho}:\cX\to\R\cup\{\infty\}$ be its unique extension as determined in Theorem \ref{theo: extension si subject pos}. Then, for every $X\in\cX$ we have
\[
\overline{\rho}(X) = \sup_{\varphi\in\cI_-,\,\varphi(S)=-1}\big(\varphi(X)-\rho^\ast(\varphi)\big),
\]
where
\[
\rho^\ast(\varphi) = \sup_{X\in\cI_U}\big(\varphi(X)-\rho(X)\big).
\]
\end{corollary}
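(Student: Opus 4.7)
The plan is to combine the general dual representation already available for $\overline{\rho}$ with an approximation argument that identifies the conjugate of $\overline{\rho}$ on $\cX$ with the conjugate of $\rho$ on $\cI_U$. By Theorem \ref{theo: extension si subject pos}, the extension $\overline{\rho}$ is convex, monotone, $S$-additive, surplus invariant subject to positivity and has the Fatou property. Applying Theorem \ref{theo: dual representation S additive} together with the variant of Remark \ref{rem: dual representation} contained in Remark \ref{rem: dual representation S additive}, I obtain
\[
\overline{\rho}(X) = \sup_{\varphi\in\cI_-,\,\varphi(S)=-1}\big(\varphi(X)-\overline{\rho}^\ast(\varphi)\big),
\]
where $\overline{\rho}^\ast(\varphi) = \sup_{Y\in\cX}(\varphi(Y)-\overline{\rho}(Y))$. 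So the only genuine task is to prove that $\overline{\rho}^\ast(\varphi) = \rho^\ast(\varphi)$ for every $\varphi \in \cI_-$ with $\varphi(S) = -1$.

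The inequality $\rho^\ast(\varphi) \leq \overline{\rho}^\ast(\varphi)$ is automatic, since $\overline{\rho}$ extends $\rho$ and the supremum in $\overline{\rho}^\ast(\varphi)$ is taken over a larger set. For the reverse inequality, I fix such a $\varphi$ and take $X \in \cX$ with $\overline{\rho}(X) \in \R$ (the case $\overline{\rho}(X) = \infty$ gives $\varphi(X) - \overline{\rho}(X) = -\infty$ and does not affect the supremum). The key step is to use $S$-additivity and surplus invariance subject to positivity to reduce to a negative element: I pick $m \in \R$ with $m < -\overline{\rho}(X)$, so that $\overline{\rho}(X+mS) > 0$ and hence $\overline{\rho}(X+mS) = \overline{\rho}(Y)$, where $Y := -(X+mS)^- \leq 0$. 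This yields $\overline{\rho}(Y) = \overline{\rho}(X) - m \in \R$.

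I then approximate by $Y_n := Y \vee (-nU)$. Since $\abs{Y_n} \leq nU$, each $Y_n$ lies in $\cI_U$, and $Y_n \downarrow Y$ in $\cX$ by the weak-order-unit property of $U$. Order continuity of $\varphi$ gives $\varphi(Y_n) \to \varphi(Y)$, while the Fatou property together with the monotonicity of $\overline{\rho}$ yields $\rho(Y_n) = \overline{\rho}(Y_n) \uparrow \overline{\rho}(Y)$. Setting $X_n := Y_n - mS \in \cI_U$ and using $\varphi(S) = -1$ and the $S$-additivity of $\rho$, I get
\[
\varphi(X_n) - \rho(X_n) = \varphi(Y_n) - \rho(Y_n) \to \varphi(Y) - \overline{\rho}(Y).
\]
Finally, decomposing $X + mS = (X+mS)^+ + Y$ and using $\varphi(S) = -1$, I arrive at
\[
\varphi(X) - \overline{\rho}(X) = \varphi\big((X+mS)^+\big) + \varphi(Y) - \overline{\rho}(Y) \leq \varphi(Y) - \overline{\rho}(Y),
\]
where the inequality holds because $\varphi \leq 0$ on $\cX_+$ and $(X+mS)^+ \in \cX_+$. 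Combining the two displays gives $\varphi(X) - \overline{\rho}(X) \leq \rho^\ast(\varphi)$, and taking the supremum over $X$ finishes the proof. The main obstacle is spotting the correct substitution: the extension formula in Theorem \ref{theo: extension si subject pos} is less direct than the one in Corollary \ref{cor: convex extension}, so one cannot approximate $X$ itself by elements of $\cI_U$ and must first use $S$-additivity together with surplus invariance subject to positivity to land in $\cX_-$, after which the approximation mirrors the one used in the proof of Corollary \ref{cor: convex extension}.
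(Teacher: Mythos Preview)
Your argument is essentially the same as the paper's: both start from the dual representation of $\overline{\rho}$ given by Theorem~\ref{theo: dual representation S additive} and Remark~\ref{rem: dual representation S additive}, and both reduce the task to showing $\overline{\rho}^\ast(\varphi)\le\rho^\ast(\varphi)$ by shifting $X$ with $mS$ into the region where surplus invariance subject to positivity applies, passing to $Y=-(X+mS)^-$, and approximating $Y$ by the truncations $Y_n=-\big((X+mS)^-\wedge nU\big)\in\cI_U$. The only slip is a sign error in your choice of $m$: you need $m<\overline{\rho}(X)$ (so that $\overline{\rho}(X+mS)=\overline{\rho}(X)-m>0$), not $m<-\overline{\rho}(X)$; with that correction your proof matches the paper's.
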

\begin{proof}
It follows from Theorem \ref{theo: dual representation S additive} and Remark \ref{rem: dual representation S additive} that
$\overline{\rho}$ has the desired representation with
\[
\rho^\ast(\varphi) = \sup_{X\in\cX}\big(\varphi(X)-\overline{\rho}(X)\big)
\]
for every $\varphi\in\cI_-$ such that $\varphi(S)=-1$. It remains to show that for every $\varphi\in\cI_-$ such that $\varphi(S)=-1$
we have
\[
\sup_{X\in\cX}\big(\varphi(X)-\overline{\rho}(X)\big) = \sup_{X\in\cI_U}\big(\varphi(X)-\rho(X)\big).
\]
Clearly, we only need to prove the inequality ``$\leq$''. To this effect, take arbitrary $X\in\cX$ and $m<\overline{\rho}(X)$ and set
$Y=-(X+mS)^-$ and $Y_n=-(X+mS)^-\wedge nU$ for $n\in\N$. Observe that $Y_n\in\cI_U$ for every $n\in\N$. Since $Y_n\downarrow Y$, we have $\varphi(Y_n)\to\varphi(Y)$. In addition, the Fatou property and monotonicity imply that
\[
\overline{\rho}(Y) \leq \liminf_{n\to\infty}\overline{\rho}(Y_n) \leq \limsup_{n\to\infty}\overline{\rho}(Y_n) \leq \overline{\rho}(Y),
\]
so that $\rho(Y_n)=\overline{\rho}(Y_n)\to\overline{\rho}(Y)$. Now, since $\overline{\rho}(X+mS)=\overline{\rho}(X)-m>0$, we have
$\overline{\rho}(X+mS)=\overline{\rho}(Y)$ and thus
\[
\varphi(X)-\overline{\rho}(X) = \varphi(X+mS)-\overline{\rho}(X+mS) \leq \varphi(Y)-\overline{\rho}(Y) =
\lim_{n\to\infty}\varphi(Y_n)-\rho(Y_n),
\]
where we used that $X+mS\geq Y$ and that $\varphi$ is negative and satisfies $\varphi(S)=-1$. The desired inequality follows
immediately.
\end{proof}

%%%%%%%%%%%%%%%%%%%%%%%%%%%%%%%%%%%%%%%%

\section{Applications}

In this section we apply our general results to a variety of concrete model spaces used in mathematical finance. Throughout this section we fix a measurable space $(\Omega,\cF)$ and equip $\R$ with its canonical Borel structure. The set of measurable functions $X:\Omega\to\R$ is denoted by $\cL^0$. The elements of $\cL^0$ represent financial positions (e.g.\ profit-and-losses, equity values, payoffs of financial assets or portfolios, monetary exposures) expressed in a given currency at a future pre-specified point in time. The subset of $\cL^0$ consisting of bounded measurable functions is denoted by $\cL^\infty$.

\medskip

{\bf Models with a fixed underlying probability}. Let $\probp$ be a probability measure on $(\Omega,\cF)$. The vector space of all
equivalence classes of functions in $\cL^0$ with respect to almost-sure equality (with respect to $\probp$) is denoted by $L^0$. As
usual, we will not distinguish between equivalence classes and their representative elements. The set $L^0$ is an order-complete vector lattice with the countable sup property under the partial order
\[
X\geq Y \ \iff \ \probp(X\geq Y)=1.
\]
Moreover, it becomes a complete metric space when equipped with the topology of convergence in probability. Let $\cX$ be an ideal of $L^0$. Each functional $\varphi\in\cX_n^\sim$ can be represented by some $Y\in L^0$ via
\[
\varphi(X)=\E_\probp[XY],
\]
where $Y$ satisfies $\E_\probp[\abs{XY}]<\infty$ for all $X\in\cX$ and is uniquely determined on the support of $\cX$. The converse also holds. It follows that $\cX_n^\sim$ can be identified with an ideal of $L^0$ and, thus, has the countable sup property. Note that for
every sequence $(X_n)\subset\cX$ and every $X\in\cX$ we have
\[
X_n\xrightarrow{o}X \ \iff \ \mbox{$\abs{X_n}\leq Y$ for some $Y\in\cX$ and
$X_n\xrightarrow{a.s.}X$}.
\]
In other words, order convergence corresponds to dominated almost-sure convergence. Similarly, we have
\[
X_n\xrightarrow{uo}X \ \iff \ X_n\xrightarrow{a.s.}X,
\]
showing that unbounded-order convergence corresponds to almost-sure convergence.

\smallskip

A function $\Phi:[0,\infty)\rightarrow[0,\infty]$ is called an Orlicz
function if it is convex, increasing, left-continuous, and satisfies $\Phi(0)=0$. The Orlicz space $L^\Phi$ is the ideal of $L^0$
consisting of all random variables
$X\in L^0$ such that
\[
\norm{X}_\Phi = \inf\{\lambda>0 \,; \ \E_\probp[\Phi(|X|/\lambda)]\leq 1\} < \infty.
\]
The function $\|\cdot\|_{\Phi}$ is a norm, called the Luxemburg norm. If $\Phi(x)=x^p$ for some $p\in[1,\infty)$, then we have
$L^\Phi=L^p$. If $\Phi(x)=0$ for $0\leq x\leq 1$ and $\Phi(x)=\infty$ for $x>1$, then we have $L^\Phi=L^\infty$. The Orlicz heart of
$L^\Phi$ is the ideal defined by
\[
H^\Phi = \{X\in L^\Phi \,; \ \mbox{$\E_\probp[\Phi(|X|/\lambda)]<\infty$ for every $\lambda>0$}\}.
\]
If $\Phi$ satisfies the so-called $\Delta_2$-condition, i.e.~there exist $x_0>0$ and $k\in\R$ such that $\Phi(2x)<k\Phi(x)$ for every
$x\geq x_0$, then $L^\Phi=H^\Phi$. The conjugate function of $\Phi$ is the Orlicz function $\Phi^\ast:[0,\infty)\rightarrow[0,\infty]$
given by
\[
\Phi^\ast(y) = \sup\{xy-\Phi(x) \,; \ x\geq 0\}.
\]
We have the standard identification $(L^\Phi)^\sim_n=L^{\Phi^\ast}$. In particular, we have $(L^p)^\sim_n=L^q$ where $1\leq p,q\leq
\infty$ and $\frac{1}{p}+\frac{1}{q}=1$. We refer to Edgar and Sucheston (1992) for more details about Orlicz spaces.

\medskip

{\bf Models with multiple underlying probabilities}. Consider a family $\cP$ of probabilities over $(\Omega,\cF)$ and the associated capacity
$c:\cF\to[0,1]$ defined by
\[
c(E) = \sup_{\probp\in\cP}\probp(E).
\]
We denote by $L^0_c$ the vector space of all equivalence classes of functions in $\cL^0$ with respect to quasi-sure equality (with
respect to $\cP$) defined by
\[
X\sim Y \ \iff \ \mbox{$\probp(X=Y)=1$ for every $\probp\in\cP$}.
\]
Once again, we do not distinguish between equivalence classes and their representative elements. The set $L^0_c$ is a vector lattice under the partial order defined by
\[
X\geq Y \ \iff \ \mbox{$\probp(X\geq Y)=1$ for every $\probp\in\cP$}.
\]
For $1\leq p\leq\infty$ we denote by $L^p_c$ the ideal of $L_c^0$ consisting of all $X\in L^0_c$ such that
\[
\|X\|_p=\sup_{\probp\in\cP}\|{X}\|_{L^p(\probp)}<\infty.
\]
The space $L^p_c$ is a Banach lattice with norm $\|\cdot\|_p$. It is known that many fundamental properties of classical $L^p$ spaces, most notably the countable sup property and order completeness, may fail in the robust setting as soon as the family $\cP$ is nondominated. In particular, the lack of the countable sup property makes it difficult to ensure the existence of a supremum for an arbitrary subset of $L^p_c$ that is bounded from above, unless the subset is countable (in which case the supremum in $L^p_c$ is simply given by the pointwise supremum). The following lemma provides a characterization of the order completeness of $L^0_c$ in terms of the order completeness of $L^p_c$ spaces.

\begin{lemma}
\label{lem: characterization order completeness robust}
The following statements are equivalent:
\begin{enumerate}
\item[(a)] $L^0_c$ is order complete.
\item[(b)] Every ideal of $L^0_c$ is order complete.
\item[(c)] $L^p_c$ is order complete for some $1\leq p<\infty$.
\item[(d)] $L^\infty_c$ is order complete.
\end{enumerate}
\end{lemma}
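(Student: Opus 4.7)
The plan is to organize the equivalences around the implications $(a)\Rightarrow(b)\Rightarrow(c)$ and $(b)\Rightarrow(d)$, which are essentially structural, and to prove both $(c)\Rightarrow(a)$ and $(d)\Rightarrow(a)$ by a single truncation-type argument that transfers the completion of order-bounded sets in $L^0_c$ to the corresponding problem in $L^p_c$ or $L^\infty_c$.

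For $(a)\Rightarrow(b)$, given an ideal $\cI\subset L^0_c$ and a nonempty subset $\cA\subset\cI$ bounded above in $\cI$ by some $Y\in\cI$, I would pick $X_0\in\cA$ and use (a) to get $Z=\sup\cA$ in $L^0_c$. Since $X_0\leq Z\leq Y$, the decomposition $Z=Z^+-Z^-$ together with $Z^+\leq Y^+$ and $Z^-\leq X_0^-$ gives $\abs{Z}\leq\abs{X_0}+\abs{Y}\in\cI$, so solidity forces $Z\in\cI$. Any upper bound of $\cA$ in $\cI$ is one in $L^0_c$ and hence dominates $Z$, so $Z$ is the sup in $\cI$. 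The implications $(b)\Rightarrow(c)$ and $(b)\Rightarrow(d)$ are immediate since $L^p_c$ and $L^\infty_c$ are ideals of $L^0_c$.

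The heart of the proof is the truncation argument for $(d)\Rightarrow(a)$ and $(c)\Rightarrow(a)$. I would fix once and for all a strictly increasing $C^1$ bijection $\phi:\R\to(-1,1)$, for instance $\phi(x)=x/(1+\abs{x})$, so that $\abs{\phi(X)}\leq\one$ quasi-surely for every $X\in L^0_c$; in particular $\phi(X)\in L^\infty_c\subset L^p_c$. Given a nonempty $\cA\subset L^0_c$ bounded above by some $Y\in L^0_c$, the set $\phi(\cA)$ is bounded above by $\phi(Y)$ and lies in the ambient ideal (either $L^\infty_c$ or $L^p_c$), so hypothesis (d) (resp.\ (c)) yields a supremum $Z$ of $\phi(\cA)$ computed inside that ideal. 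Three things then need to be checked: first, that $Z$ is also the supremum of $\phi(\cA)$ in $L^0_c$; second, that $-\one<Z<\one$ quasi-surely so that $\phi^{-1}(Z)\in L^0_c$ is well-defined; and third, that $\phi^{-1}(Z)$ is the desired supremum of $\cA$.

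For the first point, given any upper bound $W\in L^0_c$ of $\phi(\cA)$, the truncated element $W\wedge\phi(Y)$ is dominated in absolute value by $\abs{\phi(Y)}\in L^\infty_c\subset L^p_c$ and still dominates $\phi(X)=\phi(X)\wedge\phi(Y)$ for every $X\in\cA$; hence it sits in the relevant ideal as an upper bound of $\phi(\cA)$ there, so $W\geq W\wedge\phi(Y)\geq Z$ in $L^0_c$. The strict bounds on $Z$ follow from $\phi(X_0)\leq Z\leq\phi(Y)$ for $X_0\in\cA$, since $\abs{\phi(X_0)},\abs{\phi(Y)}<\one$ quasi-surely. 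Finally, monotonicity of $\phi$ and $\phi^{-1}$ on their natural domains transfers the supremum back: $\phi^{-1}(Z)\geq X$ for every $X\in\cA$, and any upper bound $W\in L^0_c$ of $\cA$ gives $\phi(W)\in L^\infty_c$ with $\phi(W)\geq\phi(X)$ for all $X\in\cA$, so $\phi(W)\geq Z$ and $W\geq\phi^{-1}(Z)$. The main technical obstacle is really this verification that the supremum computed inside an ideal agrees with the supremum in $L^0_c$; once the truncation $W\mapsto W\wedge\phi(Y)$ is used, everything reduces to routine monotonicity.
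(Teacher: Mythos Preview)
Your argument is correct, with one cosmetic slip: the bound ``$|W\wedge\phi(Y)|\leq|\phi(Y)|$'' is not literally true, since $W\wedge\phi(Y)$ could be more negative than $-|\phi(Y)|$. But since $W$ is an upper bound of $\phi(\cA)$ you already have $W\geq\phi(X_0)$, so $\phi(X_0)\leq W\wedge\phi(Y)\leq\phi(Y)$ and hence $|W\wedge\phi(Y)|\leq|\phi(X_0)|\vee|\phi(Y)|\leq\one$, which is all you need. (In fact, your ``first point'' is not even strictly necessary: in the third step you use directly that $\phi(W)\in L^\infty_c$ for any upper bound $W$ of $\cA$, so $\phi(W)\geq Z$ follows from $Z$ being the supremum in the ideal.)

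The route, however, differs from the paper's. The paper closes the cycle $(a)\Rightarrow(b)\Rightarrow(c)\Rightarrow(d)\Rightarrow(a)$, deducing $(c)\Rightarrow(d)$ from the fact that $L^\infty_c$ is an ideal of $L^p_c$, and proving $(d)\Rightarrow(a)$ by integer truncations: for $\cA\subset(L^0_c)_+$ bounded by $Y$ it forms $\cA_n=\{X\wedge n\one_\Omega:X\in\cA\}\subset L^\infty_c$, takes $X_n=\sup\cA_n$ in $L^\infty_c$, and then invokes the separately stated fact that \emph{countable} bounded subsets of $L^0_c$ always admit a (pointwise) supremum to obtain $X_0=\sup_nX_n$. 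Your single-shot bijection $\phi:\R\to(-1,1)$ replaces this two-stage procedure and, in particular, does not rely on that preliminary countable-supremum observation; it also handles $(c)\Rightarrow(a)$ and $(d)\Rightarrow(a)$ in one stroke. The paper's version is perhaps more elementary in flavour (only lattice truncations appear), while yours is more self-contained.
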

\begin{proof}
It is clear that {\em (a)} implies {\em (b)} and that {\em (b)} implies {\em (c)}. Since $L^\infty_c$ is an ideal of $L^p_c$, we also have that {\em (c)} implies {\em (d)}. To conclude the proof, assume that $L^\infty_c$ is order complete and let $\cA$ be a subset of $(L_c^0)_+$ that is bounded above by some
$Y\in (L_c^0)_+$. For every $n\in\N$ the set $\cA_n=\{X\wedge n\one_\Omega \,; \ X\in\cA\}$ is contained in $(L_c^\infty)_+$ and is bounded above
by $Y\wedge n\one_\Omega\in L_c^\infty$. For each $n\in\N$ we denote by $X_n$ the supremum of $\cA_n$ in $L_c^\infty$. Being bounded above by
$Y$, the set $(X_n)$ admits a supremum in $L_c^0$, denoted by $X_0$, due to the remark preceding the lemma. It is now straightforward
to verify that $X_0$ is the supremum of $\cA$ in $L_c^0$.
\end{proof}

\smallskip

A simple tractable characterization of the order-continuous dual $(L^p_c)_n^\sim$ is typically not available in a robust setting. This is because, for $1\leq p,q\leq\infty$ such that $\frac{1}{p}+\frac{1}{q}=1$, the identification $L^q_c=(L^p_c)_n^\sim$ no longer holds in general. For example, if $\cP$ consists of all the Dirac measures associated to the elements of $\Omega$, then one sees that $L^0_c=\cL^0$ and $L^p_c=\cL^\infty$ for all $1\leq p\leq\infty$. In particular, the lack of the countable sup property forces us to deal with order-convergent nets instead of sequences when it comes to determining $(L^p_c)_n^\sim$. In what follows, we avoid this problem by focusing on a special tractable ideal of $(L^p_c)_n^\sim$.

\smallskip

Let $\mathrm{ca}_c$ be the space of all countably-additive signed measures $\mu$ over $(\Omega,\cF)$ such that $\mu(E)=0$ whenever
$E\in\cF$ and $c(E)=0$. The space $\mathrm{ca}_c$ is a Banach lattice with respect to the total variation norm and the setwise partial
order. Given $\probp\in\cP$ and $Z\in\cL^\infty$, we define a signed measure $\mu_{\mathbb{P},Z}\in\mathrm{ca}_c$ by setting
\[
\mu_{\mathbb{P},Z}(E)=\mathbb{E}_\mathbb{P}[\one_EZ].
\]
We denote by $\mathrm{ca}_c^\infty$ the linear subspace of $\mathrm{ca}_c$ generated by such measures, i.e.
\[
\mathrm{ca}_c^\infty=\mathrm{span}(\{\mu_{\mathbb{P},Z} \,; \ \mathbb{P}\in\mathcal{P}, \ Z\in\cL^\infty\}).
\]

\begin{lemma}
\label{lem: ca infty ideal}
The set $\mathrm{ca}_c^\infty$ is an ideal of $\mathrm{ca}_c$.
\end{lemma}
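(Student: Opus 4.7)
The plan is to verify the two defining properties of an ideal in a vector lattice. Linearity of $\mathrm{ca}_c^\infty$ is immediate from its definition as a linear span, so the entire task reduces to solidity: given $\mu\in\mathrm{ca}_c^\infty$ and $\nu\in\mathrm{ca}_c$ with $\abs{\nu}\leq\abs{\mu}$ in $\mathrm{ca}_c$, I need to exhibit $\nu$ as a finite linear combination of measures of the form $\mu_{\mathbb{P},Z}$ with $\mathbb{P}\in\mathcal{P}$ and $Z\in\cL^\infty$.

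The first key step computes the modulus of a single generator. For every $\mathbb{P}\in\mathcal{P}$ and $Z\in\cL^\infty$ I claim that $\abs{\mu_{\mathbb{P},Z}}=\mu_{\mathbb{P},\abs{Z}}$. Indeed, $\mu_{\mathbb{P},Z}=\mu_{\mathbb{P},Z^+}-\mu_{\mathbb{P},Z^-}$ exhibits $\mu_{\mathbb{P},Z}$ as the difference of two positive measures that are mutually singular (one concentrated on $\{Z\geq 0\}$, the other on $\{Z<0\}$), so uniqueness of the Jordan decomposition yields $\mu_{\mathbb{P},Z}^{\pm}=\mu_{\mathbb{P},Z^{\pm}}$ and hence the claim. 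Writing a general $\mu\in\mathrm{ca}_c^\infty$ as $\mu=\sum_{i=1}^n\alpha_i\mu_{\mathbb{P}_i,Z_i}$ and using the triangle inequality for the total variation, I obtain
\[
\abs{\nu}\leq\abs{\mu}\leq\sum_{i=1}^n\abs{\alpha_i}\mu_{\mathbb{P}_i,\abs{Z_i}}=:M,
\]
a positive finite element of $\mathrm{ca}_c^\infty$ dominating $\abs{\nu}$.

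The Radon-Nikodym theorem then furnishes a measurable function $h$ with $\nu=h\cdot M$ and $\abs{h}\leq 1$ holding $M$-almost surely. Since $\cL^\infty$ in this paper consists of genuine bounded measurable functions rather than equivalence classes, I pass to the truncation $\tilde h=h\mathbf{1}_{\{\abs{h}\leq 1\}}$, which is $\cF$-measurable, pointwise bounded by $1$, agrees with $h$ off an $M$-null set, and still satisfies $\nu=\tilde h\cdot M$. Unpacking the definition of $M$, for every $E\in\cF$ I then compute
\[
\nu(E)=\int_E\tilde h\,dM=\sum_{i=1}^n\abs{\alpha_i}\E_{\mathbb{P}_i}[\mathbf{1}_E\tilde h\abs{Z_i}]=\sum_{i=1}^n\mu_{\mathbb{P}_i,\abs{\alpha_i}\tilde h\abs{Z_i}}(E).
\]
Since each coefficient $\abs{\alpha_i}\tilde h\abs{Z_i}$ lies in $\cL^\infty$, this exhibits $\nu$ as a finite linear combination of generators, completing the proof that $\nu\in\mathrm{ca}_c^\infty$.

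The only delicate point is this truncation step, where I must check that replacing $h$ by its bounded version $\tilde h$ does not alter the integrals appearing in the final display. This is transparent: the set $\{\abs{h}>1\}$ is $M$-null, which by the definition of $M$ forces $\E_{\mathbb{P}_i}[\mathbf{1}_{\{\abs{h}>1\}}\abs{Z_i}]=0$ and hence $\mathbf{1}_{\{\abs{h}>1\}}\abs{Z_i}=0$ almost surely under $\mathbb{P}_i$ for every $i$, so $\tilde h$ and $h$ coincide $\mathbb{P}_i$-a.s.\ on the support of $\abs{Z_i}$.
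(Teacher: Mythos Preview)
Your proof is correct and follows essentially the same approach as the paper: dominate $\abs{\nu}$ by the positive measure $M=\sum_i\abs{\alpha_i}\mu_{\mathbb{P}_i,\abs{Z_i}}$, apply Radon--Nikodym to obtain a bounded density, and expand $\nu$ as a finite combination of generators. Your version is somewhat more detailed (explicitly verifying $\abs{\mu_{\mathbb{P},Z}}=\mu_{\mathbb{P},\abs{Z}}$ via the Jordan decomposition and carefully handling the truncation needed to land in $\cL^\infty$), but the argument is the same.
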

\begin{proof}
It suffices to show that $\mathrm{ca}_c^\infty$ is solid. To this end, let
$\mu\in\mathrm{ca}_c^\infty$ and assume that $\mu=\sum_{i=1}^n a_i\mu_{\probp_i,Z_i}$ for $\probp_1,\dots,\probp_n\in\cP$ and
$Z_1,\dots,Z_n\in\cL^\infty$. Moreover, take $\nu\in\mathrm{ca}_c$ and assume that $\abs{\nu}\leq\abs{\mu}$. This implies that
\[
\abs{\nu} \leq \sum_{i=1}^n\abs{a_i}\mu_{\probp_i,\abs{Z_i}}.
\]
Hence, by Radon-Nikodym, we find $Z\in\cL^\infty$ such that $\abs{Z}\leq 1$ and
\[
\nu = \sum_{i=1}^n\abs{a_i}\mu_{\probp_i,Z\abs{Z_i}}.
\]
This proves that $\nu\in\mathrm{ca}_c^\infty$ and establishes solidity.
\end{proof}

\smallskip

Now, consider the natural duality embedding of $\mathrm{ca}_c$ into the norm dual of $L_c^\infty$, respectively of
$\mathrm{ca}_c^\infty$ into the norm dual of $L^p_c$ for $1\leq p\leq\infty$, given by
\[
\langle X,\mu\rangle = \E_\mu[X].
\]
The duality is well defined in the sense that $\E_\mu[X]=\E_\mu[Y]$ whenever $X=Y$ quasi surely. These embeddings also respect the order structure in the sense that the supremum of two signed measures in $\mathrm{ca}_c$, respectively in $\mathrm{ca}_c^\infty$, coincides with their supremum as functionals in the above norm duals.

\smallskip

In our applications we follow Maggis et al.~(2018) and work under the {\bf assumption}
\[
L_c^\infty=(\mathrm{ca}_c)^\ast,
\]
where $(\mathrm{ca}_c)^\ast$ denotes the norm dual of $\mathrm{ca}_c$. As dual spaces are order complete, this assumption implies that
$L_c^\infty$ is order complete. An application of Theorem 4.60 in Aliprantis and Burkinshaw (2006) shows that this assumption also
implies that
\[
(L_c^\infty)_n^\sim=\mathrm{ca}_c.
\]
By Theorem 2.4.22 in Meyer-Nieburg (1991), one sees that the converse is also true, namely, if $L_c^\infty$ is order complete and $(L_c^\infty)_n^\sim=\mathrm{ca}_c$, then $L_c^\infty=\big((L_c^\infty)_n^\sim\big)_n^\sim=(\mathrm{ca}_c)_n^\sim=(\mathrm{ca}_c)^\ast$. However, Proposition 3.10 in Maggis et al.~(2018) surprisingly proves that order completeness of $L_c^\infty$ alone is sufficient to yield $L_c^\infty=(\mathrm{ca}_c)^\ast$.

\smallskip

%%%%%%%%%%%%%%%%%%%%%%%%%%%%%%%%%%%%%%%%%%%%%%%%

\subsection{Dual characterizations of the Fatou property under surplus invariance}

In this section we provide a variety of dual characterizations of the Fatou property based on Theorems \ref{s-i functionals} and \ref{theo: dual representation S additive}. To highlight the embedding of our results in the literature we start by focusing on model spaces with a dominating probability. The prototype dual characterization in this setting was obtained by Delbaen~(2002) in the context of the space $\cX=L^\infty$. In this case, it was shown that for every convex monotone functional the Fatou property is equivalent to $\sigma(L^\infty,L^1)$ lower semicontinuity. As such, the Fatou property allows for a ``nice'' dual representation where the dual elements can be identified with countably-additive measures on $(\Omega,\cF)$. Our first result shows that, under surplus invariance (subject to positivity), we can always restrict the dual space to the smaller domain $L^\infty$. This entails a sharper dual representation in terms of countably-additive measures that have bounded Radon-Nikodym derivatives with respect to $\probp$.

\begin{theorem}
\label{theo: fatou in L infty}
If $\rho:L^\infty\to\R\cup\{\infty\}$ is {\bf either} quasiconvex, surplus invariant, and monotone {\bf or} convex, surplus invariant
subject to positivity, monotone, and $S$-additive with $S\in L^\infty_+\backslash\{0\}$, then the following statements are
equivalent:
\begin{enumerate}
  \item[(a)] $\rho$ has the Fatou property.
  \item[(b)] $\rho$ has the super Fatou property.
  \item[(c)] $\rho$ is $\sigma(L^\infty,L^1)$ lower semicontinuous.
  \item[(d)] $\rho$ is $\sigma(L^\infty,L^\infty)$ lower semicontinuous.
\end{enumerate}
\end{theorem}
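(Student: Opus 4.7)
The statement is essentially a direct corollary of Theorem~\ref{s-i functionals} (under the first set of assumptions) or Theorem~\ref{theo: dual representation S additive} (under the second set of assumptions), once we identify the relevant separating ideals in $(L^\infty)^\sim_n$. The plan is to carry out these identifications and then quote the abstract results.

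First, recall from the preliminary discussion of $L^p$ and Orlicz spaces that every $\varphi\in(L^\infty)^\sim_n$ is given by $\varphi(X)=\E_\probp[XY]$ for a unique $Y\in L^0$ with $\E_\probp[\abs{XY}]<\infty$ for all $X\in L^\infty$, which forces $Y\in L^1$. Hence $(L^\infty)^\sim_n$ can be identified with $L^1$, and it is clearly separating as $\one_\Omega\in L^1$ already separates distinct elements of $L^\infty$. Hence the standing hypothesis ``$\cX^\sim_n$ separating'' in Theorems~\ref{s-i functionals} and~\ref{theo: dual representation S additive} is satisfied.

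Next, I would verify that $L^\infty$ (viewed inside $L^1$ via the natural inclusion) is itself a separating ideal of $(L^\infty)^\sim_n$. It is a linear subspace, and if $Z\in L^1$ satisfies $\abs{Z}\leq\abs{Y}$ a.s.\ for some $Y\in L^\infty$, then $\abs{Z}\leq\norm{Y}_\infty$ a.s., so $Z\in L^\infty$; this gives solidity. Moreover, for every nonzero $X\in L^\infty$ the function $\mathrm{sgn}(X)\in L^\infty$ yields $\E_\probp[X\,\mathrm{sgn}(X)]=\E_\probp[\abs{X}]>0$, so $L^\infty$ is separating. Trivially, $L^1$ is a separating ideal of itself.

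Applying Theorem~\ref{s-i functionals} in the quasiconvex surplus-invariant case, and Theorem~\ref{theo: dual representation S additive} in the $S$-additive case, we obtain that (a) and (b) are equivalent, and are in turn equivalent to $\sigma(L^\infty,\cI)$ lower semicontinuity for every separating ideal $\cI\subset(L^\infty)^\sim_n$, equivalently for some such ideal. Specialising to $\cI=L^1$ gives (c) and specialising to $\cI=L^\infty$ gives (d), which closes the cycle of equivalences. There is no real obstacle here: all the conceptual content sits in the abstract characterisation of order closedness in Theorem~\ref{closed-top}, and the present theorem only requires checking that both candidate dual spaces qualify as separating ideals of $(L^\infty)^\sim_n$.
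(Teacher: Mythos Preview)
Your approach is correct and matches the paper, which states Theorem~\ref{theo: fatou in L infty} without proof as an immediate specialisation of Theorems~\ref{s-i functionals} and~\ref{theo: dual representation S additive} with $\cI=L^1$ and $\cI=L^\infty$. One small slip: $\one_\Omega$ alone does not separate distinct elements of $L^\infty$ (any two elements with equal mean are not distinguished); the $\mathrm{sgn}(X)$ argument you give for $L^\infty$ works verbatim for $L^1$ as well, so the conclusion stands.
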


\smallskip

\begin{remark}
The general equivalence between {\em (a)} and {\em (c)} was used in Cont et al.~(2013) to obtain a dual representation of convex loss-based risk measures with the Fatou property in terms of dual elements in $L^1$. Since loss-based risk measures are surplus invariant, the above result allows us to refine that representation by restricting the space of dual elements to $L^\infty$. Moreover, it shows that the Fatou property is equivalent to the stronger super Fatou property, i.e.\ lower semicontinuity with respect to almost-sure convergence.
\end{remark}

\smallskip

Note that $\sigma(L^\infty,L^1)=\sigma(\cX,\cX^\sim_n)$ in the case that $\cX=L^\infty$ as above. As pointed out in Gao and Xanthos
(2018) and later in Gao et al.~(2016), the equivalence between the Fatou property and $\sigma(\cX,\cX^\sim_n)$ lower semicontinuity
breaks down for a quasiconvex functional (even for a standard convex cash-additive risk measure) in a general Orlicz space
$\cX=L^\Phi$, in which case $\cX^\sim_n=L^{\Phi^\ast}$. More specifically, Gao et al.~(2016) show that the equivalence breaks down
precisely when both $\Phi$ and $\Phi^\ast$ fail to satisfy the $\Delta_2$-condition; see also Delbaen and Owari (2016). As recently
established in Gao et al.~(2018), the equivalence can be nevertheless ensured in {\em every} Orlicz space under the additional
property of law invariance. In that case, the Fatou property is also equivalent to lower semicontinuity with respect to the (generally
much) coarser topologies $\sigma(L^\Phi,H^{\Phi^\ast})$ and $\sigma(L^\Phi,L^\infty)$. Our next result shows that the same conclusion
holds if we replace law invariance by surplus invariance (subject to positivity).

\begin{theorem}
\label{theo: fatou orlicz spaces}
If $\rho:L^\Phi\to\R\cup\{\infty\}$ is {\bf either} quasiconvex, surplus invariant, and monotone {\bf or} convex, surplus invariant
subject to positivity, monotone, and $S$-additive with $S\in L^\Phi_+\backslash\{0\}$, then the following statements are equivalent:
\begin{enumerate}
  \item[(a)] $\rho$ has the Fatou property.
  \item[(b)] $\rho$ has the super Fatou property.
  \item[(c)] $\rho$ is $\sigma(L^\Phi,L^{\Phi^\ast})$ lower semicontinuous.
  \item[(d)] $\rho$ is $\sigma(L^\Phi,H^{\Phi^\ast})$ lower semicontinuous.
  \item[(e)] $\rho$ is $\sigma(L^\Phi,L^\infty)$ lower semicontinuous.
\end{enumerate}
\end{theorem}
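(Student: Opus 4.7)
The plan is to derive this from Theorems~\ref{s-i functionals} and~\ref{theo: dual representation S additive} by exhibiting the three dualities appearing in (c)--(e) as those generated by three separating ideals of the order-continuous dual of $L^\Phi$. Recall the standard identification $(L^\Phi)^\sim_n=L^{\Phi^\ast}$ via $\varphi(X)=\E_\probp[XY]$ for $Y\in L^{\Phi^\ast}$.

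First I would check that $L^\infty$, $H^{\Phi^\ast}$, and $L^{\Phi^\ast}$ are all ideals of $L^{\Phi^\ast}$. The case $\cI=L^{\Phi^\ast}$ is trivial; that $H^{\Phi^\ast}$ is an ideal of $L^{\Phi^\ast}$ is standard; and that $L^\infty$ is an ideal of $L^{\Phi^\ast}$ follows from the inclusion $L^\infty\subset L^{\Phi^\ast}$ (any bounded random variable has finite Luxemburg norm with respect to $\Phi^\ast$) together with the solidity of $L^\infty$ in $L^{\Phi^\ast}$.

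Next I would verify that each of these ideals separates $L^\Phi$. It suffices to treat the smallest, $L^\infty$. For any nonzero $X\in L^\Phi$, we have $\probp(X\neq 0)>0$, so by monotone convergence $\probp(0<|X|\leq n)>0$ for $n$ sufficiently large. Taking $Y=\mathbf{1}_{\{0<X\leq n\}}-\mathbf{1}_{\{-n\leq X<0\}}\in L^\infty$ then yields $\E_\probp[XY]=\E_\probp[|X|\mathbf{1}_{\{|X|\leq n\}}]>0$. In particular, $(L^\Phi)^\sim_n$ is itself separating.

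With these facts in hand, I would invoke Theorem~\ref{s-i functionals} in the first case (quasiconvex, surplus-invariant, monotone) and Theorem~\ref{theo: dual representation S additive} in the second case (convex, $S$-additive, surplus-invariant subject to positivity, monotone). Each of these yields (a) $\iff$ (b) unconditionally and, since $(L^\Phi)^\sim_n$ is separating, also the equivalence with $\sigma(L^\Phi,\cI)$ lower semicontinuity for every separating ideal $\cI\subset L^{\Phi^\ast}$. Specializing in turn to $\cI=L^{\Phi^\ast}$, $\cI=H^{\Phi^\ast}$, and $\cI=L^\infty$ then delivers (c), (d), and (e) respectively. The substance of the result lies not in the present argument---which is short once Theorem~\ref{closed-top} is in place---but rather in the observation that surplus invariance (or surplus invariance subject to positivity) is what permits the dual space to be contracted all the way down to $L^\infty$, a reduction that fails for general convex functionals on $L^\Phi$ as noted by Gao et al.~(2016).
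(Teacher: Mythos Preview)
Your proposal is correct and is exactly the derivation the paper intends: the theorem is stated there without proof as a direct consequence of Theorems~\ref{s-i functionals} and~\ref{theo: dual representation S additive}, and your argument supplies precisely the missing verification that $L^\infty\subset H^{\Phi^\ast}\subset L^{\Phi^\ast}=(L^\Phi)^\sim_n$ are separating ideals. One small caveat: your claim that $L^\infty$ is the smallest of the three presumes $\Phi^\ast$ is finite-valued (equivalently $\Phi(x)/x\to\infty$), since otherwise $H^{\Phi^\ast}=\{0\}$ fails to separate; this is an edge case the paper's statement itself does not address.
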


\smallskip

We turn to robust model spaces. In Maggis et al.~(2018), the authors studied dual representations of risk functionals on $L^\infty_c$ via the dual space $\mathrm{ca}_c$. Here, we show that surplus invariance (subject to positivity) allows us to establish dual representations for risk functionals on $L^p_c$ via $\mathrm{ca}_c^\infty$. This is appealing because $\mathrm{ca}_c^\infty$ consists of ``nice'' elements in $\mathrm{ca}_c$ that are directly associated with $\cP$ and that have bounded Radon-Nikodym derivatives. To this effect, we first need to show that $\mathrm{ca}_c^\infty$ is a separating ideal of $(L_c^p)_n^\sim$.

\begin{lemma}
The set $\mathrm{ca}_c^\infty$ is a separating ideal of $(L_c^p)_n^\sim$ for every $1\leq p\leq\infty$.
\end{lemma}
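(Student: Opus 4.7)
The plan is to verify three things in sequence: that $\mathrm{ca}_c^\infty$ sits inside $(L_c^p)_n^\sim$, that it is solid there, and that it separates points of $L_c^p$. For $p=\infty$ the first point is immediate from the standing identification $(L_c^\infty)_n^\sim=\mathrm{ca}_c$ combined with Lemma~\ref{lem: ca infty ideal}. For $1\le p<\infty$, I would fix a generator $\mu=\mu_{\probp,Z}$ with $Z\in\cL^\infty$ and observe that every $X\in L_c^p$ is $\probp$-integrable because $\E_\probp[\abs{X}]\le\norm{X}_p$, so $\E_\mu[X]=\E_\probp[XZ]$ is a well-defined positive-homogeneous linear form.

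For order continuity on $L_c^p$, I would take $X_\alpha\downarrow 0$ in $L_c^p$ and first check that $X_\alpha\wedge n\one_\Omega\downarrow 0$ in $L_c^\infty$ for every $n\in\N$: any lower bound in $L_c^\infty$ is also a lower bound in $L_c^p$ and therefore must be $\le 0$. Since $\mathrm{ca}_c=(L_c^\infty)_n^\sim$, the positive functional $\mu_{\probp,\abs{Z}}$ is order continuous on $L_c^\infty$, giving $\E_\probp[(X_\alpha\wedge n\one_\Omega)\abs{Z}]\downarrow 0$. The tail is then controlled uniformly in $\alpha$ by fixing some $\alpha_0$ and bounding
\[
\E_\probp\bigl[(X_\alpha-X_\alpha\wedge n\one_\Omega)\abs{Z}\bigr]\le\norm{Z}_\infty\E_\probp\bigl[X_{\alpha_0}\one_{\{X_{\alpha_0}>n\}}\bigr],
\]
which tends to $0$ as $n\to\infty$ by dominated convergence since $X_{\alpha_0}\in L^1(\probp)$. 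Letting $n\to\infty$ after $\alpha$ yields order continuity of $\mu$, and linearity propagates to all of $\mathrm{ca}_c^\infty$.

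To show $\mathrm{ca}_c^\infty$ is an ideal of $(L_c^p)_n^\sim$, I would pick $\varphi\in(L_c^p)_n^\sim$ with $\abs{\varphi}\le\abs{\mu}$ for some $\mu\in\mathrm{ca}_c^\infty$. Restricting to $L_c^\infty$ (an ideal of $L_c^p$ with $\norm{\cdot}_p\le\norm{\cdot}_\infty$) produces an order-continuous functional on $L_c^\infty$, which by hypothesis corresponds to a unique $\nu\in\mathrm{ca}_c$ with $\abs{\nu}\le\abs{\mu}$. Lemma~\ref{lem: ca infty ideal} then forces $\nu\in\mathrm{ca}_c^\infty$. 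Since $L_c^\infty$ is order dense in $L_c^p$ (as $X\wedge n\one_\Omega\uparrow X$ for $X\in(L_c^p)_+$), two order-continuous functionals on $L_c^p$ agreeing on $L_c^\infty$ must coincide, giving $\varphi=\nu\in\mathrm{ca}_c^\infty$.

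Separation is the easiest part: if $X\in L_c^p\setminus\{0\}$, then by definition of quasi-sure equality there exists $\probp\in\cP$ with $\probp(X\ne 0)>0$, and choosing any measurable representative $\tilde X$ and setting $Z=\mathrm{sgn}(\tilde X)\in\cL^\infty$ gives $\mu_{\probp,Z}\in\mathrm{ca}_c^\infty$ with $\E_{\mu_{\probp,Z}}[X]=\E_\probp[\abs{X}]>0$. The main obstacle is the order-continuity step in the first part: one must be careful that $L_c^p$ lacks the countable sup property and $L_c^p$-infima need not be pointwise infima, so the reduction via truncation to $L_c^\infty$ (where the standing assumption does the work) together with the uniform tail estimate is what makes the argument go through.
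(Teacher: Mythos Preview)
Your proposal is correct and follows essentially the same route as the paper: reduce order continuity on $L_c^p$ to the known order continuity on $L_c^\infty$ via truncation, identify $\varphi$ with a measure by restricting to $L_c^\infty$ and invoking $(L_c^\infty)_n^\sim=\mathrm{ca}_c$, then use Lemma~\ref{lem: ca infty ideal} and order density to conclude. The only cosmetic differences are that the paper argues order continuity by contradiction on increasing nets (rather than your direct tail estimate on decreasing nets) and builds the measure $\nu(E)=\varphi(\one_E)$ explicitly (rather than appealing abstractly to the restriction map), but these are the same ideas in slightly different dress.
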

\begin{proof}
We know from Lemma \ref{lem: ca infty ideal} that $\mathrm{ca}_c^\infty$ is an ideal of $\mathrm{ca}_c=(L_c^\infty)_n^\sim$. Hence, assume that $1\leq p<\infty$. We first show that $\mathrm{ca}_c^\infty\subset(L^p_c)_n^\sim$ under the aforementioned duality. To this effect, take an arbitrary $\mu\in\mathrm{ca}_c^\infty$ and assume without loss of generality that $\mu\geq 0$. Let $(X_\alpha)\subset(L^p_c)_+$ and
$X\in(L^p_c)_+$ be such that $X_\alpha\uparrow X$ in $L^p_c$. It suffices to show that $\langle X_\alpha,\mu\rangle\uparrow \langle
X,\mu\rangle$. Suppose otherwise that there exists some $\varepsilon>0$ such that $\langle X_\alpha,\mu\rangle\leq \langle
X,\mu\rangle-\varepsilon$ for all $\alpha$. In this case, $\langle X_\alpha\wedge n,\mu\rangle\leq \langle X,\mu\rangle-\varepsilon$
for all $n\in\N$ and $\alpha$. Since $X_\alpha\wedge n\uparrow X\wedge n$ in $L^p_c$, hence in $L^\infty_c$, and since
$\mathrm{ca}_c^\infty\subset(L^\infty_c)_n^\sim$ under our assumption, we see that $\langle X_\alpha\wedge n,\mu\rangle\uparrow \langle X\wedge
n,\mu\rangle$. This implies that $\langle X\wedge n,\mu\rangle\leq \langle X,\mu\rangle-\varepsilon$ for all $n\in\N$. Letting
$n\to\infty$, we get a contradiction. This proves that $\mathrm{ca}_c^\infty\subset(L^p_c)_n^\sim$.

\smallskip

We claim that $\mathrm{ca}_c^\infty$ is an ideal of $(L_c^p)_n^\sim$. To show this, take $\varphi\in(L_c^p)_n^\sim$ and
$\mu\in\mathrm{ca}_c^\infty$ such that $|\varphi|\leq|\mu|$. By considering $\varphi^+$ and $\varphi^-$ and by replacing $\mu$ with
$|\mu|$, we may assume that $\varphi\geq0$ as well as $\mu\geq 0$, so that $0\leq\varphi\leq\mu$. For every $E\in\cF$ define
\[
\nu(E)=\varphi(\one_E).
\]
Take a pairwise disjoint sequence $(E_n)\subset\cF$. Since $\one_{\cup_{k=1}^nE_k}\uparrow \one_{\cup_{k=1}^\infty E_k}$, we have
\[
\sum_{k=1}^n\nu(E_k) = \sum_{k=1}^n\varphi(\one_{E_k}) =
\varphi\left(\sum_{k=1}^n\one_{E_k}\right)=\varphi(\one_{\cup_{k=1}^nE_k})\rightarrow \varphi(\one_{\cup_{k=1}^\infty E_k}) =
\nu(\cup_{k=1}^\infty E_k).
\]
In other words, $\nu$ is a countably-additive finite measure. Clearly, $\nu(E)=0$ whenever $c(E)=0$, so that $\nu\in\mathrm{ca}_c$.
Now, for every $X\in(L^p_c)_+$ we can find a sequence of simple functions $(X_n)\subset(L^p_c)_+$ such that $X_n\uparrow X$, so that
\[
\varphi(X) = \lim_{n\to\infty}\varphi(X_n) = \lim_{n\to\infty}\E_\nu[X_n] = \E_\nu[X].
\]
This yields $\varphi(X)=\E_\nu[X]$ for all $X\in L_c^p$ and shows that $\nu$ is identified with $\varphi$ in our duality. Clearly, $0\leq \nu\leq \mu$, implying that $\nu\in\mathrm{ca}_c^\infty$ since $\mathrm{ca}_c^\infty$ is an ideal of $\mathrm{ca}_c$.

\smallskip

To show that $\mathrm{ca}_c^\infty$ separates the points of $L^p_c$ for every $1\leq p\leq\infty$, it suffices to note that for every nonzero $X\in L_c^p$ we have $\probp(X\neq 0)>0$ for some $\probp\in\cP$ and thus $\langle X,\mu_{\mathbb{P},Z}\rangle\neq0$ where $Z=\one_{\{X>0\}}$ if $\probp(X>0)>0$ and $Z=\one_{\{X<0\}}$ otherwise.
\end{proof}

\smallskip

In view of the preceding lemma, the announced dual characterization of the Fatou property is a direct consequence of Theorems \ref{s-i functionals} and \ref{theo: dual representation S additive}.

\begin{theorem}
\label{theo: robust fatou}
Let $1\leq p\leq\infty$. If $\rho:L^p_c\to\R\cup\{\infty\}$ is {\bf either} quasiconvex, surplus invariant, and monotone {\bf or}
convex, surplus invariant subject to positivity, monotone, and $S$-additive with $S\in(L^p_c)_+\backslash\{0\}$, then the following
statements are equivalent:
\begin{enumerate}
  \item[(a)] $\rho$ has the Fatou property.
  \item[(b)] $\rho$ has the super Fatou property.
  \item[(c)] $\rho$ is $\sigma(L^p_c,\mathrm{ca}_c^\infty)$ lower semicontinuous.
\end{enumerate}
\end{theorem}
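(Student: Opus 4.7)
The plan is to observe that Theorem \ref{theo: robust fatou} is essentially a specialization of Theorems \ref{s-i functionals} and \ref{theo: dual representation S additive} to the choice $\cX=L^p_c$ and $\cI=\mathrm{ca}_c^\infty$, together with the preceding lemma. Concretely, once we know that $\mathrm{ca}_c^\infty$ is a separating ideal of $(L^p_c)^\sim_n$, everything follows by pointing at the right item in those two abstract theorems.

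First, I would handle the equivalence (a) $\iff$ (b). This part is unconditional in Theorems \ref{s-i functionals} and \ref{theo: dual representation S additive}: for a quasiconvex surplus-invariant monotone functional, respectively a convex $S$-additive monotone functional that is surplus invariant subject to positivity, the Fatou property and the super Fatou property coincide. Setting $\cX=L^p_c$ and matching the respective hypothesis in each of the two alternative assumptions of the theorem yields the equivalence directly, with no further work.

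Next, for (a) $\iff$ (c) I would appeal to the equivalence of (a) and (d) in both Theorem \ref{s-i functionals} and Theorem \ref{theo: dual representation S additive}. Their standing assumption there is that $\cX^\sim_n$ be separating. By the preceding lemma, $\mathrm{ca}_c^\infty$ is a separating subset of $(L^p_c)^\sim_n$, so a fortiori $(L^p_c)^\sim_n$ separates the points of $L^p_c$, and the hypothesis is met. The same lemma tells us that $\mathrm{ca}_c^\infty$ is itself an ideal of $(L^p_c)^\sim_n$, so it is an admissible choice of $\cI$ in item (d) of those theorems. Thus the Fatou property is equivalent to $\sigma(L^p_c,\mathrm{ca}_c^\infty)$ lower semicontinuity, which is exactly (c).

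Since the argument is a pure citation once the preceding lemma is available, there is no genuine obstacle; the only delicate point (already handled in the paper) was to prove that $\mathrm{ca}_c^\infty$ is both an ideal of $(L^p_c)^\sim_n$ and point-separating on $L^p_c$, which is precisely the content of the preceding lemma. I would conclude by noting that one could equally well use any separating sub-ideal of $\mathrm{ca}_c^\infty$, in line with Remark \ref{rem: dual representation}, but the choice $\cI=\mathrm{ca}_c^\infty$ is the natural maximal one and delivers the sharpest dual representation compatible with the ambient structure.
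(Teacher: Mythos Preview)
Your proposal is correct and matches the paper's own proof essentially verbatim: the paper simply states that, in view of the preceding lemma (which shows $\mathrm{ca}_c^\infty$ is a separating ideal of $(L^p_c)^\sim_n$), the result is a direct consequence of Theorems \ref{s-i functionals} and \ref{theo: dual representation S additive}. Your only minor imprecision is the closing remark that $\mathrm{ca}_c^\infty$ is the ``natural maximal'' choice of $\cI$; in fact it is a rather small separating ideal (which is the point---smaller $\cI$ yields a sharper dual representation), but this does not affect the argument.
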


\smallskip

\begin{remark}
(i) The topology $\sigma(L^p_c,\mathrm{ca}_c^\infty)$ can be viewed as a ``robust'' counterpart of the topology $\sigma(L^p,L^\infty)$
in the dominated case.

\smallskip

(ii) The preceding result highlights the power of Theorems \ref{s-i functionals} and \ref{theo: dual representation S additive}. In this case, we do not know how to represent the order-continuous dual $(L^p_c)^\sim_n$ in a tractable way. However, this is not a problem because we can choose an arbitrary separating ideal such as $\mathrm{ca}_c^\infty$ to automatically ensure lower semicontinuity.

\smallskip

(iii) Note that, differently from the dominated case, the (super) Fatou property cannot be expressed in terms of sequences because the
underlying vector lattice fails to satisfy the countable sup property in general. In order to ensure it, one has to impose extra
conditions such as the $\cP$-sensitivity of the sublevel sets of $\rho$, see Maggis et al.~(2018).
\end{remark}

\smallskip

%%%%%%%%%%%%%%%%%%%%%%%%%%%%%%%%%%%%%%%%%%%%%%%%

\subsection{Extensions of surplus-invariant functionals}

The standard theory of risk measures was originally developed in the context of the space $L^\infty$. Since the standard distributions used in finance and insurance, such as normal distributions, are not supported by bounded random variables, it is however critical both from a theoretical and a practical perspective to extend the domain of definition of a risk measure beyond the setting of bounded positions. More specifically, one is interested in looking for the ``canonical'' or ``natural'' model space for a certain class of risk measures, i.e.\ for the largest model space where the defining properties of the given class are still satisfied. This problem has been studied in a number of papers including Delbaen~(2002), Delbaen (2009), Filipovi\'{c} and Svindland (2012), Pichler (2013), Liebrich and Svindland (2017).

\smallskip

The general extension results in Theorems \ref{theo: general extension} and~\ref{theo: extension si subject pos} can be used to show that every surplus-invariant functional defined on $L^\infty$ can be always extended to the entire space $L^0$ without losing monotonicity, (quasi)convexity, and the Fatou property. The same conclusion is true if we weaken surplus invariance to surplus invariance subject to positivity but require $S$-additivity. This unveils a remarkable aspect of surplus-invariant functionals as compared to other classes of risk measures, for which it is typically not possible to ensure extensions beyond the space $L^1$. In the spirit of Filipovi\'{c} and Svindland (2012), one could therefore say that the canonical model space for surplus-invariant functionals is $L^0$.

\smallskip

By virtue of the generality of our lattice approach, we can in fact establish extension results directly in the robust setting with no extra effort. The extension from $L^\infty$ to $L^0$ corresponds to the special situation where $\cP=\{\probp\}$.

\begin{theorem}
\label{theo: robust extensions}
Consider a quasiconvex monotone functional with the Fatou property $\rho:L^\infty_c\to\R\cup\{\infty\}$.

\smallskip

(i) If $\rho$ is surplus invariant, then it can be uniquely extended to a quasiconvex surplus-invariant monotone functional with the
Fatou property $\overline{\rho}:L^0_c\to\R\cup\{\infty\}$. In particular, for every $X\in L^0_c$ we have
\[
\overline{\rho}(X) = \lim_{n\to\infty}\rho(-X^-\wedge n)).
\]
If $\rho$ is additionally convex, then so is $\overline{\rho}$ and for every $X\in L^1_c$ we have
\[
\overline{\rho}(X) = \sup_{\mu\in(\mathrm{ca}_c^\infty)_+}\big(\E_\mu[X^-]-\rho^\ast(\mu)\big),
\]
where
\[
\rho^\ast(\mu) = \sup_{X\in L^\infty_c}\big(\E_\mu[-X]-\rho(X)\big).
\]

\smallskip

(ii) If $\rho$ is $S$-additive with $S\in (L^\infty_c)_+\backslash\{0\}$ and surplus invariant subject to positivity, then it can be
uniquely extended to a convex monotone functional with the Fatou property $\overline{\rho}:L^0_c\to\R\cup\{\infty\}$ that is
$S$-additive and surplus invariant subject to positivity. In particular, for every $X\in L^0_c$ we have
\[
\overline{\rho}(X) = \inf\{m\in\R \,; \ -(X+mS)^-\in\cl_o(\cA_-)\},
\]
where $\cA=\{\rho\leq 0\}$. If $\rho$ is additionally convex, then so is $\overline{\rho}$ and for every $X\in L^1_c$
\[
\overline{\rho}(X) = \sup_{\mu\in(\mathrm{ca}_c^\infty)_+,\,\E_\mu[S]=1}\big(\E_\mu[-X]-\rho^\ast(\mu)\big),
\]
where
\[
\rho^\ast(\mu) = \sup_{X\in L^\infty_c}\big(\E_\mu[-X]-\rho(X)\big).
\]
\end{theorem}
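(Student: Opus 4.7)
The plan is to derive both parts of Theorem \ref{theo: robust extensions} as applications of the general lattice-theoretic extension and duality results proved in Section~3, applied to $\cX = L^0_c$. The key observation is that under the standing assumption $L^\infty_c = (\mathrm{ca}_c)^\ast$, the space $L^\infty_c$ is order complete, hence so is $L^0_c$ by Lemma~\ref{lem: characterization order completeness robust}, and the constant function $\one_\Omega$ is a weak order unit of $L^0_c$ whose principal ideal $\cI_{\one_\Omega}$ is precisely $L^\infty_c$.

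For part (i), I would first apply Theorem~\ref{theo: general extension} with $\cX = L^0_c$ and $U = \one_\Omega$: this produces a unique surplus-invariant monotone extension $\overline{\rho}:L^0_c \to \R\cup\{\infty\}$ with the Fatou property, given explicitly by $\overline{\rho}(X) = \lim_{n\to\infty}\rho(-X^-\wedge n\one_\Omega)$, and (quasi)convexity is preserved. For the dual representation on $L^1_c$, I would invoke the restriction of $\overline{\rho}$ to $L^1_c$, which (by uniqueness in Lemma~\ref{lem: unique surplus invariant extension} applied inside $L^1_c$) coincides with the extension of $\rho$ from $L^\infty_c = \cI_{\one_\Omega}$ to $L^1_c$ as produced by Theorem~\ref{theo: general extension} with $\cX = L^1_c$. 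Since $\mathrm{ca}_c^\infty$ is a separating ideal of $(L^1_c)^\sim_n$ by the lemma preceding Theorem~\ref{theo: robust fatou}, Corollary~\ref{cor: convex extension} applied with $\cI = \mathrm{ca}_c^\infty$ yields
\[
\overline{\rho}(X) = \sup_{\varphi\in(\mathrm{ca}_c^\infty)_-}\big(\varphi(-X^-) - \rho^\ast(\varphi)\big), \quad X \in L^1_c.
\]
Reparametrising $\varphi = -\mu$ with $\mu \in (\mathrm{ca}_c^\infty)_+$ (so that $\varphi(-X^-) = \E_\mu[X^-]$ and $\rho^\ast(\varphi) = \sup_{X\in L^\infty_c}(\E_\mu[-X] - \rho(X))$) produces exactly the displayed formula.

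For part (ii), the structure is parallel: I would apply Theorem~\ref{theo: extension si subject pos} with $\cX = L^0_c$, $U = \one_\Omega$, and $S \in (L^\infty_c)_+ \backslash \{0\} = (\cI_{\one_\Omega})_+\backslash\{0\}$ to obtain the unique extension $\overline{\rho}$ preserving monotonicity, $S$-additivity, surplus invariance subject to positivity, the Fatou property, and convexity, together with the closure formula. For the dual formula on $L^1_c$, I would restrict to $L^1_c$, identify this restriction as the unique extension from $L^\infty_c$ to $L^1_c$ (by Lemma~\ref{lem: unique extension si subject to pos}), and invoke the corollary following Theorem~\ref{theo: extension si subject pos} with $\cI = \mathrm{ca}_c^\infty$. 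After the same sign reparametrisation, the constraint $\varphi(S) = -1$ becomes $\E_\mu[S] = 1$, giving the required expression.

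The only genuine work is the bookkeeping justifying that one may apply the corollaries \emph{inside $L^1_c$} and not merely inside $L^0_c$ (the corollaries require a separating ideal of the order-continuous dual, and $(L^0_c)^\sim_n$ may lack tractable structure). This is handled by observing that $\one_\Omega$ remains a weak order unit of $L^1_c$, that $\cI_{\one_\Omega}$ computed inside $L^1_c$ is still $L^\infty_c$, and that $\mathrm{ca}_c^\infty$ separates $L^1_c$ and sits inside $(L^1_c)^\sim_n$ by the preparatory lemma. A minor subsidiary check is that the two possible extensions to $L^1_c$ (namely, via restriction of $\overline{\rho}$ from $L^0_c$ and via direct application of Theorem~\ref{theo: general extension}/\ref{theo: extension si subject pos} on $L^1_c$) agree, which follows from the uniqueness clauses in Lemmas~\ref{lem: unique surplus invariant extension} and~\ref{lem: unique extension si subject to pos}. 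No separate topological argument (e.g.\ via $\sigma(L^1_c,\mathrm{ca}_c^\infty)$ continuity) is needed since the convex duality built into the corollaries already delivers the representation.
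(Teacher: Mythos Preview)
Your proposal is correct and follows the paper's approach: establish that $\one_\Omega$ is a weak order unit of $L^0_c$ with principal ideal $L^\infty_c$, then invoke Theorems~\ref{theo: general extension} and~\ref{theo: extension si subject pos} together with their corollaries. Your discussion of why the dual representation must be derived inside $L^1_c$ (where $\mathrm{ca}_c^\infty$ is a separating ideal of the order-continuous dual) rather than inside $L^0_c$, and your use of the uniqueness lemmas to identify the two candidate extensions on $L^1_c$, is in fact more careful than the paper's one-line proof, which leaves these points implicit.
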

\begin{proof}
Note that, for every $X\in L^0_c$, the supremum of the sequence $(\abs{X}\wedge n\one_\Omega)$ is given by the equivalence class represented by the corresponding pointwise supremum. As a result, $\one_\Omega$ is a weak order unit in $L^0_c$ and the principal ideal generated by $\one_\Omega$ is easily seen to coincide with $L^\infty_c$. The claim follows by applying Theorems \ref{theo: general extension} and~\ref{theo: extension si subject pos}.
\end{proof}

\smallskip

%%%%%%%%%%%%%%%%%%%%%%%%%%%%%%%%%%%%%%%%%%

\subsection{Decomposition of surplus-invariant sets}

In this section we specify the decomposition result established in Theorem \ref{decom} to the framework of random variables. We focus on the general formulation in the context of robust model spaces.

\smallskip

As a preliminary observation, recall that $L^\infty_c$ is order complete by our assumption $L^\infty_c=(\mathrm{ca}_c)^\ast$ and
therefore every ideal of $L^0_c$ is also order complete by Lemma~\ref{lem: characterization order completeness robust}. In particular, every ideal of $L^0_c$ enjoys the projection property. In what follows, for every set $\cA\subset L^0_c$ and every $E\in\cF$ we use the notation
\[
\one_E\cA := \{\one_EX \,; \ X\in\cA\}.
\]

\begin{lemma}
Let $\cX$ be an ideal of $L_c^0$. Then every band of $\cX$ is of the form $\one_E\cX$ for some $E\in\cF$.
\end{lemma}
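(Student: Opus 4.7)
My plan is to reduce the lemma to the special case $\cX = L_c^0$ and then transfer back to a general ideal by passing to the band generated in the ambient lattice.

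\emph{Step 1.} First I would establish the claim for $\cX = L_c^0$ itself. By the standing assumption $L_c^\infty = (\mathrm{ca}_c)^\ast$ together with Lemma~\ref{lem: characterization order completeness robust}, the lattice $L_c^0$ is order complete and hence has the projection property. Given a band $\widetilde{\cB}\subset L_c^0$ with band projection $P$, set $Z = P(\one_\Omega)$. The decomposition $\one_\Omega = Z + (\one_\Omega - Z)$, whose two summands lie in $\widetilde{\cB}$ and $\widetilde{\cB}^d$ respectively and are therefore disjoint, forces $Z\in\{0,\one_\Omega\}$ quasi-surely, so $Z = \one_E$ for some $E\in\cF$. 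The identity $\widetilde{\cB} = \one_E L_c^0$ is then obtained by a routine truncation-plus-order-closure argument: for $\supset$, any nonnegative $\one_E Y\in L_c^0$ is the increasing supremum of the elements $(\one_E Y)\wedge n\one_E\in\widetilde{\cB}$ (use that $\widetilde{\cB}$ is an ideal containing $\one_E$); for $\subset$, any $X\in\widetilde{\cB}_+$ satisfies $X\wedge n\one_{E^c} = 0$ for every $n\in\N$ because $\one_{E^c}\in\widetilde{\cB}^d$, which forces $X = \one_E X$ upon letting $n\to\infty$.

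\emph{Step 2.} Now let $\cB$ be a band of a general ideal $\cX\subset L_c^0$. I pass to $\widetilde{\cB} := \cB^{dd}$, the band generated by $\cB$ in $L_c^0$, with disjoint complements here taken in $L_c^0$. By Step 1, $\widetilde{\cB} = \one_E L_c^0$ for some $E\in\cF$, and a direct pointwise computation gives $\widetilde{\cB}^d = \one_{E^c}L_c^0$. Using the standard identity $\cB^{ddd} = \cB^d$, we get $\cB^d = \widetilde{\cB}^d$; since $\cX$ is an ideal of $L_c^0$, the disjoint complement of $\cB$ \emph{inside} $\cX$ is
\[
\cB^{d_\cX} \;=\; \cB^d\cap\cX \;=\; \one_{E^c}L_c^0\cap\cX \;=\; \one_{E^c}\cX,
\]
where $d_\cX$ denotes disjoint complement taken inside $\cX$.

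\emph{Step 3.} Since $\cB$ is a band in $\cX$ we have $\cB = \cB^{d_\cX d_\cX} = (\one_{E^c}\cX)^{d_\cX}$, and I finish by verifying the identity $(\one_{E^c}\cX)^{d_\cX} = \one_E\cX$. The inclusion $\supset$ is immediate by splitting over $E$ and $E^c$; for the reverse, if $W\in\cX$ satisfies $|W|\wedge|\one_{E^c}V| = 0$ for every $V\in\cX$, then choosing $V = |W|$ yields $\one_{E^c}|W| = 0$, whence $W = \one_E W\in\one_E\cX$.

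The main bookkeeping challenge will be keeping straight the two notions of disjoint complement (inside $\cX$ versus inside $L_c^0$), but this is handled systematically by the elementary identity $\cS^{d_\cX} = \cS^d\cap\cX$ for every $\cS\subset\cX$. The essential structural input is the order completeness of $L_c^0$, which furnishes the projection property and allows the decomposition of $\one_\Omega$ into an indicator, thereby identifying the target set $E$.
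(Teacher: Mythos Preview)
Your argument is correct and complete. The paper's proof follows the same overall strategy---produce the set $E$ by projecting $\one_\Omega$ onto a band in an auxiliary order-complete lattice and then verify $\cB=\one_E\cX$---but the execution differs in two respects. First, the paper passes \emph{down} to $L_c^\infty$ and decomposes $\one_\Omega$ relative to $(\cB\cap L_c^\infty)^{dd}$ there, whereas you pass \emph{up} to $L_c^0$ and decompose relative to $\cB^{dd}$; both auxiliary spaces are order complete by the standing assumption and Lemma~\ref{lem: characterization order completeness robust}, so either works. Second, the paper establishes the two inclusions $\cB\subset\one_E\cX$ and $\one_E\cX\subset\cB$ by direct computation (using the band decomposition $\cX=\cB\oplus\cB^d$ for the latter), while you invoke the identity $\cB=\cB^{d_\cX d_\cX}$ and reduce to computing $(\one_{E^c}\cX)^{d_\cX}$. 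Your route is a bit more abstract and leans on the general fact that $\cA^{dd}$ is the band generated by $\cA$ in any Archimedean lattice; the paper's route is more hands-on but avoids juggling the two notions of disjoint complement. Either buys the result with comparable effort.
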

\begin{proof}
For every $E\in\cF$, since $\cX=\one_E\cX\oplus\one_{E^c}\cX$, it follows that $\cB$ is a (projection) band.
Now, let $\cB$ be a band of $\cX$. Since $L^\infty_c$ is order complete, we have
\[
L^\infty_c = (\cB\cap L^\infty_c)^{dd}\oplus(\cB\cap L^{\infty}_c)^{d},
\]
where the disjoint complements are all taken in $L^\infty_c$. Let $\one_\Omega=U+V$, where $U\in(\cB\cap L^\infty_c)^{dd}_+$ and
$V\in(\cB\cap L^{\infty}_c)^{d}_+$. Setting $E=\{U>0\}$ and using that $U\wedge V=0$, one sees that $U=\one_E$ and $V=\one_{E^c}$. If
$X\in\cB_+$, then $X\wedge\one_\Omega\in\cB\cap L_c^\infty\subset(\cB\cap L^\infty_c)^{dd}$, so that
$0=(X\wedge\one_\Omega)\wedge\one_{E^c}=X\wedge \one_{E^c}$ and therefore $X=X\one_E\in\one_E\cX$. This proves that
$\cB\subset\one_E\cX$. Conversely, take $X\in\one_E\cX$ and assume without loss of generality $X\geq 0$. Since $\cX$ is order complete, we have $\cX=\cB\oplus\cB^d$, where $\cB^d$ is the disjoint complement of $\cB$ in $\cX$. Write $X=Y+Z$, where $Y\in \cB_+$ and $Z\in\cB^d_+$. Note that $Z\wedge\one_\Omega\in L^\infty_c$ and $(Z\wedge\one_\Omega)\wedge\abs{W}=0$ for every $W\in\cB\cap L^\infty_c$. Then, it follows that $Z\wedge\one_\Omega\in(\cB\cap L^{\infty}_c)^{d}$, so that $0=(Z\wedge\one_\Omega)\wedge\one_E=Z\wedge\one_E$. This yields $Z\one_E=0$. Moreover, $0\leq Z\leq X\in \one_E\cX$ implies $Z\one_{E^c}=0$. We thus obtain $Z=0$ and $X=Y\in\cB$. This proves that
$\one_E\cX\subset\cB$, concluding the proof.
\end{proof}

\smallskip

The next decomposition result extends Theorem 3 in Koch-Medina et al.~(2017), which was established for a special class of subspaces of $L^0$ by means of a convenient exhaustion argument. Here, we extend the decomposition to an arbitrary ideal of the robust model space $L^0_c$. In particular, due to the lack of the countable sup property, no exhaustion argument could be applied in our case. The case of an arbitrary ideal of $L^0$ can be simply obtained by setting $\cP=\{\probp\}$.

\begin{theorem}
\label{theo: robust decomposition}
Let $\cX$ be an ideal of $L^0_c$. For every convex order-closed surplus-invariant monotone set $\cA\subset\cX$ we find a measurable
partition $\{E_1,E_2,E_3\}$ of $\Omega$ with
\[
\cA = \one_{E_1}\cX_+\oplus\one_{E_2}(\cX_+-\cD)\oplus\one_{E_3}\cX,
\]
where $\cD$ is a convex order-closed radially-bounded subset of $\one_{E_2}\cX_+$ that is solid in
$\cX_+$ and such that for every $E\in\cF$ with $E\subset E_2$ and $c(E)>0$ there exists $X\in\cD$ for which $\one_EX\neq0$. If $\cA$
is additionally assumed to be a cone, then $c(E_2)=0$.

\smallskip

The event $E_2$ is uniquely determined (up to modifications on $c$-null sets). If $\cX$ contains $\one_\Omega$, then $E_1$ and $E_3$
are also uniquely determined (up to modifications on $c$-null sets).
\end{theorem}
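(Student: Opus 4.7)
The plan is to read off the conclusion from the abstract decomposition in Theorem~\ref{decom} by converting bands into measurable events via the preceding lemma, which identifies every band of an ideal of $L^0_c$ with a cut-off subspace of the form $\one_E\cX$. First I would verify that $\cX$ has the projection property: the standing assumption $L^\infty_c=(\mathrm{ca}_c)^\ast$ makes $L^\infty_c$ order complete, and Lemma~\ref{lem: characterization order completeness robust} extends this to every ideal of $L^0_c$, so in particular $\cX$ is order complete. Theorem~\ref{decom} then furnishes bands $\cB_1,\cB_2,\cB_3$ with $\cX=\cB_1\oplus\cB_2\oplus\cB_3$, together with a convex, order-closed, radially-bounded set $\cD\subset(\cB_2)_+$ that is solid in $\cX_+$ and satisfies $\cB_2=\band(\cD)$.

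Next I would invoke the preceding lemma to write $\cB_i=\one_{E_i}\cX$ for events $E_i\in\cF$. After modifying on $c$-null sets (and on sets outside the support of $\cX$ when needed) these events can be arranged into a measurable partition of $\Omega$, and substituting these identifications gives
\[
\cA = \one_{E_1}\cX_+\oplus\one_{E_2}(\cX_+-\cD)\oplus\one_{E_3}\cX.
\]
The four listed properties of $\cD$ transfer verbatim from Theorem~\ref{decom}.

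The main subtlety is translating the abstract identity $\cB_2=\band(\cD)$ into the pointwise statement that for every $E\subset E_2$ with $c(E)>0$ some $X\in\cD$ satisfies $\one_EX\neq0$. I would argue by contraposition: if every $X\in\cD$ vanishes on such an $E$, then $\cD\subset\one_{E_2\setminus E}\cX$, hence $\band(\cD)\subset\one_{E_2\setminus E}\cX$, which is a proper sub-band of $\one_{E_2}\cX=\cB_2$ whenever $E$ intersects the support of $\cX$ on a set of positive capacity. This contradicts $\cB_2=\band(\cD)$. For the conic case, a short argument using the identity $\rec(\cA)=(\cB_1)_+\oplus(\cB_2)_+\oplus\cB_3$ from Theorem~\ref{decom} together with the radial boundedness of $\cD$ forces $\cD=\{0\}$, whence $\cB_2=\band(\cD)=\{0\}$ and $E_2$ may be taken with $c(E_2)=0$.

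Uniqueness follows from the uniqueness clause in Theorem~\ref{decom} combined with the band-to-event correspondence. The event $E_2$ is pinned down intrinsically (modulo $c$-null sets) as the pointwise support of $\cD$, so its uniqueness requires no extra hypothesis on $\cX$. When $\one_\Omega\in\cX$, applying the band projections onto $\cB_1$ and $\cB_3$ to $\one_\Omega$ returns $\one_{E_1}$ and $\one_{E_3}$, so those events are also determined up to $c$-null sets. I expect the main bookkeeping nuisance to be the careful handling of the support of $\cX$ when $\one_\Omega\notin\cX$, since in that case two different events can define the same band if they differ only outside that support.
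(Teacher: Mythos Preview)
Your proposal is correct and follows essentially the same route as the paper: apply Theorem~\ref{decom} (after noting that order completeness of $L^\infty_c$ propagates to every ideal of $L^0_c$ via Lemma~\ref{lem: characterization order completeness robust}), convert the resulting bands into events using the preceding lemma, and distribute the complement of the support of $\cX$ among $E_1$ and $E_3$. The paper is terser---it passes first to the support event $E$ with $(\cX)^{dd}=\one_E L^0_c$, applies Theorem~\ref{decom} to obtain a partition of $E$, and then throws $E^c$ into either $E_1$ or $E_3$---whereas you spell out more explicitly the translation of $\cB_2=\band(\cD)$ into the pointwise condition on $\cD$ and the handling of the conic and uniqueness clauses; but the underlying argument is the same.
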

\begin{proof}
Since $L^0_c=(\cX)^{dd}\oplus(\cX)^d$, there exists $E\in\cF$ such that $(\cX)^{dd}=\one_E L^0_c$ and $(\cX)^d=\one_{E^c}L^0$. It
follows that $\cX\subset\one_EL^0_c$ and that, for every $F\in\cF$ with $F\subset E$ and $c(F)>0$, we have $\one_F\wedge X>0$ for some
$X\in\cX$. Now, applying Theorem \ref{decom}, we obtain a unique measurable partition $\{E_1',E_2,E'_3\}$ of $E$ such that
\[
\cA = \one_{E'_1}\cX_+\oplus\one_{E_2}(\cX_+-\cD)\oplus\one_{E'_3}\cX
\]
for some $\cD\subset\one_{E_2}\cX_+$ that satisfies the desired properties. Finally, set $E_1=E_1'\cup E^c$ and $E_3=E'_3$ or
$E_1=E_1'$ and $E_3=E'_3\cup E^c$.
\end{proof}

\smallskip

%%%%%%%%%%%%%%%%%%%%%%%%%%%%%%%%%%%%%%%%%%%%

\subsection{Bipolar theorem for solid subsets of the positive cone}

We conclude this section by showing that our general results on solid sets can be employed to derive a simple ``lattice-based'' proof
of the bipolar theorem by Brannath and Schachermayer (1999). First, we highlight the equivalence between the following closedness
notions for solid sets of positive random variables, which are all understood in a sequential sense. This is a direct consequence of
Lemma \ref{lem: amemiya-fremlin} and Theorem \ref{closed-top}.

\begin{proposition}
\label{prop: closedness solid random variables}
Let $\cX$ be an ideal of $L^0$. For a convex set $\cC\subset\cX_+$ that is solid in $\cX_+$ the following are equivalent:
\begin{enumerate}
  \item[(a)] $\cC$ is closed in $\cX$ with respect to dominated almost-sure convergence.
  \item[(b)] $\cC$ is closed in $\cX$ with respect to almost-sure convergence.
  \item[(c)] $\cC$ is closed in $\cX$ with respect to convergence in probability.
\end{enumerate}
If $\cX^\sim_n$ is separating, the above are also equivalent to:
\begin{enumerate}
  \item[(d)] $\cC$ is $\sigma(\cX,\cI)$ closed for every separating ideal $\cI\subset\cX^\sim_n$.
  \item[(e)] $\cC$ is $\sigma(\cX,\cI)$ closed for some separating ideal $\cI\subset\cX^\sim_n$.
\end{enumerate}
\end{proposition}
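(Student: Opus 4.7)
The plan is to read the three notions in (a)--(c) as instances of order, unbounded-order, and topological closedness on $\cX$, and then invoke the general lattice machinery of Section~2 directly.

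First I would exploit the identifications recalled at the start of the applications section: in an ideal $\cX$ of $L^0$, a sequence order-converges to $X$ iff it converges to $X$ almost surely with a common dominator in $\cX$, and unbounded-order-converges iff it converges almost surely. The countable sup property of $L^0$, inherited by $\cX$, also guarantees that sequential order (resp.\ uo) closedness coincides with order (resp.\ uo) closedness defined via nets, since from any order/uo-convergent net in $\cC$ one may extract a sequence realizing the same limit. Thus (a) and (b) are nothing but the order closedness and the uo closedness of $\cC$, respectively.

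Next I would apply the Remark immediately following Theorem~\ref{closed-top}, which extends that theorem to every convex set $\cC\subset\cX_+$ that is solid in $\cX_+$. This yields at once the equivalence (a)$\Leftrightarrow$(b), and, under the separation hypothesis on $\cX_n^\sim$, also the equivalences (a)$\Leftrightarrow$(d)$\Leftrightarrow$(e).

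It remains to handle (c). Let $\tau$ denote the topology of convergence in probability restricted to $\cX$: it is Hausdorff, locally solid, and order continuous (the last point because, by the countable sup property, every decreasing net $X_\alpha\downarrow 0$ admits a decreasing sequence $X_{\alpha_n}\downarrow 0$, which is then $\tau$-null, and monotonicity propagates this to the full net). Since $\cC$ is solid in $\cX_+$ but not in $\cX$, to invoke Lemma~\ref{lem: amemiya-fremlin} I would pass to the set
\[
\cS = \{X\in\cX \,; \ \abs{X}\in\cC\},
\]
which is solid in $\cX$. Because the lattice operations are both order- and $\tau$-continuous, and because $\cX_+$ is both order- and $\tau$-closed, the identities $\cS=\{X\in\cX\,; \ \abs{X}\in\cC\}$ and $\cC=\cS\cap\cX_+$ give that $\cS$ is order closed iff $\cC$ is and $\cS$ is $\tau$-closed iff $\cC$ is. Lemma~\ref{lem: amemiya-fremlin} applied to $\cS$ then yields (a)$\Leftrightarrow$(c), completing the proof.

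The main obstacle is essentially bookkeeping: confirming the sequence-vs-net translations under the countable sup property, and verifying that convergence in probability on $\cX$ really is a locally solid, order-continuous, Hausdorff topology so that Lemma~\ref{lem: amemiya-fremlin} applies. All the substantive work has already been carried out in Theorem~\ref{closed-top} and Lemma~\ref{lem: amemiya-fremlin}.
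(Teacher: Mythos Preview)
Your proposal is correct and follows essentially the same route as the paper, which simply declares the proposition a direct consequence of Lemma~\ref{lem: amemiya-fremlin} and Theorem~\ref{closed-top}. You have supplied precisely the missing details: the translation between (a), (b) and order/uo closedness via the countable sup property, the invocation of the remark extending Theorem~\ref{closed-top} to convex solid subsets of $\cX_+$, and the passage to the solid hull $\cS=\{X\in\cX:\abs{X}\in\cC\}$ so that Lemma~\ref{lem: amemiya-fremlin} applies with $\tau$ the topology of convergence in probability.
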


\smallskip

In the next statement we adopt the following notation. For a set $\cC\subset L^0_+$ we define
\[
\cC^\circ = \bigcap_{X\in\cC}\{Z\in L^\infty_+ \,; \ \E[ZX]\leq1\} \ \ \ \mbox{and} \
\ \ \cC^\circ_{+} = \cC^\circ\cap\{Z\in L^\infty \,; \ \probp(Z>0)=1\}.
\]

\begin{theorem}[Brannath and Schachermayer (1999)]
\label{theo: bipolar}
Assume that $\cC\subset L^0_+$ is convex, closed with respect to almost-sure convergence and solid in $L^0_+$. Then, we have
\[
\cC = \bigcap_{Z\in\cC^\circ}\{X\in L^0_+ \,; \ \E[ZX]\leq1\}.
\]
If, in addition, $\cC$ is radially bounded, then $\cC^\circ_{+}\neq\emptyset$ and
\[
\cC = \bigcap_{Z\in\cC^\circ_{+}}\{X\in L^0_+ \,; \ \E[ZX]\leq1\}.
\]
\end{theorem}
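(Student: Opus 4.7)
The inclusion of $\cC$ in its bipolar is immediate from the definition of $\cC^\circ$, so the work is to establish the reverse direction: for every $X\in L^0_+\setminus\cC$, produce $Z\in\cC^\circ$ with $\E[ZX]>1$. The first step in the plan is a truncation to reduce to bounded positions. Setting $X_n=X\wedge n\one_\Omega$, one has $X_n\uparrow X$ almost surely, so by the almost-sure closedness of $\cC$ there is some $n$ with $X_n\notin\cC$.

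The second step is to work inside the ideal $L^1\subset L^0$. The plan is to consider $\cD:=\cC\cap L^1_+$, which is convex and solid in $L^1_+$ because $\cC$ is solid in $L^0_+$. Order convergence in $L^1$ amounts to dominated almost-sure convergence, so the a.s.-closedness of $\cC$ combined with dominated convergence directly yields that $\cD$ is order-closed in $L^1$. Since $(L^1)^\sim_n=L^\infty$ is separating, Proposition~\ref{prop: closedness solid random variables} applies to give that $\cD$ is $\sigma(L^1,L^\infty)$-closed. Hahn--Banach separation then produces $Z\in L^\infty$ with $\sup_{Y\in\cD}\E[ZY]<\E[ZX_n]$. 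Invoking the Riesz--Kantorovich formula through the solidity of $\cD$ in $L^1_+$ (exactly as in the proof of Lemma~\ref{lem: strictly positive supporting functional}) one may replace $Z$ by $Z^+$ and assume $Z\geq 0$. To transfer the bound from $\cD$ to the full $\cC$: for any $Y\in\cC$, solidity of $\cC$ gives $Y\wedge m\in\cD$ for each $m\in\N$, and the monotone convergence theorem yields
\[
\E[ZY]=\sup_{m\in\N}\E[Z(Y\wedge m)]\leq \sup_{Y'\in\cD}\E[ZY']<\E[ZX_n]\leq\E[ZX].
\]
A suitable rescaling of $Z$ then delivers the element of $\cC^\circ$ that separates $X$ from $\cC$, completing the first identity.

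For the radially-bounded case, the plan is to apply Lemma~\ref{lem: strictly positive supporting functional} with $\cX=L^1$: the dual $(L^1)^\sim_n=L^\infty$ has the countable sup property and the strictly-positive element $\one_\Omega$, while $\cD$ inherits convexity, order-closedness, solidity in $L^1_+$, and radial boundedness from $\cC$. The lemma produces a strictly-positive $Z_1\in L^\infty$ with $\sup_{Y\in\cD}\E[Z_1Y]<\infty$, and the same truncation/monotone-convergence argument as above extends this bound to all of $\cC$; after rescaling, $Z_1\in\cC^\circ_+$, so this set is nonempty. For the finer bipolar identity, given $X\notin\cC$, take $Z_0\in\cC^\circ$ separating $X$ from $\cC$ from the first part, and form the convex combinations $Z_\varepsilon:=(Z_0+\varepsilon Z_1)/(1+\varepsilon)\in\cC^\circ$ for $\varepsilon>0$. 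Each $Z_\varepsilon$ is strictly positive (bounded below by $\varepsilon Z_1/(1+\varepsilon)>0$ a.s.) and, by continuity in $\varepsilon$, still satisfies $\E[Z_\varepsilon X]>1$ for sufficiently small $\varepsilon$, giving the required element of $\cC^\circ_+$.

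The main obstacle I anticipate is the clean transfer of the separating (respectively bounding) functional from $\cD=\cC\cap L^1_+$ to all of $\cC$: direct Hahn--Banach separation inside $L^0$ is unavailable because its order-continuous dual is too small, so everything hinges on combining solidity of $\cC$ with the truncation trick $Y\wedge m\uparrow Y$ and monotone convergence. This is exactly the step where the solidity hypothesis of the bipolar theorem is used in a nontrivial way and where the lattice-based viewpoint (via Proposition~\ref{prop: closedness solid random variables} and Lemma~\ref{lem: strictly positive supporting functional}) replaces the classical exhaustion argument.
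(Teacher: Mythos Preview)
Your argument is correct and follows the same overall architecture as the paper's proof: reduce to a subspace where the order-continuous dual is separating, apply Proposition~\ref{prop: closedness solid random variables} to obtain weak closedness, separate, and then lift back to $L^0$ via truncation and monotone convergence. The second part---invoking Lemma~\ref{lem: strictly positive supporting functional} on $\cC\cap L^1$ and then perturbing a separating functional by a strictly-positive one in $\cC^\circ$---is exactly what the paper does (the paper just says ``since $\cC^\circ$ is convex, it is easy to see\ldots'').

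The one genuine difference is in the first identity: you work in $L^1$, separate $X_n$ from $\cD=\cC\cap L^1$ via Hahn--Banach for $\sigma(L^1,L^\infty)$, and then need the Riesz--Kantorovich step to pass to $Z^+$ and a rescaling to land in $\cC^\circ$. The paper instead works in $L^\infty$: it shows $\cC\cap L^\infty$ is $\sigma(L^\infty,L^\infty)$-closed and applies the \emph{standard bipolar theorem} there directly, so that $\E[Z(X\wedge n)]\le 1$ for all $Z\in\cC^\circ$ immediately forces $X\wedge n\in\cC\cap L^\infty$. This packages the positivity and normalization issues into the bipolar theorem and avoids the explicit Riesz--Kantorovich/rescaling manipulations; your route is a touch more hands-on but equally valid. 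One cosmetic point: your ``continuity in $\varepsilon$'' phrasing for $\E[Z_\varepsilon X]>1$ is loose when $\E[Z_0X]=\infty$, but the conclusion holds in that case trivially, and for finite $\E[Z_0X]$ the inequality $\E[Z_\varepsilon X]\ge \E[Z_0X]/(1+\varepsilon)$ does the job.
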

\begin{proof}
The inclusion ``$\subset$'' in the first identity is clear. To establish the converse inclusion, take
$X\in L^0_+$ and assume that $\E[ZX]\leq1$ for all $Z\in\cC^\circ$.
In particular,
\begin{equation}
\label{eq: bipolar}
\E[Z(X\wedge n)] \leq \E[ZX] \leq 1
\end{equation}
for all $Z\in\cC^\circ$ and $n\in\N$. Note that $\cC\cap L^\infty$
is convex and order closed in $L^\infty$. Moreover, it is also solid in $L^\infty_+$.
Since $L^\infty$ is a separating ideal of $L^1=(L^\infty)^\sim_n$, it follows
from Proposition~\ref{prop: closedness solid random variables} that $\cC\cap
L^\infty$ is $\sigma(L^\infty,L^\infty)$ closed. By applying the standard
bipolar theorem in $L^\infty$ with respect to the locally-convex topology
$\sigma(L^\infty,L^\infty)$, we infer from~\eqref{eq: bipolar} that $X\wedge
n\in\cC\cap L^\infty$ for all $n\in\N$ and thus $X\in\cC$ by order closedness.
This establishes the inclusion ``$\supset$'' and completes the proof of the
first equality.

\smallskip

Now, assume that $\cC$ is radially bounded. Since $\cC^\circ$ is convex, it is easy to see that the second identity will immediately
follow once we show that $\cC^\circ$ contains a strictly-positive random variable. To prove this, note that $\cC\cap L^1$ is convex,
order closed in $L^1$,
radially bounded and solid in $L^1_+$. Hence, we are in a position to apply Lemma~\ref{lem: strictly positive supporting functional}
to ensure the existence of a strictly positive $Z\in L^\infty$ (recall that $(L^1)^\sim_n=L^\infty$) satisfying
\[
\sup\{\E[ZX] \,; \ X\in\cC\cap L^1\} < \infty.
\]
Clearly, up to a simple normalization, we can always assume that
\[
\sup\{\E[ZX] \,; \ X\in\cC\cap L^1\} \leq 1.
\]
Take now an arbitrary $X\in\cC$. Since $\E[Z(X\wedge n)]\leq 1$ for all
$n\in\N$, we infer from the Fatou Lemma that $\E[ZX]\leq 1$ holds as well. This
shows that $Z\in\cC^\circ$ and concludes the proof.
\end{proof}

\smallskip

\begin{remark}
The original bipolar theorem in Brannath and Schachermayer (1999) is stated for convex subsets of $L^0_+$ that are solid in $L^0_+$
and that are closed and bounded in probability. It follows from Theorem 3.4 in Koch-Medina et al.~(2018) that, under convexity and
solidity, boundedness in probability is equivalent to radial boundedness.
\end{remark}

\smallskip

We conclude by showing that our general results allow to deduce the following ``robust'' version of the bipolar theorem. Here, for a set $\cC\subset (L^0_c)_+$ we define
\[
\cC^\circ = \bigcap_{X\in\cC}\{\mu\in(\mathrm{ca}_c^\infty)_+ \,; \ \E_\mu[X]\leq1\}.
\]

\begin{theorem}
\label{theo: robust bipolar}
Assume that $\cC\subset (L^0_c)_+$ is convex, order closed and solid in $(L^0_c)_+$. Then, we have
\[
\cC = \bigcap_{\mu\in\cC^\circ}\{X\in (L^0_c)_+ \,; \ \E_\mu[X]\leq1\}.
\]
\end{theorem}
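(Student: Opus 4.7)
The inclusion ``$\subset$'' is immediate from the definition of $\cC^\circ$. For the reverse inclusion, fix $X\in(L^0_c)_+$ satisfying $\E_\mu[X]\leq 1$ for every $\mu\in\cC^\circ$. The plan is to establish $X_n := X\wedge n\one_\Omega\in\cC$ for every $n\in\N$ via a bipolar argument carried out in the ambient space $L^\infty_c$, and then invoke the order-closedness of $\cC$ together with the fact that $X_n\uparrow X$ in $L^0_c$.

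The set $\cC\cap L^\infty_c$ is convex, solid in $(L^\infty_c)_+$ (because $L^\infty_c$ is an ideal of $L^0_c$, so solidity of $\cC$ descends), and order-closed in $L^\infty_c$ (order convergence in an ideal coincides with order convergence in the ambient lattice, so order-closedness of $\cC$ in $L^0_c$ transfers to $\cC\cap L^\infty_c$ in $L^\infty_c$). Applying the solid-set variant of Theorem~\ref{closed-top} (the remark following it) with $\cX=L^\infty_c$ and the separating ideal $\cI=\mathrm{ca}_c^\infty\subset(L^\infty_c)^\sim_n=\mathrm{ca}_c$, we conclude that $\cC\cap L^\infty_c$ is $\sigma(L^\infty_c,\mathrm{ca}_c^\infty)$-closed. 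The classical bipolar theorem in the dual pair $(L^\infty_c,\mathrm{ca}_c^\infty)$ then yields
\[
\cC\cap L^\infty_c=(\cC\cap L^\infty_c)^{\circ\circ},
\]
where the polar is the \emph{unrestricted} polar $\{\mu\in\mathrm{ca}_c^\infty \,; \ \E_\mu[Y]\leq 1\text{ for all }Y\in\cC\cap L^\infty_c\}$.

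The main obstacle is to connect this unrestricted polar with the polar $\cC^\circ$ appearing in the theorem, which is taken over $(\mathrm{ca}_c^\infty)_+$ and quantified over all of $\cC$. For every $\mu$ in the unrestricted polar, the Riesz--Kantorovich formula gives $\mu^+(Y)=\sup\{\mu(Z) \,; \ 0\leq Z\leq Y\}$ for each $Y\in\cC\cap L^\infty_c$; by solidity of $\cC$, every such $Z$ lies in $\cC\cap L^\infty_c$, hence $\mu^+(Y)\leq 1$. To extend this bound from $\cC\cap L^\infty_c$ to all of $\cC$, one truncates an arbitrary $X'\in\cC$ as $X'\wedge n\one_\Omega\in\cC\cap L^\infty_c$ (using solidity again) and invokes monotone convergence for the countably-additive positive measure $\mu^+\in\mathrm{ca}_c$; this yields $\mu^+\in\cC^\circ$. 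The hypothesis on $X$ then gives $\E_\mu[X_n]\leq\E_{\mu^+}[X_n]\leq\E_{\mu^+}[X]\leq 1$ for every $n$ and every $\mu$ in the unrestricted polar, so $X_n\in(\cC\cap L^\infty_c)^{\circ\circ}=\cC\cap L^\infty_c\subset\cC$. Since $X_n\uparrow X$ in $L^0_c$, order-closedness of $\cC$ closes the argument.
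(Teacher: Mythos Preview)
Your argument is correct and follows the same route as the paper: reduce to $L^\infty_c$ via truncation, apply the solid-set variant of Theorem~\ref{closed-top} to obtain $\sigma(L^\infty_c,\mathrm{ca}_c^\infty)$-closedness of $\cC\cap L^\infty_c$, invoke the classical bipolar theorem, and pass back to $L^0_c$ by order closedness. Where the paper simply writes ``by applying the standard bipolar theorem \ldots\ we infer from~\eqref{eq: robust bipolar} that $X\wedge n\in\cC\cap L^\infty_c$'', you make explicit the bridge between the unrestricted polar (which the bipolar theorem actually delivers) and the positive-cone polar $\cC^\circ$ appearing in the statement: the Riesz--Kantorovich formula combined with solidity of $\cC$ shows that $\mu^+$ lies in $\cC^\circ$ whenever $\mu$ lies in the unrestricted polar, and this is precisely what is needed to push the hypothesis on $X$ through. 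This step is genuinely required and the paper leaves it to the reader, so your version is the more complete one.

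One minor remark: the claim that ``order convergence in an ideal coincides with order convergence in the ambient lattice'' is stronger than what you use; you only need that order convergence in the ideal $L^\infty_c$ implies order convergence in $L^0_c$, which holds because infima are preserved when passing from an ideal to the ambient lattice.
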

\begin{proof}
The inclusion ``$\subset$'' is clear. To establish the converse inclusion, take $X\in (L^0_c)_+$ and assume that $\E_\mu[X]\leq1$ for
all $\mu\in\cC^\circ$. In particular,
\begin{equation}
\label{eq: robust bipolar}
\E_\mu[X\wedge n] \leq \E_\mu[X] \leq 1
\end{equation}
for all $\mu\in\cC^\circ$ and $n\in\N$. Note that $\cC\cap L^\infty_c$ is convex and order closed in $L^\infty_c$. Moreover, it is
also solid in $(L^\infty_c)_+$. Since $\mathrm{ca}_c^\infty$ is a separating ideal of $(L^\infty_c)^\sim_n$, it follows from Theorem
\ref{closed-top} that $\cC\cap L^\infty_c$ is $\sigma(L^\infty_c,\mathrm{ca}_c^\infty)$ closed. By applying the standard bipolar
theorem in $L^\infty_c$ with respect to the locally-convex topology $\sigma(L^\infty_c,\mathrm{ca}_c^\infty)$, we infer
from~\eqref{eq: robust bipolar} that $X\wedge n\in\cC\cap L^\infty_c$ for all $n\in\N$ and thus $X\in\cC$ by order closedness. This
establishes the inclusion ``$\supset$'' and completes the proof.
\end{proof}

\smallskip

%%%%%%%%%%%%%%%%%%%%%%%%%%%%%%%%%%%%%%%%%%%%%%

\appendix

\section{Vector Lattices}

In this appendix we recall a variety of notions from the theory of vector
lattices that are freely used in the paper. We refer to the monograph by
Aliprantis and Burkinshaw (2003) for a comprehensive treatment of vector
lattices with a view towards application in economics (and finance).

\smallskip

A real vector space $\cX$ is said to be a {\em vector lattice}, also called a
{\em Riesz space}, if it is equipped with a lattice order $\geq$ such that the
set of {\em positive elements}
\[
\cX_+ = \{X\in\cX \,; \ X\geq0\}
\]
is a convex cone satisfying $\cX_+\cap(-\cX_+)=\{0\}$. For any element $X\in\cX$ we use the standard notation
\[
X^+=X\vee0, \ \ \ X^-=-(X\wedge0), \ \ \ \abs{X}=X\vee(-X).
\]
Note that we have $X=X^+-X^-$ and $\abs{X}=X^++X^-$.
As in Aliprantis and Burkinshaw (2003) or any other standard reference on lattice theory, we make the minor assumption that all the
vector lattices considered are {\em Archimedean}, so that $nX\leq Y$ for all $n\in\N$ implies $X=0$ for every $X,Y\in\cX_+$.

\smallskip

We say that a vector lattice $\cX$ is {\em
order complete} if every subset of $\cX$ that admits an upper bound also admits
a supremum. Every vector lattice $\cX$ admits an {\em order completion} $\cX^\delta$, which is an order-complete vector lattice itself
and contains $\cX$ as an {\em order-dense} sublattice, i.e.~for every $Z\in \cX^\delta_+\backslash\{0\}$, there exists $X\in\cX$ such
that $0<X<Z$. In fact, for every $Z\in\cX^\delta_+\backslash\{0\}$, the supremum of the set $\{X\in \cX \,; \ 0<X<Z\}$ in $\cX^\delta$
is $Z$.
We say that $\cX$ has the {\em countable sup property} if every
subset of $\cX$ that has a supremum admits a countable subset with the same
supremum.

\smallskip

For any subset $\cA\subset\cX$ we use the standard notation
\[
\cA_+=\cA\cap\cX_+ \ \ \ \mbox{and} \ \ \ \cA_-=\cA\cap(-\cX_+).
\]
A subset $\cA\subset\cX$ is said to be {\em solid} whenever
\[
X\in\cA, \ \abs{Y}\leq\abs{X} \ \implies \ Y\in\cA.
\]
If $\cA\subset\cX_+$, we say that $\cA$ is {\em solid in $\cX_+$} whenever
\[
X\in\cA, \ 0\leq Y\leq X \ \implies \ Y\in\cA.
\]
A linear subspace of $\cX$ that is solid is called an {\em ideal}. Note that
every ideal is closed with respect to the lattice operations. For any $X\in\cX$,
the smallest ideal containing $X$ is called the {\em principal ideal of $X$} and
is denoted by
\[
\cI_X = \{Y\in\cX \,; \ \exists \lambda>0 \,:\, \abs{Y}\leq\lambda\abs{X}\}.
\]
Note that every ideal of an order-complete vector lattice is itself order
complete. The countable sup property, if satisfied, also passes to ideals.

\smallskip

For a net $(X_\alpha)\subset\cX$ and an element $X\in\cX$ we write $X_\alpha\uparrow X$ whenever $(X_\alpha)$ is increasing and $\sup_\alpha X_\alpha=X$ in $\cX$. Similarly, we write $X_\alpha\downarrow X$ whenever $(X_\alpha)$ is decreasing and $\inf_\alpha X_\alpha=X$ in $\cX$. We say that a net $(X_\alpha)\subset\cX$ {\em converges in order} to an element $X\in\cX$, written $X_\alpha\xrightarrow{o}X$, if there exists a net
$(Y_\beta)\subset\cX$, possibly over a different index set, such that $Y_\beta\downarrow0$ and for every $\beta$ there exists $\alpha_0$ with
\[
Y_\beta\geq\abs{X_\alpha-X}
\]
for every $\alpha\geq\alpha_0$. Our definition of order convergence is slightly different from that in
Aliprantis and Burkinshaw (2003), but is more common in the current lattice theory. Note that, if $\cX$ is order complete and there
exists $Z>0$ such that $-Z\leq X_\alpha\leq Z$ for every $\alpha$, then $X_\alpha\xrightarrow{o}X$ is equivalent to
\[
X = \sup_\alpha\inf_{\beta\geq\alpha}X_\beta = \inf_\alpha\sup_{\beta\geq\alpha}X_\beta.
\]
Moreover, we say that $(X_\alpha)$ {\em converges in
unbounded order} to $X$, written $X_\alpha\xrightarrow{uo}X$, whenever
\[
\abs{X_\alpha-X}\wedge Y\xrightarrow{o}0
\]
for every $Y\in\cX_+$. Clearly, order convergence implies unbounded-order
convergence but the two convergence notions are not equivalent. The property of
unbounded-order convergence has been thoroughly studied in Gao, Troitsky and
Xanthos (2017). If $\cX$ is an ideal of $L^0(\Omega,\cF,\probp)$, then order convergence of sequences corresponds to dominated
almost-sure convergence and unbounded-order convergence to almost-sure convergence.

\smallskip

A subset $\cA\subset\cX$ is said to be {\em order closed} whenever
\[
(X_\alpha)\subset\cA, \ X_\alpha\xrightarrow{o}X \ \implies \ X\in\cA.
\]
We say that $\cA$ is {\em unbounded-order (uo) closed} if the above property holds by
replacing order convergence with unbounded-order convergence.

\smallskip

An ideal of $\cX$ that is order closed is called a {\em band}. For any
$X\in\cX$, the smallest band containing $X$ is the so-called {\em principal band
of $X$} denoted by
\[
\cB_X = \{Y\in\cX \,; \ \abs{Y}\wedge n\abs{X}\xrightarrow{o}\abs{Y}\}.
\]
The smallest band containing $\cA$ will be denoted by $\band(\cA)$. The {\em
disjoint complement} of $\cA$ is the set
\[
\cA^d = \{X\in\cX \,; \ \abs{X}\wedge\abs{Y}=0, \ \forall Y\in\cA\},
\]
which is easily seen to be a band. A band $\cB$ is said to be a {\em projection
band} provided that $\cX=\cB\oplus\cB^d$. In this case, every $X\in\cX$ can be
uniquely written as $X=X_{\cB}+X_{\cB^d}$ for some $X_{\cB}\in\cB$ and
$X_{\cB^d}\in\cB^d$ and the map $P:\cX\to\cB$ defined by
\[
P(X)=X_{\cB}, \ \ \ X\in\cX,
\]
is called the {\em band projection} of $\cB$. Note that $0\leq P(X)\leq X$ for every $X\in \cX_+$. If every band is a projection band,
we say $\cX$ has the {\em projection property}. Every order-complete vector lattice has the projection property.

\smallskip

A functional $\varphi:\cX\to\R$ is said to be {\em positive} if
$\varphi(X)\geq0$ for every $X\in\cX_+$ and {\em strictly positive} if it is
positive and $\varphi(X)>0$ for every $X\in\cX_+\backslash\{0\}$. We say that
$\varphi$ is {\em order continuous} if it satisfies
\[
X_\alpha\xrightarrow{o}X \ \implies \ \varphi(X_\alpha)\to\varphi(X).
\]
The collection of all linear order-continuous functionals is called the {\em
order-continuous dual of $\cX$} and is denoted by $\cX_n^\sim$. The set
$\cX_n^\sim$ is naturally equipped with a vector lattice structure and is order complete. If $\cX$ is an ideal of
$L^0(\Omega,\cF,\probp)$, the order-continuous dual $\cX_n^\sim$ always has the countable sup property.

\smallskip

A linear topology on $\cX$ is said to be {\em order continuous} if order convergence implies topological convergence and {\em locally solid} if it admits a neighborhood base at zero consisting of solid sets. For any ideal $\cI\subset\cX^\sim_n$ we denote by $\sigma(\cX,\cI)$ the locally-convex topology on $\cX$ generated by the family of seminorms $\{\abs{\varphi(\cdot)} \,; \ \varphi\in\cI\}$. Moreover, we denote by
$\abs{\sigma}(\cX,\cI)$ the locally-convex and locally-solid topology on $\cX$ generated by the family of lattice seminorms $\{\abs{\varphi}(\abs{\cdot}) \,; \ \varphi\in\cI\}$. The topological dual of $\cX$ under any of the above topologies is $\cI$.

%%%%%%%%%%%%%%%%%%%%%%%%%%%%%%%%%%%%%%%%%%%%%%%%%%%%%%%%%

\end{document}